\newcommand\comment[1]{}
\theoremstyle{remark} \newtheorem{remark}{\textup{\textbf{Remark}}}}
\newtheorem{proposition}{Proposition}
\def\mi{\mathbbm{i}}
\def\me{\mathbbm{e}}
\begin{document}
\bibliographystyle{unsrt}
%\begin{center}
%{\LARGE\bf An advective-spectral-mixed approach for the many-body Wigner problem }
% \end{center}{\vskip 0.5cm}

\title{An advective-spectral-mixed method for time-dependent many-body Wigner simulations}
\author{Yunfeng Xiong\footnotemark[2],
%\and Sihong Shao\footnotemark[2] $^,$\footnotemark[1],
\and Zhenzhu Chen\footnotemark[2],
\and Sihong Shao\footnotemark[2] $^,$\footnotemark[1]}
\renewcommand{\thefootnote}{\fnsymbol{footnote}}
\footnotetext[2]{LMAM and School of Mathematical Sciences, Peking University, Beijing 100871, China.}
\footnotetext[1]{To
whom correspondence should be addressed. Email:
\texttt{sihong@math.pku.edu.cn}}
\date{\today}
\maketitle

% .......... the text...................
\begin{abstract}
As a phase space language for quantum mechanics,
the Wigner function approach bears a close analogy to classical mechanics and has been drawing growing attention, especially in simulating quantum many-body systems.
However, deterministic numerical solutions have been almost exclusively confined to one-dimensional one-body systems
and few results are reported even for one-dimensional two-body problems. This paper serves as the first attempt to solve the time-dependent many-body Wigner equation through a grid-based advective-spectral-mixed method. The main feature of the method is to resolve the linear advection in $(\bm{x},t)$-space by an explicit three-step characteristic scheme coupled with the piecewise cubic spline interpolation, while the Chebyshev spectral element method in $\bm k$-space is adopted for accurate calculation of the nonlocal pseudo-differential term.
Not only the time step of the resulting method is not restricted by the usual CFL condition and thus a large time step is allowed, but also the mass conservation can be maintained. In particular,
for the system consisting of identical particles,
the advective-spectral-mixed method can also rigorously preserve physical symmetry relations. The performance is validated through several typical numerical experiments, like the Gaussian barrier scattering, electron-electron interaction and a Helium-like system,
where the third-order accuracy against both grid spacing and time stepping is observed.

\vspace*{4mm}

\noindent {\bf Keywords:} 
Many-body Wigner equation; 
semi-Lagrangian method; 
Pauli exclusion principle;
Chebyshev spectral method;  
Adams multistep scheme; 
quantum transport
\end{abstract}

%{\no\bf Mathematics Subject Classifications:}\hspace{0.2cm} 65F10,
%65W05

\section{Introduction}

%\cite{FurtmaierSucciMendoza2015}

Ever since its invention in 1932, the Wigner
function (or (quasi) distribution) has provided a convenient way to render quantum mechanics in phase space\cite{Wigner1932}. It allows one to express macroscopically measurable quantities, such as currents and heat fluxes, in statistical forms as usually does in classical statistical mechanics\cite{tatarskiui1983,JacoboniBordone2004,DiasPrata2004}, thereby facilitating its applications in nanoelectronics\cite{bk:MarkowichRinghoferSchmeiser1990,th:Biegel1997}, non-equilibrium statistical mechanics\cite{bk:Balescu1975} and quantum optics\cite{bk:Schleich2011}.
Actually, a whole branch of experimental physics exists, known as quantum tomography, which purpose is reconstructing the Wigner function from measurements\cite{bk:Leonhardt1997,LeibfriedPfauMonroe1998}.
The most appealing feature of the Wigner equation is that, distinct from the Schr\"{o}dinger wavefunction approach, it shares many analogies to the classical mechanism and simply reduces to the classical counterpart when the reduced Planck constant vanishes\cite{Zurek1991}. Besides, the intriguing mathematical structure of the Wigner equation has also been employed in some advanced topics, such as the deformation quantization\cite{Zachos2002}.

Despite its great advantages, solving the Wigner equation has
presented one of the most mathematical challenging problems, since
the partial integro-differential equation is defined over $2 \times
d \times N$-dimensional phase space, where $d$ is the dimension of
space and $N$ is the number of involved particles, making it even
more complicated than the many-body Schr\"{o}dinger equation. For
the one-dimensional one-body situation, the first try conducted
by Frensely in simulating the resonant tunneling diode uses the
first-order upwind finite difference method (FDM)
\cite{Frensley1987,Frensley1990} and after that several second-order
FDMs were introduced \cite{JensenBuot1991,th:Biegel1997}. Later, a
plane wave approximation of the Wigner function\cite{Ringhofer1990}
and an operator splitting
scheme\cite{SuhFeixBertrand1991,ArnoldRinghofer1996} were proposed.
Recently, several high-order methods have been well designed to
capture accurately strong quantum effects, such as a cell average
spectral element method (SEM)\cite{ShaoLuCai2011}, moment methods\cite{LiLuWangYao2014,FurtmaierSucciMendoza2015}, a
WENO-solver\cite{DordaSchurrer2015}, etc. Among all those solvers,
the cell average SEM has proven to be very reliable as it presents a
simple but natural (precise) way to discretize the pseudo-differential
term and avoids tremendously the artificial dissipation for the advection
process\cite{ShaoLuCai2011}. It has to be noted that, to our
knowledge, all aforementioned deterministic methods have not yet
been extended to many-body Wigner
simulations\cite{NedjalkovSchwahaSelberherr2013}, even for the
one-dimensional two-body case.

Very recently, a Monte Carlo method (MCM) based on signed particles
for many-body Winger simulations has attracted a lot of attention
due to its simplicity as well as the satisfactory scaling on parallel
machines\cite{NedjalkovKosinaSelberherrRinghoferFerry2004,
NedjalkovSchwahaSelberherr2013,SellierNedjalkovDimov2014}. It has
enabled a direct simulation of many-body Wigner problems, such as
the strongly correlated indistinguishable
fermions\cite{SellierDimov2015}, and its accuracy for the
one-dimensional one-body problem has been validated by comparing with the cell average SEM\cite{ShaoSellier2015}.
Despite the promising progress, it has also been mentioned that particle-based
stochastic methods might not be very suitable for the problems where
phase space quantities vary over several orders of
magnitude\cite{CervenkaEllinghausNedjalkov2015}. Moreover, the
highly oscillating structure of the Wigner function due to the spatial coherence\cite{Zurek1991,LeibfriedPfauMonroe1998} makes it a
challenging task for both deterministic and stochastic methods to
capture precisely the quantum interference and correlation. To give a better description of the quantum phenomena in a wider dynamic range
and, at least, to provide a reliable reference solution for stochastic
methods, high-order accurate
deterministic methods for many-body Wigner simulations 
are highly needed.

This work serves as the first attempt for accurate
deterministic numerical solutions of the many-body  Wigner transport
equation, instead of resorting to the Wigner paths or many-body
Schr\"{o}dinger equations \cite{CancellieriBordoneJacoboni2007}. To
resolve the nonlocal pseudo-differential term, we adopt the
Chebyshev spectral element method\cite{ShaoLuCai2011} in $\bm
k$-space for it accurately resolves the oscillations of the Wigner
function, and avoids the artificial periodization at the same time. Another major obstacle lies in the discretization of the
advection term, because time steps employed by explicit Runge-Kutta
integrators are strictly limited by the Courant-Friedrichs-Lewy
(CFL) condition, thereby hampering the efficiency. The  first-order
upwind FDM, although alleviating this restriction, fails to provide
satisfactory results due to the numerical
dissipation\cite{th:Biegel1997,ShaoLuCai2011}. In order to overcome this obstacle,
a semi-Lagrangian-type characteristic method, which tracks the exact
Lagrangian advection on the spatial space grid in $\bm x$-space,
will be introduced in this work. The resulting advective-spectral-mixed method
relaxes the CFL restriction on the time step and ameliorates the
numerical dissipation significantly. Moreover, it maintains the mass
conservation and shows the third-order accuracy against both grid
spacing and time stepping when an explicit Adams three-step
method\cite{bk:HairerNorsettWanner1993} coupled with the piecewise
cubic spline interpolation\cite{bk:Boor2001} is implemented.

The proposed advective-spectral-mixed method allows us to study the quantum dynamics of two identical particles in phase space. We will illustrate how the physical symmetry relation is naturally embedded in the Wigner equation and preserved by the advective-spectral-mixed method. In fact, the effect of the Pauli exclusion principle and the uncertainty principle can be shown directly in phase space by simulating the electron-electron scattering and a Helium-like system.

The rest of the paper is organized as follows. In Section \ref{sec:theory}, we briefly review the many-body Wigner formalism with a discussion on the physical symmetry relation for a quantum system composed of identical particles. In Section \ref{sec:method}, the advective-spectral-mixed method is presented, while
related numerical analysis is given in Section \ref{sec:analysis}. Section \ref{sec:result} conducts several typical numerical experiments to verify the accuracy and convergence of the proposed method, and also shows the quantum dynamics of two electrons under different potentials in phase space. Concluding remarks and
further discussions are delineated in Section \ref{sec:conclusion}.

\section{The many-body Wigner formalism}
\label{sec:theory}

In this section, we briefly review the Wigner representation of quantum mechanics, and study
physical symmetry relations for a system composed of identical particles.
For numerical purpose, the truncated Wigner equation is introduced by exploiting the decay of the Wigner function for large wavenumbers,
and then a sufficient and necessary condition for such truncated Wigner equation
to maintain the mass conservation is derived.

\subsection{The Wigner equation}

The Wigner function $f(\bm{x}, \bm{k}, t)$ living in the phase space $(\bm{x},\bm{k})\in\mathbb{R}^{2dN}$ for the position $\bm{x}$ and the wavevector $\bm{k}$, introduced by Wigner in his pioneering work\cite{Wigner1932},  is defined by the Weyl-Wigner transform of the density matrix $\rho(\bm{r}, \bm{s}, t)$,
\begin{equation}
\begin{split}
&\rho\left(\bm{r},\bm{s},t\right)=\sum_{i}p_{i}\psi_{i}\left(\bm{r},t\right)\psi^{\dagger}_{i}\left(\bm{s},t\right),\\
&f\left(\bm{x}, \bm{k}, t\right)= \int_{\mathbb{R}^{Nd}} \textup{d} \bm{y}  \me^{-\mi \bm{k} \cdot \bm{y}} \rho\left(\bm{x}+\frac{\bm{y}}{2}, \bm{x}-\frac{\bm{y}}{2}, t\right),
\end{split}
\end{equation}
where $p_{i}$ gives the probability of occupying the $i$-th state,  $N$ is the number of involved particles, and $d$ denotes the dimension of space. Starting from the quantum Liouville equation, we are able to evaluate the derivative of the Wigner function and then arrive at the Wigner equation
\begin{equation}
\frac{\partial }{\partial t}f\left(\bm{x}, \bm{k}, t\right)+\frac{\hbar \bm{k}}{m} \cdot \nabla_{\bm{x}} f\left(\bm{x},\bm{k}, t\right)=\Theta_{V}\left[f\right]\left(\bm{x}, \bm{k}, t\right), \label{eq.Wigner}
\end{equation}
where
\begin{align}
\Theta_{V}\left[f\right]\left(\bm{x}, \bm{k}, t\right)
&=\int \textup{d} \bm{k^{\prime}} f\left(\bm{x},\bm{k}^{\prime},t\right)V_{w}\left(\bm{x},\bm{k}-\bm{k}^{\prime},t\right), \label{def_PDO}\\
V_{w}\left(\bm{x},\bm{k},t\right)&=\frac{1}{\mi\hbar \left(2\pi\right)^{N\cdot d}}\int \text{d}\bm{y} \me^{-i\bm{k}\cdot \bm{y}} D_{V}\left(\bm{x}, \bm{y}, t\right), \label{Wigner_kernel} \\
D_{V}\left(\bm{x}, \bm{y}, t\right)&=V\left(\bm{x}+\frac{\bm{y}}{2}, t\right)-V\left(\bm{x}-\frac{\bm{y}}{2}, t\right). \label{Dv}
\end{align}
Here the nonlocal pseudo-differential term $\Theta_{V}[f](\bm{x}, \bm{k}, t)$ contains the quantum information,
 $D_{V}(\bm{x}, \bm{y}, t)$ denotes a central difference of the potential function $V(\bm{x}, t)$,
the Wigner kernel $V_{w}(\bm{x}, \bm{k}, t)$ is defined   through the Fourier transform of $D_{V}(\bm{x}, \bm{y}, t)$,
$\hbar$ is the reduced Planck constant and $m$ is the particle mass (for simplicity, we assume all $N$ particles have the same mass throughout this work).

The Wigner function $f(\bm{x}, \bm{k}, t)$ can be used to calculate the particle density $n(\bm{x}, t)$ and the current density $\bm{j}(\bm{x}, t)$ by
\begin{align}
n\left(\bm{x}, t\right) & =\int f\left(\bm{x}, \bm{k}, t\right) \textup{d}\bm{k}, \\
\bm{j}\left(\bm{x}, t\right) &=\frac{\hbar}{m}\int \bm{k}f\left(\bm{x}, \bm{k}, t\right) \textup{d}\bm{k}.
\end{align}
Since it is easy to verify that
\begin{equation}\label{mass_conservation}
\int \text{d} \bm{k} \int \text{d} \bm{k^{\prime}} f\left(\bm{x},\bm{k}^{\prime},t\right)V_{w}\left(\bm{x},\bm{k}-\bm{k}^{\prime},t\right)=0,
\end{equation}
we can derive the continuity equation
\begin{equation}
\frac{\partial }{\partial t} n\left(\bm{x}, t\right) +\nabla_{\bm{x}}\cdot  \bm{j}\left(\bm{x}, t\right)=0,
\end{equation}
which corresponds to the conservation of the first moment (i.e., total particle number or mass)
\begin{equation}\label{eq:mass}
\frac{\textup{d}}{\textup{dt}}\iint f\left(\bm{x}, \bm{k},t\right) \textup{d}\bm{x} \textup{d}\bm{k}=0.
\end{equation}

Furthermore, if the potential $V(\bm{x}, t)$ allows a Taylor expansion
in $\bm{x}$-space, then $D_{V}(\bm{x}, \bm{y}, t)$ depends only on the odd derivatives, as shown in the following
\begin{equation}\label{Taylor_expansion_V}
D_{V}\left(\bm{x}, \bm{y}, t\right)=\mi \hbar \sum_{l=0}^{+\infty}\frac{\left(\mi\hbar/2\right)^{2l}}{\left(2l+1\right)!} \nabla^{2l+1}_{\bm{x}}V\left(\bm{x},t\right) \cdot\left(-\frac{\mi \bm{y}}{\hbar}\right)^{2l+1}.
\end{equation}
Substituting Eq.~\eqref{Taylor_expansion_V} into Eq.~\eqref{Wigner_kernel} and using the basic properties of Fourier transform, we can readily obtain the Moyal expansion of the Wigner equation\cite{bk:MarkowichRinghoferSchmeiser1990, bk:Schleich2011}
 \begin{equation}\label{Moyal_expansion}
 \begin{split}
\frac{\partial }{\partial t}f+\frac{ \bm{p}}{m} \cdot \nabla_{\bm{x}} f=&\nabla_{\bm{x}}V \cdot \nabla_{\bm{p}}f\\
&+\sum_{l=1}^{+\infty} \frac{\left(-1\right)^{l}}{\left(2l+1\right)!}\left(\frac{\hbar}{2}\right)^{2l} \nabla^{2l+1}_{\bm{x}}V \cdot \nabla^{2l+1}_{\bm{p}}f,
\end{split}
\end{equation}
where $\bm{p}=\hbar \bm{k}$ is the momentum.
It can be easily observed there that,
when $\hbar \to 0$, the Wigner equation reduces immediately to the classical Vlasov equation\cite{SonnendruckerRocheBertrand1999},
the Liouville part of the Boltzmann equation;
the quantum evolution governed by the Wigner potential couples
all odd derivatives of the potential,
whereas the classical evolution is determined only by the first derivative. That is,
within the phase space formalism of quantum mechanics,
quantum dynamics can be naturally connected to classical dynamics\cite{Zurek1991} and thus a unified treatment of both is possible\cite{NedjalkovKosinaSelberherrRinghoferFerry2004}.

\subsection{Physical symmetry relation}
\label{sec:physical:sym}

As the simplest but most appealing many-body problem,
the system composed of identical particles
has been extensively studied,
where symmetry relations play a key role. Next
we will investigate those symmetry relations within the Wigner function formalism. Hereafter the formulation  will be mostly illustrated for the one-dimensional two-body situation for simplicity, and generalization to arbitrary-sized phase space is straightforward.

The one-dimensional two-body Wigner function for a pure state reads
\begin{equation}\label{1d2b}
\begin{split}
f\left(x_{1}, x_{2}, k_{1}, k_{2}, t\right)=&\iint \text{d}y_{1} \textup{d}y_{2} \me^{-\mi k_{1}y_{1}-\mi k_{2}y_{2}}\\
&\times \psi\left(x_{1}+\frac{y_{1}}{2}, x_{2}+\frac{y_{2}}{2}, t\right) \psi^{\dag}\left(x_{1}-\frac{y_{1}}{2}, x_{2}-\frac{y_{2}}{2}, t\right),
\end{split}
\end{equation}
where $\psi\left(x_{1}, x_{2}, t\right)$ is the wave function describing a quantum system composed of two identical particles, and the superscript $\dag$ denotes the complex conjugate. When the position coordinates of two identical particles are interchanged, the wavefunction  $\psi\left(x_{1}, x_{2}, t\right)$ either remains unaffected for bosons or changes sign for fermions.
In contrast, the Wigner function $f(x_1,x_2,k_1,k_2,t)$ satisfies the same symmetry relation for both cases\cite{CancellieriBordoneBertoni2004}
\begin{equation}\label{symmetry_relation}
f\left(x_{1}, x_{2}, k_{1}, k_{2}, t\right)=f\left(x_{2}, x_{1}, k_{2}, k_{1}, t\right),
\end{equation}
which can be also readily verified from Eq.~\eqref{1d2b}.
Actually, we will further show that the Wigner equation holds the symmetry relation \eqref{symmetry_relation} when time evolves provided that
\begin{equation}\label{sym_potential}
V\left(x_{1}, x_{2}, t\right)=V\left(x_{2}, x_{1}, t\right).
\end{equation}
Before that, for convenience, the Wigner equation \eqref{eq.Wigner} is reformulated into
the following evolution system of the initial value problem,
\begin{equation}\label{Abstract_form}
\left\{
\begin{split}
&\partial_{t} f- Af- B\left(t\right)f=0, \,\,\, t\in \left[0,T\right],\\
&f\left(t=0\right)=f_{0}\in L^{2}\left(\mathbb{R}^{4}\right),
\end{split}
\right.
\end{equation}
where the operators $A$ and $B(t)$ are defined as follows
\begin{align}
A: f\in D(A)   &\to
Af= -\frac{\hbar}{m} \bm{k} \cdot \nabla_{\bm{x}}f \in
L^{2}\left(\mathbb{R}^{4}\right), \\
B(t): f\in L^{2}\left(\mathbb{R}^{4}\right) &\to
B\left(t\right)f=\Theta_{V}f \in
L^{2}\left(\mathbb{R}^{4}\right), \label{B(t)}
\end{align}
with $D\left(A\right)=\left\{f\in L^{2}\left(\mathbb{R}^{4}\right): \bm{k} \cdot \nabla_{\bm{x}}f \in L^{2}\left(\mathbb{R}^{4}\right)\right\}$.
When $V(\bm{x},t)$ is bounded, we have the following estimate\cite{ArnoldRinghofer1996}
\begin{equation}
\left\Vert B\left(t\right) \right\Vert_{L^{2}\left(\mathbb{R}^{4}\right)} \leq 2 \left\Vert V\left(\bm{x},t\right) \right\Vert_{L^{\infty}\left(\mathbb{R}^{2}\right)},
\end{equation}
which ensures the boundness of $B(t)$.

\begin{proposition}\label{pro:sym}
 Let $\sigma: L^{2}(\mathbb{R}^{4}) \to L^{2}(\mathbb{R}^{4})$ be an isomorphism, defined as
\begin{equation}\label{sigma}
\sigma f\left(x_{1}, x_{2}, k_{1}, k_{2}, t\right)=f\left(x_{2}, x_{1}, k_{2}, k_{1}, t\right).
\end{equation}
Then $\sigma f=f$ for $t\in \left[0,T\right]$ if the following conditions are satisfied:\\
\textup{(H1)} $f \in C^{1}\left(\left[0,T\right]: L^{2}\left(\mathbb{R}^{4}\right) \right)$;\\
\textup{(H2)} $B\left(t\right)$ is bounded for $t\in \left[0,T\right]$;\\
\textup{(H3)} $V\left(x_{1}, x_{2}, t\right)$ satisfies
Eq.~\eqref{sym_potential};\\
\textup{(H4)} $\sigma f_{0}=f_{0}$.
\end{proposition}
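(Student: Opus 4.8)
The plan is to show that $\sigma f$ solves exactly the same initial value problem \eqref{Abstract_form} as $f$, and then to conclude by a uniqueness argument. First I would record how $\sigma$ commutes with the operators appearing in \eqref{Abstract_form}. Since $\sigma$ merely permutes the phase-space variables and does not touch $t$, it commutes with $\partial_t$ by (H1). For the advection operator, the chain rule gives $\partial_{x_1}(\sigma f)(x_1,x_2,k_1,k_2,t)=(\partial_{x_2}f)(x_2,x_1,k_2,k_1,t)$ and the analogous identity with indices swapped; collecting the terms of $k_1\partial_{x_1}+k_2\partial_{x_2}$ one checks $\sigma A f = A\sigma f$ on $D(A)$.

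The crux is the identity $\sigma B(t)=B(t)\sigma$, and this is where (H3) enters. Condition \eqref{sym_potential} yields immediately from \eqref{Dv} that $D_V(x_1,x_2,y_1,y_2,t)=D_V(x_2,x_1,y_2,y_1,t)$; feeding this into the Fourier transform \eqref{Wigner_kernel} and relabelling $y_1\leftrightarrow y_2$ gives the corresponding symmetry of the Wigner kernel, $V_w(x_1,x_2,k_1,k_2,t)=V_w(x_2,x_1,k_2,k_1,t)$. Substituting this into the convolution \eqref{def_PDO} defining $\Theta_V$ and then changing variables $k_1'\leftrightarrow k_2'$ in the integral produces $\sigma(\Theta_V[f])=\Theta_V[\sigma f]$, i.e.\ $\sigma$ commutes with $B(t)$. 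I expect the bookkeeping of these two successive changes of variables — one in the $\bm y$-integral, one in the $\bm k'$-integral — to be the only genuinely delicate point; hypothesis (H2) ensures all the integrals involved are well defined.

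Combining the three commutation relations with (H4), the function $g:=\sigma f$ satisfies $\partial_t g-Ag-B(t)g=\sigma(\partial_t f-Af-B(t)f)=0$ with $g(0)=\sigma f_0=f_0$. To finish, set $h:=\sigma f-f$, so $h$ solves $\partial_t h=Ah+B(t)h$ with $h(0)=0$. Testing against $h$ and using that $A=-\tfrac{\hbar}{m}\bm k\cdot\nabla_{\bm x}$ is skew-symmetric on $D(A)$ (integration by parts in $\bm x$, with no boundary contribution in $L^2$, kills the real part of $\langle Ah,h\rangle$), together with the bound $\|B(t)\|_{L^2(\mathbb{R}^4)}\le 2\|V\|_{L^\infty(\mathbb{R}^2)}$, gives $\tfrac{\mathrm d}{\mathrm dt}\|h\|_{L^2(\mathbb{R}^4)}^2\le 4\|V\|_{L^\infty(\mathbb{R}^2)}\|h\|_{L^2(\mathbb{R}^4)}^2$; Gr\"onwall's inequality then forces $h\equiv 0$ on $[0,T]$, that is, $\sigma f=f$. (Equivalently, one may invoke the standard well-posedness of \eqref{Abstract_form} — $A$ generates a $C_0$-group of isometries and $B(t)$ is a bounded perturbation — to obtain the same uniqueness conclusion via Duhamel's formula.)
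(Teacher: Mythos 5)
Your proof is correct and follows essentially the same route as the paper: you show that $\sigma$ commutes with $\partial_t$, with the advection operator $A$, and with $B(t)$ (the last via the symmetry of $D_V$ and hence of the Wigner kernel under (H3)), so that $\sigma f$ solves the same initial value problem \eqref{Abstract_form}, and you conclude by uniqueness together with (H4). The only difference is that you establish uniqueness yourself through a skew-symmetry/Gr\"onwall energy estimate on $h=\sigma f-f$, whereas the paper simply invokes the abstract existence--uniqueness theorems for hyperbolic evolution systems from Pazy; both are valid.
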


\begin{proof}
According to Theorem 2.3 in Chapter 5\cite{bk:Pazy1983},  $A+B(t)$ is a stable family of infinitesimal generators in $L^{2}\left(\mathbb{R}^{4}\right)$
for a hyperbolic system and $B(t)$ is bounded (see (H2)). Consequently,
Theorem 5.3 in Chapter 5\cite{bk:Pazy1983} further guarantees the existence and uniqueness of a classical solution $f$ for the Wigner system \eqref{Abstract_form}
provided that $\textup{(H1)}$ is satisfied.

Let $A_{1}=\sigma A \sigma^{-1}$ and $D\left(A_{1}\right)=\left\{f\in L^{2}\left(\mathbb{R}^{4}\right): \sigma^{-1}f \in D\left(A\right)\right\}$. By the chain rule,
it can be easily verified that
\begin{equation}\label{sigmaA}
\left(k_{1}\frac{\partial}{\partial x_{1}} + k_{2}\frac{\partial}{\partial x_{2}}\right)\sigma f=\sigma\left(k_{1}\frac{\partial}{\partial x_{1}} + k_{2}\frac{\partial}{\partial x_{2}}\right)f,
\,\,\,\forall f\in D\left(A_{1}\right).
\end{equation}
By the definition of $B\left(t\right)$ in Eq.~\eqref{B(t)} and the condition \textup{(H3)}, direct algebraic calculations yield
\begin{equation}\label{sigmaB}
\sigma B\left(t\right)= B\left(t\right) \sigma.
\end{equation}
Hence, combining Eqs.~\eqref{Abstract_form}, \eqref{sigma}, \eqref{sigmaA}
and \eqref{sigmaB} leads to
\begin{equation}
\partial_{t} \sigma f=\sigma \partial_{t} f=\sigma \left(A+B\left(t\right)\right) f= \left(A+B\left(t\right)\right) \left(\sigma f\right),
\end{equation}
which implies that $\sigma f$ is also a classical solution of the system \eqref{Abstract_form}. In consequence,
we have $\sigma f=f$ for $t\in \left[0,T\right]$ due to the uniqueness and the condition $\textup{(H4)}$.
\end{proof}

Although Proposition \ref{pro:sym} seems not very difficult,
the physical implication is quite important,
because it tells us that the Pauli exclusion principle for fermions is naturally embedded in the Wigner equation,
provided that the initial data corresponds to the antisymmetric wave functions. More importantly,
we will show later that such symmetry relation can be still inherited by the numerical solutions calculated from the proposed advective-spectral-mixed method (see Section \ref{sec:analysis:sym}).

\subsection{The truncated Wigner equation}

As shown in the Wigner equation \eqref{eq.Wigner},
the nonlocal pseudo-differential term poses the first challenge in seeking approximations for the Wigner function.
Considering the decay of the Wigner distribution when $|\bm{k}|\to+\infty$ due to the Riemann-Lebesgue lemma,
a simple nullification of the distribution outside a sufficiently large $\bm{k}$-domain is employed in this paper. It should be noted that truncating the infinite series in the Moyal expansion \eqref{Moyal_expansion} provides another way for numerical purpose\cite{HugMenkeSchleich1998I}, but we will not use it in this work. Suppose the Wigner function $f(\bm{x}, \bm{k}, t)$ is sought in
a sufficiently large $\bm{k}$-domain,
denoted by $\mathcal{K}_{1} \times \mathcal{K}_{2}$
with the size $\left|\mathcal{K}_{i}\right|=k_{i, \textup{max}}- k_{i, \textup{min}}\, (i=1,2)$.
Then the truncated Wigner equation reads
\begin{align}
\frac{\partial }{\partial t}f\left(\bm{x}, \bm{k}, t\right)+&\frac{\hbar \bm{k}}{m} \cdot \nabla_{\bm{x}} f\left(\bm{x},\bm{k}, t\right)=\Theta_{V}^T\left[f\right]\left(\bm{x}, \bm{k}, t\right), \label{eq.Wigner_truncated}\\
\Theta_{V}^T\left[f\right]\left(\bm{x}, \bm{k}, t\right)
&=\iint_{\mathcal{K}_{1}\times\mathcal{K}_{2}} \textup{d} \bm{k^{\prime}} f\left(\bm{x},\bm{k}^{\prime},t\right)V_{w}^T\left(\bm{x},\bm{k}-\bm{k}^{\prime}\right),\label{PDO}
\end{align}
where (and hereafter) we have only considered the time-independent potential. 
Since the $\bm{k}$-integration in  Eq.~\eqref{PDO} ranges in a finite region $\mathcal{K}_{1} \times \mathcal{K}_{2}$, we only need the information of the Wigner kernel on a finite bandwidth,
i.e., the truncated Wigner kernel of the following form
\begin{equation}\label{Wigner_kernel_finite}
V^{T}_{w}\left(x_{1},x_{2},k_{1},k_{2}\right) =V_{w}\left(x_{1},x_{2},k_{1},k_{2}\right) \textup{rect}\left(\frac{k_{1}}{2|\mathcal{K}_{1}|}\right)  \textup{rect}\left(\frac{k_{2}}{2|\mathcal{K}_{2}|}\right),
\end{equation}
where $\textup{rect}(k)$ is the rectangular function
\begin{equation}
\textup{rect}\left(k\right)=\left\{\begin{split}& 1, \quad \left|k\right|<\frac{1}{2},\\
&0, \quad \left|k\right| \geq \frac{1}{2}.\end{split}\right.
\end{equation}
It can be readily verified that
Proposition \ref{pro:sym} still holds for the truncated Wigner equation \eqref{eq.Wigner_truncated} provided $\mathcal{K}_1=\mathcal{K}_2$. That is, the physical symmetry relation \eqref{symmetry_relation} is also preserved by the truncated Wigner function.

On the other hand, starting from the Poisson summation formula for the Wigner kernel $V_{w}$ in the whole phase space (see Eq.~\eqref{Wigner_kernel})
\begin{equation}\label{Poisson_summation}
\begin{split}
 &\sum_{n_{1}=-\infty}^{+\infty}\sum_{n_{2}=-\infty}^{+\infty}V_{w}\left(x_{1},x_{2},k_{1}+n_{1}\frac{2\pi }{\Delta y_{1}},k_{2}+n_{2}\frac{2\pi }{\Delta y_{2}}\right)\\
 =&\frac{1}{4\mi \hbar \pi^{2}}\sum_{\mu=-\infty}^{+\infty}\sum_{\nu=-\infty}^{+\infty}\Delta y_{1} \Delta y_{2} D_{V}\left(x_{1}, x_{2}, y_{\mu}, y_{\nu}\right) \me^{-\mi k_{1} y_{\mu}-\mi k_{2} y_{\nu}},
\end{split}
\end{equation}
where $y_{\mu}=\mu \Delta y_{1}$ and $y_{\nu}=\nu \Delta y_{2}$ with $\Delta y_{i}\,(i=1,2)$ being the spacing in $\bm{y}$-space,
we can easily obtain
\begin{equation}\label{Poisson_summation_truncated}
V^{T}_{w}\left(x_{1},x_{2},k_{1},k_{2}\right)=\frac{\Delta y_{1}\Delta y_{2}}{4 \mi \hbar \pi^{2}} \sum_{\mu=-\infty}^{+\infty}\sum_{\nu=-\infty}^{+\infty}D_{V}\left(x_{1}, x_{2}, y_{\mu}, y_{\nu}\right)\me^{-\mi k_{1} y_{\mu}-\mi k_{2} y_{\nu}},
\end{equation}
provided that the central period $[-\pi/\Delta y_1,\pi/\Delta y_1]\times[-\pi/\Delta y_2,\pi/\Delta y_2]$ contains the computational domain $\mathcal{K}_{1} \times \mathcal{K}_{2}$. That is, Eq.~\eqref{Poisson_summation_truncated} holds only under
the Nyquist condition
\begin{equation}\label{Nyquist_criterion}
\left|\mathcal{K}_{i}\right|\Delta y_{i}\le 2\pi, \,\,\, i=1,2.
\end{equation}

Moreover, from Eq.~\eqref{mass_conservation}, in order to maintain the mass conservation,
for any $y_{\mu}$ and  $y_{\nu}$, it is required that
\begin{equation}
\iint_{\mathcal{K}_{1}\times \mathcal{K}_{2}}  \me^{-\mi k_{1} y_{\mu}-\mi k_{2} y_{\nu}} \textup{d} k_{1} \textup{d} k_{2}=0,
\end{equation}
which can be achieved by a sufficient condition\cite{Frensley1987,ShaoLuCai2011}
\begin{equation}\label{conservation_condition}
\left|\mathcal{K}_{i}\right|\Delta y_{i}=2\pi, \,\,\, i=1,2.
\end{equation}
In summary, combining Eqs.~\eqref{Nyquist_criterion} and \eqref{conservation_condition} implies that the above constraint on the length of $\bm{k}$-domain is not only sufficient but also necessary,
which may be pointed out for the first time in the literature.

\section{Numerical scheme}
\label{sec:method}

This section is devoted into elaborating our advective-spectral-mixed method for time-dependent many-body Wigner simulations in two aspects. The first lies that a semi-Lagrange-type characteristic method\cite{SonnendruckerRocheBertrand1999, CrouseillesMehrenbergerSonnendrucker2010} in $(\bm{x},t)$-space will adopted. This advective approximation of the Wigner equation fully exploits the integral formulation based on the semigroup theory and exactly follows the spatial characteristic lines backward in time. That is, it can be implemented in an explicit way with the help of the Adams multistep solvers as well as piecewise spline interpolations. More importantly, it allows large time steps for it is not restricted by the usual CFL condition.
The second aspect is the spectral element method\cite{bk:ShenTangWang2011},
a natural choice regarding to the Fourier transform nature of the Wigner potential\cite{ShaoLuCai2011},
will be employed to discrete the nonlocal pseudo-differential term. This spectral discretization is able to
give rise to a close representation of the pseudo-differential term and provides a highly accurate approximation because all integrals are analytically implemented in virtue of the global spectral expansion
in $\bm{k}$-space.

Now we want to solve the truncated Wigner equation \eqref{eq.Wigner_truncated} in a finite domain
$\mathcal{X}_{1}\times\mathcal{X}_{2}\times \mathcal{K}_{1}\times \mathcal{K}_{2}$.
A uniform grid mesh with the spacing $\Delta x_i\,(i=1,2)$ in $\bm{x}$-space
\begin{align}
\mathcal{X}_{1}\times\mathcal{X}_{2} &= \bigcup_{q_{1}, q_{2}}\mathcal{X}_{q_{1},q_{2}} , \quad \mathcal{X}_{q_{1},q_{2}}=\left[x_{1,q_{1}-1}, x_{1, q_{1}}\right] \times \left[x_{2,q_{2}-1}, x_{2, q_{2}}\right], \\
x_{i, 0} &=x_{i, \textup{min}}, \quad x_{i, q_i}=x_{i, \textup{min}}+(q_i-1)\Delta x_{i}, \quad i=1,2,
\label{xgrid}
\end{align}
is used, while the $\bm{k}$-domain is divided into $M_1M_2$
non-overlapping elements as follows
\begin{equation}
\mathcal{K}_{1}\times\mathcal{K}_{2}=\bigcup_{r_{1}=1}^{M_{1}} \bigcup_{r_{2}=1}^{M_{2}}\mathcal{K}_{r_{1}}\times \mathcal{K}_{r_{2}},
\end{equation}
with $\mathcal{K}_{r_{i}}=[d_{r_i}, d_{r_i+1}]$,
and then the Gauss-Chebyshev collocation points\cite{ShaoLuCai2011} will be chosen in each element.

\subsection{The advective approach in $(\bm{x},t)$-space}
\label{sec:method:adv}

The essential difference between the Wigner equation and
the classical Vlasov equation lies in the nonlocal nature of the Wigner kernel, making it entirely not trivial to follow the characteristic lines in $\bm{k}$-space.
To this end, the integral form of the truncated Wigner equation \eqref{eq.Wigner_truncated} using the semigroup theory\cite{bk:Pazy1983} is the start point now, instead of the operator splitting scheme\cite{SonnendruckerRocheBertrand1999}. For simplicity, let
\begin{equation}\label{eq:g}
g\left(\bm{x},\bm{k},t\right)=\Theta_{V}^T\left[f\right]\left(\bm{x},\bm{k},t\right).
\end{equation}
Applying the variation-of-constant formula\cite{bk:Pazy1983} into
the Wigner equation \eqref{Abstract_form} leads to
the following {\sl equivalent integral formulation}
\begin{equation}\label{mild_solution}
f\left(\bm{x},\bm{k},t\right)=\me^{\left(t-t_{0}\right)A}f\left(\bm{x},\bm{k},t_{0}\right)+\int_{t_{0}}^{t}\me^{\left(t-\tau\right)A}g\left(\bm{x},\bm{k},\tau\right)\textup{d}\tau.
\end{equation}

The operator $T(\Delta t)=\me^{\Delta t A}$ is a $C_{0}$-semigroup of isometries on $L^{2}(\mathbb{R}^{4})$, describing the Lagrangian advection in $(\bm{x},t)$-space
\begin{equation}\label{eq:Tdt}
T\left(\Delta t\right)f\left(\bm{x}, \bm{k}, \tau\right)=f\left(\bm{X}\left(\tau+\Delta t; \bm{x}, \tau\right), \bm{k}, \tau\right),
 \end{equation}
where $\Delta t$ is the time increment,
and $\bm{X}(t;\bm{x}_{0},t_{0})$ is the spatial characteristic curve at the end time $t$, starting from $\bm{x}_{0}$ and $t_{0}$, and satisfies the following dynamic system
\begin{equation}\label{characteristic_curve}
\frac{\textup{d} \bm{X}\left(t;\bm{x}_{0},t_{0}\right)}{\textup{d}t}=-\bm{v}, \quad \bm{X}\left(t_{0}; \bm{x}_{0}, t_{0}\right)=\bm{x}_{0}, \quad t\ge t_{0},
\end{equation}
with the velocity $\bm{v} = {\hbar \bm{k}}/{m}$. Since the solution of Eq.~\eqref{characteristic_curve} is explicitly given by
\begin{equation}
\bm{X}\left(t; \bm{x}_{0}, t_{0}\right)=\bm{x}_{0}-\bm{v}\left(t-t_{0}\right),
\end{equation}
Eq.~\eqref{eq:Tdt} turns out to be 
\begin{equation}\label{x_shift}
T\left(\Delta t\right)f\left(\bm{x}, \bm{k}, \tau\right)=f\left(\bm{x}\left(\Delta t\right),\bm{k},\tau\right)=f\left(\bm{x}-\bm{v}\Delta t, \bm{k}, \tau\right),
 \end{equation}
where $\bm{x}(\Delta t)$ denotes the displacement occurring in the time interval $\Delta t$.
Let $t^{n}=n\Delta t$. Then
combining Eqs.~\eqref{mild_solution} and \eqref{x_shift} yields
\begin{equation}\label{Integral_form_1}
\begin{split}
f\left(\bm{x}, \bm{k},t^{n+1}\right)=&f\left(\bm{x}\left(t^{n+1}-t^{n}\right), \bm{k},t^{n}\right)+\int_{t^{n}}^{t^{n+1}} g\left(\bm{x}\left(t^{n+1}-\tau\right),\bm{k},\tau\right)\textup{d}\tau,
\end{split}
\end{equation}
which constitutes the main object for numerical approximations.

The first approximation comes from replacing the integrand $g\left(\bm{x}\left(t^{n+1}-\tau\right),\bm{k},\tau\right)$
in  Eq.~\eqref{Integral_form_1} with a Lagrangian polynomial in the spirit of the Adams multistep solvers.
The general $p$-step formula for approximating Eq.~\eqref{Integral_form_1} usually reads
\begin{equation}\label{implicit_advective_approx}
f^{n+1}\left(\bm{x}, \bm{k}\right)= f^{n}\left(\bm{x}-\bm{v}\Delta t, \bm{k}\right)+ \Delta t \sum_{s=0}^{p}{\gamma}_{s} g^{n+1-s}\left(\bm{x}-s\bm{v}\Delta t, \bm{k}\right),
\end{equation}
where $f^{l}(\bm{x},\bm{k})\,(l=n,n+1)$  denotes the numerical approximations of $f(\bm{x},\bm{k},t^{l})$ (the same convention is used for $g$),
and the coefficients ${\gamma}_{s}$ can be determined through the root condition and certain algebraic relations (for details, one can refer to\cite{bk:HairerNorsettWanner1993}).
In this work, we will use three typical solvers as shown below.
\begin{itemize}
\item Explicit Euler method
\begin{equation}\label{explicit_Euler}
f^{n+1}\left(\bm{x},\bm{k}\right)=f^{n}\left(\bm{x}-\bm{v}\Delta t,\bm{k}\right)+\Delta t g^{n}\left(\bm{x}-\bm{v}\Delta t,\bm{k}\right).
\end{equation}
\item Implicit midpoint method
\begin{equation}\label{implicit_mid_point}
f^{n+1}\left(\bm{x},\bm{k}\right)=f^{n}\left(\bm{x}-\bm{v}\Delta t,\bm{k}\right)+\frac{1}{2}\Delta tg^{n+1}\left(\bm{x},\bm{k}\right)+\frac{1}{2}\Delta tg^{n}\left(\bm{x}-\bm{v}\Delta t,\bm{k}\right).
\end{equation}
\item Explicit three-step method
\begin{equation}\label{explicit_3_step}
\begin{split}
f^{n+1}\left(\bm{x},\bm{k}\right)=&f^{n}\left(\bm{x}-\bm{v}\Delta t,\bm{k}\right)+\frac{23}{12}\Delta t g^{n}\left(\bm{x}-\bm{v}\Delta t,\bm{k}\right)
\\&-\frac{16}{12}\Delta t g^{n-1}\left(\bm{x}-2\bm{v}\Delta t,\bm{k}\right)+\frac{5}{12}\Delta t g^{n-2}\left(\bm{x}-3\bm{v}\Delta t,\bm{k}\right).
\end{split}
\end{equation}
\end{itemize}
Obviously, the above three methods are of the order $\mathcal{O}(\Delta t)$, $\mathcal{O}(\Delta t^2)$,
and $\mathcal{O}(\Delta t^3)$, respectively.
In practice,
low order methods can be used to provide
the missing starting points for high order ones.
For example,
at the initial stage, the missing two points needed in the three-step method can be obtained using both Euler and midpoint methods with a relatively smaller time step in a prediction-correction manner.

The second approximation lies in
interpolating the function values,
$f^{n}(\bm{x}-\bm{v}\Delta t, \bm{k})$ and $g^{n+1-s}(\bm{x}-s\bm{v}\Delta t, \bm{k}), (s=0,1,\cdots,p)$,
required by Eq.~\eqref{implicit_advective_approx},
because the shifted points $(\bm{x}-s\bm{v}\Delta t, \bm{k})$ might not be located exactly at the grids in Eq.~\eqref{xgrid}. For this, the piecewise cubic spline interpolation is adopted here since it appears to be  a good comprise between accuracy and cost\cite{SonnendruckerRocheBertrand1999, CrouseillesMehrenbergerSonnendrucker2010}.
In general, the piecewise bicubic spline $S\left(x_{1}, x_{2}\right)$ is defined as
\begin{equation}\label{general_bicubic_sp}
S\left(x_{1}, x_{2}\right)=\sum_{\nu=0}^{3} \sum_{\kappa=0}^{3} \eta_{\nu \kappa} \left(x_{1}-x_{1, i}\right)^{\nu} \left(x_{2}-x_{2, j}\right)^{\kappa} ,
\end{equation}
for $ \left(x_{1}, x_{2}\right) \in \left[x_{1, i}, x_{1, i+1}\right] \times \left[x_{2, j}, x_{2, j+1}\right] $, which requires the evaluation of the coefficient table $\left(\eta_{\nu \kappa}\right)$. For the convenience of numerical analysis, we can equivalently evaluate $S\left(x_{1}, x_{2}\right)$ by several one-dimensional cubic splines, that yields a more compact formulation.
For a fixed $\left(k_{1}, k_{2}\right)$ in $\bm{k}$-space, we first perform a one-dimensional cubic spline $C_{j}\left(x_{1}\right)$ for each grid point $x_{2,j}$ in $x_{2}$-direction,
\begin{equation}\label{eq:1dpcsi}
\begin{split}
C_{j}\left(x_{1}\right)=&\frac{s_{i+1, j}}{6\Delta x_{1}}\left(x_{1}-x_{1, i}\right)^{3}+\left(\frac{f^{n}_{i+1,j}}{\Delta x_{1}}-\frac{s_{i+1,j}\Delta x_{1}}{6}\right)\left(x-x_{1,i}\right)\\
&+\frac{s_{i, j}}{6\Delta x_{1}}\left(x_{1, i+1}-x_{1}\right)^{3}+\left(\frac{f^{n}_{i,j}}{\Delta x_{1}}-\frac{s_{i,j}\Delta x_{1}}{6}\right)\left(x_{1,i+1}-x_{1}\right)
\end{split}
\end{equation}
for any $x_{1}\in \left[x_{1, i}, x_{1, i+1}\right]$, and $f^{n}_{i, j}=f\left(x_{1, i},  x_{2, j}, k_{1}, k_{2}, t^{n}\right)$ (for brevity, we omit $k_{1}, k_{2}$).
Once the coefficient table $\left(s_{i,j}\right)$ is determined, the bicubic splines can be expressed as:
 \begin{equation}\label{eq:pcsi}
\begin{split}
S\left(x_{1}, x_{2}\right)=&\frac{\sigma_{j+1}\left(x_{1}\right)}{6\Delta x_{2}}\left(x_{2}-x_{2, j}\right)^{3}+\left(\frac{C_{j}\left(x_{1}\right)}{\Delta x_{2}}-\frac{\sigma_{j+1}\left(x_{1}\right)\Delta x_{2}}{6}\right)\left(x_{2}-x_{2,j}\right)\\&+\frac{\sigma_{j}\left(x_{1}\right)}{6\Delta x_{2}}\left(x_{2, j+1}-x_{2}\right)^{3}
+\left(\frac{C_{j}\left(x_{1}\right)}{\Delta x_{2}}-\frac{\sigma_{j}\left(x_{1}\right)\Delta x_{2}}{6}\right)\left(x_{2,j+1}-x_{2}\right)
\end{split}
\end{equation}
for  $x_{2}\in \left[x_{2, j}, x_{2, j+1}\right]$. The coefficients $\left(\sigma_{j}\left(x_{1}\right)\right)$ depend on the interpolated values $C_{j}\left(x_{1}\right)$ and satisfy
\begin{equation}
\sigma_{j}\left(x_{1}\right)=0 \quad \textup{for} \quad x_{1}\notin \mathcal{X}_{1},
\end{equation}
because the cubic splines are only defined in $\mathcal{X}_{1}$.

The proposed advective approach makes full use of the exact Lagrangian advection and the integration exactly follows
the spatial characteristic lines backward in time. The time step is not restricted by the usual CFL condition,
and thus large time steps are allowed.
This may be the most notable feature of the method. However, due to the Moyal expansion \eqref{Moyal_expansion},
the time step may be still influenced a little bit by the deformational Courant number $\|\Delta t\cdot \nabla_{\bm{x}} V \|\leq 1$ as observed in the Vlasov community\cite{SonnendruckerRocheBertrand1999}.
A further remark will be given in numerical experiments (see Section \ref{sec:result:ee}).

When running simulations in the computational domain $\mathcal{X}_{1}\times\mathcal{X}_{2}\times \mathcal{K}_{1}\times \mathcal{K}_{2}$,
the boundary conditions in $\bm{x}$-space are required by the backward characteristic lines. Usually, the inflow boundary conditions are used in the literature\cite{Frensley1990,ShaoLuCai2011,JiangCaiTsu2011}.
For studying an isolated quantum system,
this work sets the boundary condition of cubic splines as not-a-knot type\cite{bk:Boor2001} and makes a simple nullification outside a sufficiently large computational domain. In this way, the outgoing waves move outside the domain transparently. How to effectively implement the
inflow boundary conditions within the advective approach
is still a going-on project.

\subsection{The spectral element method in $\bm{k}$-space}
\label{sec:method:sem}

The remaining task is to deal with the discretization in $\bm{k}$-space. Regarding to the Fourier transform nature of the nonlocal Wigner potential,
the Chebyshev spectral element method will be employed,
for such spectral discretization
provides a highly accurate spectral approximation for the pseudo-differential term\cite{ShaoLuCai2011}.

Take the element $\mathcal{K}_{r_{1}}\times \mathcal{K}_{r_{2}}$ as an example. The spectral approximation of the Wigner function reads
\begin{equation}
f\left(\bm{x} ,k_{1},k_{2}, t\right) \approx \sum_{l_{1}=0}^{N_{1}-1}
\sum_{l_{2}=0}^{N_{2}-1}a_{r_{1}r_{2},l_{1}l_{2}}
\left(\bm{x},t\right)C_{l_{1}}
\left(k_{1}\right)C_{l_{2}}\left(k_{2}\right),
\,\,\, \left(k_{1}, k_{2}\right) \in \mathcal{K}_{r_{1}}\times \mathcal{K}_{r_{2}},
\label{spectral_representation}
\end{equation}
where
\begin{equation}\label{eq:base}
C_{l_i}\left(k\right)  =
T_{l_i}\left(\eta\right), \,\,\,
k=\hat{d}_{r_i}+\frac{|\mathcal{K}_{r_i}|\eta}{2},
\,\,\,
\hat{d}_{r_i}  = d_{r_i}+\frac{|\mathcal{K}_{r_i}|}{2},
\,\,\, i=1,2,
\end{equation}
and $T_{l_i}(\eta)$ is the Chebyshev polynomial of the first kind. Substituting Eqs.~\eqref{Poisson_summation_truncated}, \eqref{eq:g}
and \eqref{spectral_representation} into Eq.~\eqref{PDO},
we arrive at the spectral approximation for
the truncated pseudo-differential term
\begin{equation}
\label{spec_nonlocal_term}
\begin{split}
g\left(\bm{x}, k_{1},k_{2}, t\right) & \approx \frac{\Delta y_{1}\Delta y_{2}}{4 \mi\hbar \pi^{2} } \sum_{\mu=-\infty}^{+\infty}\sum_{\nu=-\infty}^{+\infty}\me^{-\mi k_{1} y_{\mu}-\mi k_{2} y_{\nu}}D_{V}\left(\bm{x}, y_{\mu}, y_{\nu},t\right) \\
& \times \sum_{r_{1}=1}^{M_{1}} \sum_{r_{2}=1}^{M_{2}} \sum_{l_{1}=0}^{N_{1}-1} \sum_{l_{2}=0}^{N_{2}-1}a_{r_{1}r_{2},l_{1}l_{2}}
\left(\bm{x},t\right)O_{r_1r_2,l_1l_2}\left(y_{\mu}, y_{\nu}\right),
\end{split}
\end{equation}
where the double integral $O_{r_1r_2,l_1l_2}(y_{\mu}, y_{\nu})$ reads
\begin{equation}\label{dint}
O_{r_1r_2,l_1l_2}\left(y_{\mu}, y_{\nu}\right) =
\iint_{\mathcal{K}_{r_{1}}\times \mathcal{K}_{r_{2}}}e^{\mi k_{1}^{\prime} y_{\mu}+\mi k_{2}^{\prime} y_{\nu}} C_{l_{1}}\left(k^{\prime}_{1}\right)C_{l_{2}}\left(k^{\prime}_{2}\right)\textup{d} k_{1}^{\prime} \textup{d} k_{2}^{\prime}.
\end{equation}
The next key step is how to calculate the above integrals.
Using Eq.~\eqref{eq:base}, a direct calculation shows
\begin{equation}
\int_{\mathcal{K}_{r_{i}}}\me^{\mi k_{i}^{\prime} y }C_{l_{i}}\left(k_{i}^{\prime}\right) \textup{d} k_{i}^{\prime}=\frac{|\mathcal{K}_{r_i}|}{2}\me^{\mi y\hat{d}_{r_{i}}}O_{l_{i}}\left(\frac{|\mathcal{K}_{r_i}|y}{2}\right),\,\,\, i=1,2,
\end{equation}
and thus the double integral \eqref{dint} becomes
\begin{equation}
O_{r_1r_2,l_1l_2}\left(y_{\mu}, y_{\nu}\right)=\frac{|\mathcal{K}_{r_1}||\mathcal{K}_{r_2}|}{4}\me^{\mi y_{\mu}\hat{d}_{r_{1}}+\mi y_{\nu}\hat{d}_{r_{2}}}O_{l_{1}}\left(\frac{|\mathcal{K}_{r_1}|y_{\mu}}{2}\right)O_{l_{2}}\left(\frac{|\mathcal{K}_{r_2}|y_{\nu}}{2}\right).
\end{equation}
Here the oscillatory integral $O_{l}\left(z\right)$ is given by
\begin{equation}\label{osc_int}
O_{l}\left(z\right)=\int_{-1}^{1} e^{iz\eta} T_{l}\left(\eta\right) \textup{d} \eta,
\end{equation}
which can be represented as a linear combination of spherical Bessel functions of the first kind and thus can be calculated analytically
by exploiting the Legendre polynomial expansion of $e^{iz\eta}$ and $T_{l}\left(\eta\right)$. For more details, one can refer to \cite{ShaoLuCai2011}.

It remains to truncate the infinite summation with respect to $\mu$ and $\nu$ in Eq.~\eqref{spec_nonlocal_term}.
If the potential function $V(x_1,x_2)$ has a compact support, the matrix $D_{V}(x_{1}, x_{2}, y_{\mu}, y_{\nu})$ is sparse and it is convenient to determine the truncation threshold by counting the number of nonzero elements\cite{ShaoLuCai2011}.
However, this approach is not appropriate for the long-range Coulomb potential, especially for the electron-electron interaction $V_{\textup{ee}}$. As a matter of fact, for the many-body problem, the truncation of $y_{\nu}$ and $y_{\mu}$ is a subtle problem and so far we have not found a general way. This  truncation should also depend on how much quantum information one wants to involve in the simulations. But fortunately, we have found in the numerical experiments that a satisfactory result can be obtained for the Gaussian wave packet simulations by a finite sequence of discrete samples. Redundant sampling only leads to a very slight correction, at the cost of a dramatic decline in efficiency (see Fig.~\ref{fig_Ly} and related explanations in Section \ref{sec:result:ee}).

\begin{remark}
A simple test is presented here to validate the accuracy of
the approximation \eqref{spec_nonlocal_term} for the pseudo-differential term as well as to calibrate the computer code. Suppose the Wigner function is given by \begin{equation}
f\left(x_{1}, x_{2}, k_{1}, k_{2}\right)=\cos\left(\alpha k_{1}\right)\cos\left(\alpha k_{2}\right),
\end{equation}
then we have a close formula for the pseudo-differential term as
\begin{equation}\label{PDO_exact}
g\left(x_{1},x_{2}, k_{1},k_{2}\right)= \frac{\Delta y_{1}\Delta y_{2}}{4\mi\hbar \pi^{2}} \sum_{\mu=-\infty}^{+\infty}\sum_{\nu=-\infty}^{+\infty} D_{V}\left(x_{1}, x_{2}, y_{\mu}, y_{\nu}\right) \me^{-\mi k_{1} y_{\mu}-\mi k_{2} y_{\nu}}I_{\mu}I_{\nu},
\end{equation}
where
\begin{equation}
\begin{split}
I_{\mu,\nu}=&\frac{1}{2\left(\alpha+y_{\mu,\nu}\right)}\sin\left[\left(\alpha+y_{\mu,\nu}\right)k_{1,\textup{max}}\right]-\sin\left[\left(\alpha+y_{\mu,\nu}\right)k_{1,\textup{min}}\right]\\
& +\frac{1}{2\left(\alpha-y_{\mu,\nu}\right)}\sin\left[\left(\alpha-y_{\mu,\nu}\right)k_{1,\textup{max}}\right]-\sin\left[\left(\alpha-y_{\mu,\nu}\right)k_{1,\textup{min}}\right].
\end{split}
\end{equation}

To facilitate a comparison between the spectral approximation \eqref{spec_nonlocal_term} and the exact value given in Eq.~\eqref{PDO_exact},
we choose \begin{equation}
V\left(x_{1}, x_{2}\right)=\frac{1}{2\pi} \exp\left(-\frac{x_{1}^{2}+x_{2}^{2}}{2}\right),
\,\,\, \alpha=0.25,
\end{equation}
and take the $\bm{k}$-domain $[-{6\pi}/{5}, {6\pi}/{5}]^2$, which are divided into $4 \times 4$ elements and each element contains $16 \times 16$ collocation points. The numerical results from our implementation show that the difference is around $10^{-14}$ for the double precision computation
when $y_{\mu}$ and $y_{\nu}$ are truncated in $[-60, 60]$.
\end{remark}

\section{Numerical analysis}
\label{sec:analysis}

Suppose the Wigner function $f(\bm{x}, \bm{k}, t)$ is smooth enough and a sufficiently fine $\bm{k}$-mesh is used,
so that the Chebyshev spectral element method
can achieve a highly accurate spectral approximation in $\bm{k}$-space. Accordingly, by the standard numerical analysis on the equidistant mesh $\Delta x=\Delta x_{1}=\Delta x_{2}$, we have: When the time step $\Delta t$ is fixed,
the piecewise cubic spline interpolation \eqref{eq:pcsi}
yields a global error of the order $\mathcal{O}(\Delta x^{3})$ in $\bm{x}$-space; When the spacing $\Delta x$ is fixed, the error in $t$-space is
of the order $\mathcal{O}(\Delta t^{p})$ for an explicit $p$-step method and $\mathcal{O}(\Delta t^{p+1})$ for an implicit $p$-step method,
implying the order of $\mathcal{O}(\Delta t^3)$
for the explicit three-step method \eqref{explicit_3_step} adopted 
in the current implementation. 
The remaining of this section is to further illustrate that
the proposed third-order advective-spectral-mixed scheme for time-dependent many-body Wigner equation is capable of preserving the total mass as well as the
physical symmetry relation as stated in Proposition \ref{pro:sym}.

\subsection{Mass conservation}
\label{sec:analysis:mass}

Consider first the one-body truncated Wigner equation
\begin{equation}\label{1d_wigner}
\frac{\partial f\left(x,k, t\right)}{\partial t}+\frac{\hbar k}{m} \frac{\partial f\left(x,k, t\right)}{\partial x} +g\left(x, k, t\right)=0,
\end{equation}
in the domain $\mathcal{X}\times\mathcal{K}$, and we set $|\mathcal{K}|\Delta y=2\pi$ as requested by Eq.~\eqref{conservation_condition} with which we have
\begin{equation}\label{eq:gint}
G(x,t) : = \int_{\mathcal{K}} g\left(x, k, t\right)\textup{d}k \equiv 0.
\end{equation}

Suppose the $\bm{x}$-space is divided into $N$ non-overlapping
equidistant cells with the spacing $\Delta x$ plus two semi-bounded intervals:
\begin{equation}
\mathcal{X}_{-1}=(-\infty, x_{0}], \,\,\,\mathcal{X}_{i}=\left[x_{i}, x_{i+1}\right], \,\,\,  \mathcal{X}_{N}=[x_{N}, +\infty), \,\,\, i=0,1,\cdots,N-1,
\end{equation}
then $\mathcal{X}=\bigcup_{i=0}^{N-1}\mathcal{X}_{i}$.
According to Eq.~\eqref{implicit_advective_approx},
the explicit $p$-step approximation for Eq.~\eqref{1d_wigner} becomes
\begin{equation}\label{advective_form}
f^{n+1}\left(x, k\right)=f^{n}\left(x-h, k\right)+\Delta t \sum_{s=1}^{p} \gamma_{s}g^{n+1-s}\left(x-s h, k\right),
\end{equation}
where $h = \hbar k\Delta t/m$ denotes the shift occurring in $\Delta t$ for a given wavenumber $k$, and integrating it with respect to $x$ in the cell $\mathcal{X}_{i}$ leads to the \textit{conservative form}:
\begin{equation}\label{eq:cell}
\int_{x_{i}}^{x_{i+1}}f^{n+1}\left(x,k \right)\textup{d}x=\int_{x_{i}-h}^{x_{i+1}-h}f^{n}\left(x,k\right)\textup{d}x+\Delta t\sum_{s=1}^{p} \gamma_{s} \int_{x_{i}-s h}^{x_{i+1}-s h} g^{n+1-s}\left(x,k\right)\textup{d}x.
\end{equation}

Let's deal with the first term in the righthand side of Eq.~\eqref{eq:cell} for $i\in\{0,1,\cdots,N\}$. Without loss of generality,
we assume the shift $h\geq 0$ (i.e., $k\geq 0$,
the wave is traveling from left to right) and let $\beta = h/\Delta x - [h/\Delta x]$ (i.e., the remainder of $h/\Delta x$).

$\bullet$ If $[x_{i}-h, x_{i+1}-h]$ contains the first grid point $x_{0}$, let $i_0 = i$. That is, $[x_{i_0}-h, x_{i_0+1}-h] \subset (-\infty, x_{1}]$, then we have
\begin{equation}\label{eq:cell_part1}
\begin{split}
\int_{x_{i_0}-h}^{x_{i_0+1}-h}f^{n}(x,k)
\textup{d}x &=\int_{x_{0}}^{x_{i_0+1}-h}f^{n}\left(x,k\right)\textup{d}x +
\Phi^{n}_{\textup{in}}\left(k\right)
-\int_{-\infty}^{x_{i_0}-h}f^{n}(x,k)\textup{d}x\\
&=\frac{f_{0}^{n} }{2}\Delta x +\frac{f_{1}^{n} }{2}\Delta x -\frac{s_{0}}{24}\left(\Delta x\right)^{3}-\frac{s_{1}}{24}(\Delta x)^{3}+\frac{\beta \Delta x}{2}[-\beta f_{0}^{n}\\
&+(\beta-2)f_{1}^{n}]
+\frac{\beta^{2} \left(\Delta x\right)^{3}}{24}[-(\beta^{2}-2)s_{0}
+(\beta^{2}-4\beta+4)s_{1}]\\
&+\Phi^{n}_{\textup{in}}\left(k\right)
-\int_{-\infty}^{x_{i_0}-h}f^{n}(x,k)\textup{d}x,
\end{split}
\end{equation}
where we have used the one-dimensional piecewise cubic spline interpolation \eqref{eq:1dpcsi} in $[x_0,x_1]$ for $f^n(x,k)$ to calculate the second integral in the first line,
and
\begin{equation}\label{eq.def1}
\Phi_{\textup{in}}^{n}\left(k\right)
:=\int_{-\infty}^{x_{0}}f^{n}\left(x,k\right)\textup{d}x
\end{equation}
denotes the total inflow from the left to $x_0$ at $t=t_{n}$.

$\bullet$ For $i_0<i<N$, there must exist a gird point $x_{j} \in [x_{i}-h, x_{i+1}-h]$, and then $[x_{i}-h, x_{i+1}-h] \subset [x_{j-1}, x_{j+1}]$.
Using the one-dimensional piecewise cubic spline interpolation \eqref{eq:1dpcsi} in $[x_{j-1}, x_{j+1}]$ for $f^n(x,k)$ and integrating it directly in $[x_{i}-h, x_{i+1}-h]$ yields
\begin{equation}\label{eq:cell_part2}
\begin{split}
\int_{x_{i}-h}^{x_{i+1}-h}f^{n}(x, k)\textup{d}x&=\frac{f_{j}^{n}}{2}\Delta x +\frac{f_{j+1}^{n}}{2}\Delta x -\frac{s_{j}}{24}(\Delta x)^{3}-\frac{s_{j+1}}{24}(\Delta x)^{3}\\
&+\frac{\beta \Delta x}{2}\left[\beta f_{j-1}^{n}-\left(2\beta-2\right)f_{j}^{n}
+\left(\beta-2\right)f_{j+1}^{n}\right]
+\frac{\beta^{2} (\Delta x)^{3}}{24}\\
&\times[(\beta^{2}-2)s_{j-1}
- (2\beta^{2}- 4\beta+2)s_{j}+(\beta^{2}-4\beta+4)s_{j+1}].
\end{split}
\end{equation}

$\bullet$ For $i=N$, we denote $j_0 = j$ when $[x_{j-1}, x_{j+1}] \supset [x_{N-1}-h, x_{N}-h]$. A similar calculation to Eq.~\eqref{eq:cell_part1} leads to
\begin{equation}\label{eq:cell_part3}
\begin{split}
\int^{+\infty}_{x_{N}-h}f^{n}(x,k)\textup{d}x&=\int^{x_{j_{0}+1}}_{x_{N}-h}f^{n}(x,k)\textup{d}x+\int_{x_{j_{0}+1}}^{x_{N}}f^{n}(x,k)\textup{d}x
+\int_{x_{N}}^{+\infty}f^{n}(x,k)\textup{d}x \\
&=\int^{x_{N}-\beta \Delta x}_{x_{N}-h}f^{n}\left(x,k\right)\textup{d}x+\int_{x_{N}-\beta \Delta x}^{x_{N}}
f^{n}\left(x,k\right)\textup{d}x+\int_{x_{N}}^{+\infty}f^{n}
\left(x,k\right)\textup{d}x \\
&=\int^{x_{N}-\beta\Delta x}_{x_{N}-h}f^{n}\left(x, k\right)\textup{d}x+\frac{f_{N-1}^{n} }{2}\Delta x +\frac{f_{N}^{n} }{2}\Delta x -\frac{s_{N-1}}{24}\left(\Delta x\right)^{3}\\
&-\frac{s_{N}}{24}\left(\Delta x\right)^{3}+\frac{\beta \Delta x}{2}\left[\beta f_{N-1}^{n}-\left(\beta-2\right)f_{N}^{n}\right]+\Phi^{n}_{\textup{out}}\left(k\right)\\
&+\frac{\beta^{2} \left(\Delta x\right)^{3}}{24}[(\beta^{2}-2)s_{N-1}-(\beta^{2}-4\beta+4)s_{N}],
\end{split}
\end{equation}
where the first integral in the third line can be calculated in the same way as Eq.~\eqref{eq:cell_part2},
and
\begin{equation}\label{eq.def2}
\Phi^{n}_{\textup{out}}\left(k\right)
:=\int_{x_{N}}^{+\infty}f^{n}\left(x,k\right)\textup{d}x
\end{equation}
denotes the total outflow from the $x_N$ to right at $t=t_{n}$.

For the nonlocal term $g(x, k, t)$,
we can also
define a similar total ``inflow" and ``outflow" contributed by the source term as
\begin{equation}
\label{eq.def3}
\Psi^{n}_{\textup{in}}\left(k\right)=\int_{-\infty}^{x_{0}} g\left(x, k, t_{n}\right)\textup{d}k,\,\,\, \Psi^{n}_{\textup{out}}\left(k\right)=\int^{\infty}_{x_{N}} g\left(x, k, t_{n}\right)\textup{d}k,
\end{equation}
and thus derive similar expressions as shown in Eqs.~\eqref{eq:cell_part1}, \eqref{eq:cell_part2} and \eqref{eq:cell_part3}, for the second term in the righthand side of Eq.~\eqref{eq:cell} for $i\in\{0,1,\cdots,N\}$.
For simplicity, we neglect the details here.

Now using Eqs.~\eqref{eq:cell_part1}, \eqref{eq:cell_part2} and \eqref{eq:cell_part3},
the summation of Eq.~\eqref{eq:cell} with respect to $i$ from $-1$ to $N$ yields
\begin{equation}\label{eq:all_cell}
\begin{split}
\int_{\mathcal{X}}f^{n+1}\left(x, k\right)\textup{d}x=&\int_{\mathcal{X}}f^{n}\left(x, k\right)\textup{d}x+\Delta t\sum_{s=1}^{p} \gamma_{s} \int_{\mathcal{X}} g^{n+1-s}\left(x, k\right)\textup{d}x\\
&-\Phi_{\textup{in}}^{n+1}\left(k\right)+\Phi^{n}_{\textup{in}}\left(k\right)-\Phi^{n+1}_{\textup{out}}\left(k\right)+\Phi^{n}_{\textup{out}}\left(k\right)\\
&+\Delta t \sum_{s=1}^{p} \gamma_{s}\left[\Psi^{n+1-s}_{\textup{in}}\left(k\right)+\Psi^{n+1-s}_{\textup{out}}\left(k\right)\right],
\end{split}
\end{equation}
where we have
used the following basic property of piecewise cubic spline
\begin{equation}
\left(f^{n}_{0}+2\sum_{i=1}^{N-1}f_{i}^{n}+f^{n}_{N}\right)\frac{\Delta x}{2}-\left(s_{0}+2\sum_{i=1}^{N-1}s_{i}+s_{N}\right)\frac{\left(\Delta x\right)^{3}}{24} = \int_{\mathcal{X}}f^{n}\left(x, k\right)\textup{d}x.
\end{equation}

Finally, integrating Eq.~\eqref{eq:all_cell} with respect to $k$ in the domain $\mathcal{K}$ and using Eq.~\eqref{eq:gint}, we can readily obtain
\begin{equation}
\iint_{\mathcal{X}\times \mathcal{K}}  f^{n+1}\left(x, k\right)\textup{d}x \textup{d}k=\iint_{\mathcal{X}\times \mathcal{K}} f^{n}\left(x, k\right)\textup{d}x \textup{d}k,
\end{equation}
provided that the total inflow and outflow are in balance,
i.e., the total outflow cancels the total inflow
at any moment. While using the not-a-knot boundary condtions for piecewise cubic splines\cite{bk:Boor2001}, the Wigner function 
is able to cross the boundaries so transparently that
the total outflow often exceeds the total inflow in a small computational domain. 
Conseqeuntly, 
in order to reach the flow balance and thus the mass conservation,
a relatively large domain, allowing the desired Wigner function far away from the boundaries, must be used. More detailed discussion on this issue can be found in Section \ref{sec:result}.

The above approach to show mass conservation for the one-body situation can be straightforwardly extended to
the two-body situation by exploiting the fact that
the construction of two dimensional cubic splines can be performed through several one-dimensional splines (see Section \ref{sec:method:adv}). The details are neglected for saving space.

\subsection{Physical symmetry relation}
\label{sec:analysis:sym}

Proposition \ref{pro:sym} has shown that
the physical symmetry relation \eqref{symmetry_relation} is naturally embedded in the Wigner equation.
This section  will further show that
such physical symmetry relation is still preserved in the advective-spectral-mixed method. Actually, let $f^{n}$ be the numerical Wigner function at $t=n\Delta t$ calculated from Eqs.~\eqref{implicit_advective_approx} and \eqref{general_bicubic_sp},
and $\sigma$ be the isomorphism defined in Eq.~\eqref{sigma}.
Then we are able to show $\sigma f^{n}=f^{n}$ provided both
$V(x_{1}, x_{2}, t)=V(x_{2}, x_{1}, t)$ and
$\sigma f_{0}=f_{0}$ hold. The verification can be completed by induction on $n=0,1,\cdots$ as follows.

The initial data satisfying $\sigma f_{0}=f_{0}$
implies directly $\sigma f^{0}=f^{0}$ for $n=0$.
Suppose $\sigma f^{l}=f^{l}$ holds for $l=0,1,\cdots,n$.
Then we are going to show $\sigma f^{n+1}=f^{n+1}$, which
is reduced to verify
\begin{align}\label{equiv_relation}
\sigma f^{l}\left(x_{1}-v_{1}\Delta t, x_{2}-v_{2}\Delta t, k_{1}, k_{2}\right) &= f^{l}\left(x_{1}-v_{1}\Delta t, x_{2}-v_{2}\Delta t, k_{1}, k_{2}\right),\\
\sigma g^{l}\left(x_{1}-v_{1}\Delta t, x_{2}-v_{2}\Delta t, k_{1}, k_{2}\right) &= g^{l}\left(x_{1}-v_{1}\Delta t, x_{2}-v_{2}\Delta t, k_{1}, k_{2}\right),
\end{align}
for $l=n,n-1,\cdots,n+1-p$ by using the recursion approximation \eqref{implicit_advective_approx}.
Noting that the nonlocal pseudo-differential operator
$\Theta_V^T$ acts linearly on $f$ (see Eq.~\eqref{PDO}),
it is sufficient to verify Eq.~\eqref{equiv_relation} by the definition of $g$ in Eq.~\eqref{eq:g}.
For simplicity, we only consider the case $l = n$ and the others can be proved in the same way.

Both $f^{n}(x_{2}-v_{2}\Delta t, x_{1}-v_{1}\Delta t, k_{2}, k_{1})=\sigma f^{n}(x_{1}-v_{1}\Delta t, x_{2}-v_{2}\Delta t, k_{1}, k_{2})$ and
$f^{n}(x_{1}-v_{1}\Delta t, x_{2}-v_{2}\Delta t, k_{1}, k_{2})$ are calculated through the piecewise cubic spline interpolation \eqref{general_bicubic_sp} in the cell
$[x_{2, j}, x_{2, j+1}] \times [x_{1, i}, x_{1, i+1}]$ and
$[x_{1, i}, x_{1, i+1}] \times [x_{2, j}, x_{2, j+1}]$, respectively, where $x_{1}-v_{1}\Delta t \in [x_{1, i}, x_{1, i+1}], x_{2}-v_{2}\Delta t \in [x_{2, j}, x_{2, j+1}]$.
That is,
\begin{align}
f^{n}\left(x_{1}-v_{1} \Delta t, x_{2}-v_{2} \Delta t, k_{1}, k_{2}\right)&=\sum_{\nu=0}^{3} \sum_{\kappa=0}^{3}  \eta_{\nu\kappa} \beta_{1}^{\nu} \beta_{2}^{\kappa},\label{eq:fn1}\\
f^{n}\left(x_{2}-v_{2} \Delta t, x_{1}-v_{1} \Delta t, k_{2}, k_{1}\right)&=\sum_{\kappa=0}^{3} \sum_{\nu=0}^{3}  \tilde{\eta}_{\nu\kappa} \beta_{2}^{\nu} \beta_{1}^{\kappa}
=\sum_{\kappa=0}^{3} \sum_{\nu=0}^{3}  \tilde{\eta}_{\kappa\nu} \beta_{1}^{\nu}\beta_{2}^{\kappa},
\label{eq:fn2}
\end{align}
where $\beta_{1}=x_{1}-v_{1} \Delta t-x_{1, i}$,
$\beta_{2}=x_{2}-v_{2} \Delta t-x_{2, j}$,
and $\eta_{\nu \kappa}$,
$\tilde{\eta}_{\kappa \nu}$ are the interpolation coefficients. Thanks to the uniqueness of bicubic splines,
it can be easily obtained
\begin{equation}
\eta_{\nu \kappa}= \tilde{\eta}_{\kappa \nu }, \quad  \nu,\kappa=0, 1, 2, 3,
\end{equation}
because both interpolation points and not-a-knot boundary conditions are identical as well as
the function values on the interpolation points are the same due to the induction assumption.
Thus we obtain $\sigma f^n = f^n$ from Eqs.~\eqref{eq:fn1} and \eqref{eq:fn2} and the verification is finished.

\section{Numerical experiments}
\label{sec:result}

We implemented the advective-spectral-mixed method for
both one-body and two-body situations in one-dimensional space. In $\bm{k}$-space, we are able to take
the full advantage of fast Fourier transforms to improve the computational efficiency thanks to the Gauss-Chebyshev collocation points adopted in each element by calling the related subroutines in FFTPACK\cite{Swarztrauber1982},
and use the  subroutine BESSJY\cite{bk:PressTeukolskyVetterlingFlannery1992}
to calculate the spherical Bessel functions of the first kind requested by the oscillatory integral \eqref{osc_int}. In $\bm{x}$-space, the piecewise cubic splines \eqref{general_bicubic_sp} are referred to  the PSPLINE implementation --- a library of spline and Hermite cubic interpolation routines for 1d, 2d, and 3d datasets on rectilinear grids\cite{web:McCune2010}.
Since the calculations in $\bm{k}$- and $\bm{x}$-space are completely decoupled, a straightforward parallelization based on the multithread technology provided by OpenMP is further adopted to accelerate the simulations.

To visualize conveniently the two-body Wigner function in the computational domain
\begin{equation}
\Omega =  \mathcal{X}\times\mathcal{K},
\,\,\,
\mathcal{X}=\mathcal{X}_1\times\mathcal{X}_2,\,\,\,
\mathcal{K} = \mathcal{K}_1\times\mathcal{K}_2,
\end{equation}
we plot the reduced one-body Wigner function\cite{CancellieriBordoneJacoboni2007}
\begin{equation}\label{reduced_Wigner}
F(x, k, t) :=\iint_{\mathcal{X}_2\times\mathcal{K}_2} f(x, x_{2}, k, k_{2}, t) \textup{d} x_{2} \textup{d} k_{2}+\iint_{\mathcal{X}_1\times\mathcal{K}_1} f(x_{1}, x, k_{1}, k, t) \textup{d} x_{1} \textup{d} k_{1},
\end{equation}
which projects the two-body Wigner function onto the one-dimensional phase space. The numerical performance is evaluated by the $L^{2}$-error $\epsilon_{2}(t)$, the $L^{\infty}$-error $\epsilon_{\infty}(t)$,
the error for the physical symmetry relation  $\epsilon_{\textup{sym}}(t)$,
and the variation of total mass $\epsilon_{\textup{mass}}(t)$,
defined respectively as follows
\begin{align}
\epsilon_{2}(t) &=\left[\iint_{\Omega} \left(f^{\textup{ref}}\left(\bm{x},\bm{k},t\right)-f^{\textup{num}}\left(\bm{x},\bm{k},t\right)\right)^{2}\textup{d}\bm{x}\textup{d} \bm{k}\right]^{\frac{1}{2}},\label{eq:e2}\\
\epsilon_{\infty}(t) &=\max_{(\bm{x},\bm{k})\in\Omega}\left\{|f^{\textup{ref}}\left(\bm{x},\bm{k},t\right)-f^{\textup{num}}\left(\bm{x},\bm{k},t\right)|\right\}, \label{eq:ef}\\
\epsilon_{\textup{sym}}(t) &=\max_{(\bm{x},\bm{k})\in\Omega}\left\{\left|f^{\textup{num}}\left(x_{1}, x_{2}, k_{1}, k_{2}, t\right)-f^{\textup{num}}\left(x_{2}, x_{1}, k_{2}, k_{1}, t\right)\right|\right\}, \label{eq:esym}\\
\epsilon_{\textup{mass}}(t) &=\iint_{\Omega} f^{\textup{num}}\left(\bm{x},\bm{k},t\right)\textup{d}\bm{x}\textup{d} \bm{k}-\iint_{\Omega} f^{\textup{ref}}\left(\bm{x},\bm{k},t=0\right)\textup{d}\bm{x}\textup{d} \bm{k},\label{eq:emass} 
\end{align}
where $f^{\textup{ref}}$ and $f^{\textup{num}}$ denote the reference and numerical solution, respectively.
According to the numerical analysis shown in Section \ref{sec:analysis}, both $\epsilon_{2}$ and $\epsilon_{\infty}$,
depending on the mesh size and the truncation order,
should reflect the third-order convergence against the spatial spacing and the time step when a high spectral accuracy is reached in $\bm{k}$-space; $\epsilon_{\textup{sym}}$ must around the matching resolution provided both initial data and external potential are symmetric; the vanishing of $\epsilon_{\textup{mass}}$ relies on both Eq.~\eqref{eq:gint} and boundary conditions. Actually, in order to maintain an almost constant mass, on one hand, we should make sure the simulated quantum system be far away from the boundaries to guarantee the total outflow cancels the total inflow,
due to the not-a-knot boundary conditions adopted in the current implementation. On the other hand, Eq.~\eqref{eq:gint} holds only in the sense of spectral approximation and can be measured by $\epsilon_{G}(t)$ as follows
\begin{equation}
\epsilon_{G}(t) =
\max_{\bm{x}\in\mathcal{X}}\left\{\left|G^{\textup{num}}\left(\bm{x}, t\right)\right|\right\},
\label{eq:eG}
\end{equation}
where $G^{\textup{num}}(\bm{x},t)$ is corresponding numerical approximation for $G(\bm{x},t)$ defined in Eq.~\eqref{eq:gint},
because the spectral element method is employed in $\bm{k}$-direction and all related $\bm{k}$-integrals in Eq.~\eqref{eq:gint} are done analytically with the help of the spectral expansion \eqref{spectral_representation}. Hence, we expect a very tiny $\epsilon_{\textup{mass}}$ once enough collocation points in $\bm{k}$-space are placed in a sufficiently large computational domain as we will do in the following numerical simulations.

\begin{figure}[h]
    \centering
    \subfigure[Fermions.]{\label{fig:fermion}
    \includegraphics[width=2.9in,height=2.1in]{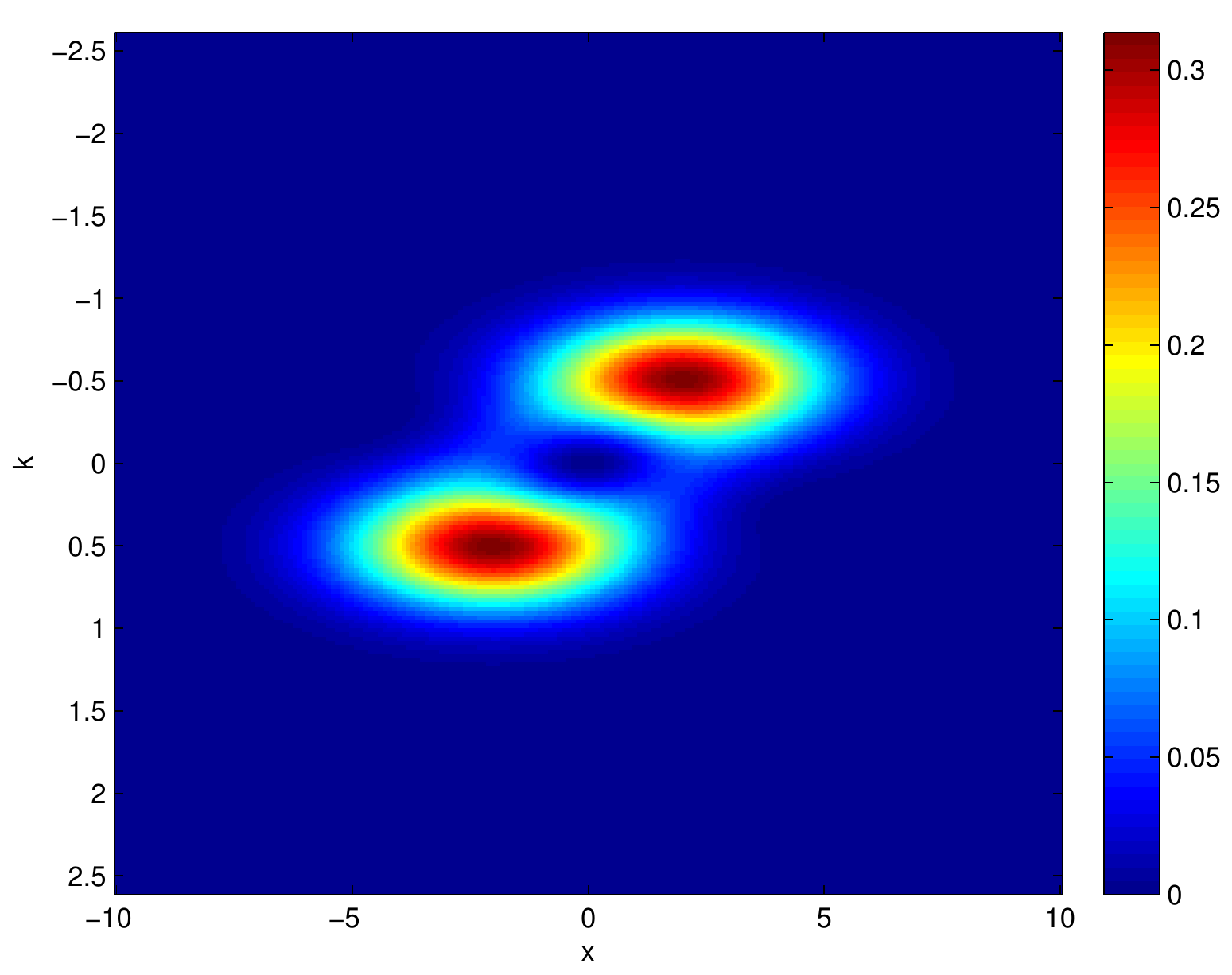}}
    \subfigure[Bosons.]{
    \includegraphics[width=2.9in,height=2.1in]{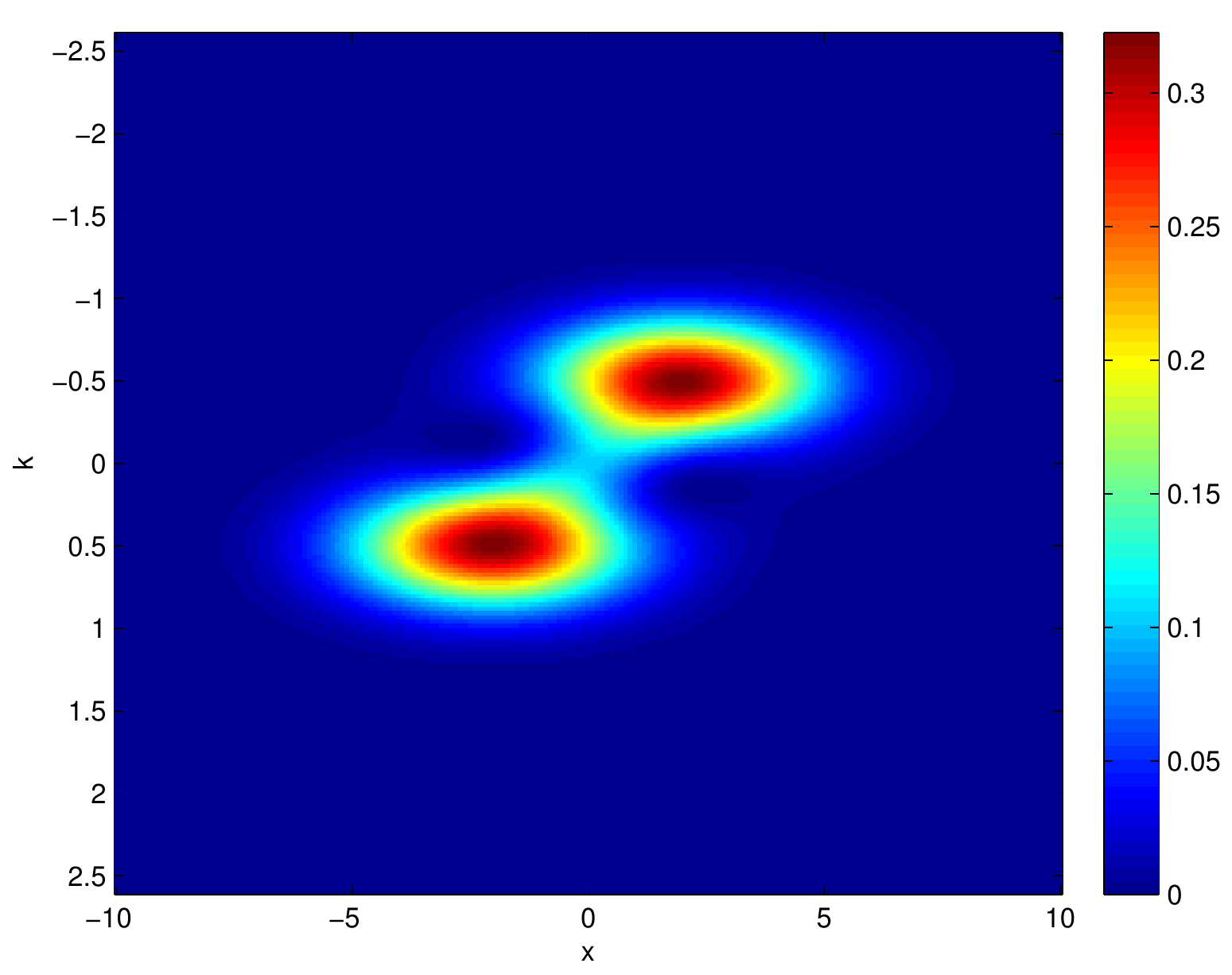}}
     \caption{\small The reduced Wigner function for two fermions and two bosons.}
     \label{fig_1}
\end{figure}

Throughout the simulations,
the atomic units $\hbar=m=e=1$
are adopted if not specified.
The initial data are constructed from Gaussian wave packets in quantum mechanics the wavefunction of which reads
\begin{equation}
\psi_{i}\left(x_i\right)=\frac{1}{\sqrt{a_{i}\sqrt{2\pi}}}\exp\left[-\frac{\left(x_i-x_{i}^{0}\right)^{2}}{4 a_i^{2}}+\mi k^{0}_{i}(x_i-x_i^0)\right],
\,\,\, i=1,2,
\end{equation}
where $x_{i}^{0}$ is the center of the wave at $t=0$,
$a_i$ is the minimum position spread,
and $k_{i}^{0}$ is the initial constant wavenumber.
The Wigner function for such Gaussian wave packet still has the Gaussian profile and its formulation is\cite{th:Biegel1997,ShaoLuCai2011,ShaoSellier2015}
\begin{equation}
\label{1D_Gaussian_wp}
f^{\textup{1D}}_{i,0}\left(x_{i},k_{i}\right)=\frac{1}{\pi} \exp\left[-\frac{\left(x_{i}-x_{i}^{0}\right)^{2}}{2a_i^{2}}-2a_i^{2}\left(k_{i}-k^{0}_{i}\right)^{2}\right], \,\,\, i=1,2.
\end{equation}
When two particles are uncorrelated,
the wave function satisfies $\psi\left(x_{1}, x_{2}\right)=\psi_1\left(x_{1}\right)\psi_2\left(x_{2}\right)$, and then the Wigner function is a simple product of two Gaussian wave packets, too. Namely, $f_{0}\left(x_{1}, x_{2}, k_{1}, k_{2}\right)=f_{1,0}^{\textup{1D}}\left(x_{1}, k_{1}\right)f_{2, 0}^{\textup{1D}}\left(x_{2}, k_{2}\right)$.
However, in order to treat a system composed of two indistinguishable fermions, we need to take into account the antisymmetric nature of the wave function $\psi\left(x_{1}, x_{2}\right)$. Such antisymmetric relation is usually fulfilled
via the Slater determinant as follows
\begin{equation}
\begin{split}
\psi\left(x_{1}, x_{2}\right)=&\frac{1}{\sqrt{2}}\begin{vmatrix} \psi_{1}\left(x_{1}\right) &   \psi_{2}\left(x_{1}\right)\\
 \psi_{1}\left(x_{2}\right)&  \psi_{2}\left(x_{2}\right)\end{vmatrix}=\frac{1}{\sqrt{2}}\psi_{1}\left(x_{1}\right)\psi_{2}\left(x_{2}\right)-\frac{1}{\sqrt{2}}\psi_{2}\left(x_{1}\right)\psi_{1}\left(x_{2}\right),
 \end{split}
\end{equation}
and then the corresponding Wigner function reads
\begin{equation}\label{asym_Wigner_init}
\begin{split}
&f_{0}^{\text{fermion}}\left(x_{1}, x_{2}, k_{1}, k_{2}\right)\\
=&\frac{1}{2\pi^{2}}\exp\left[-\frac{\left(x_{1}-x_{1}^{0}\right)^{2}}{2a^{2}}-\frac{\left(x_{2}-x_{2}^{0}\right)^{2}}{2a^{2}}-2a^{2}\left(k_{1}-k_{1}^{0}\right)^{2}-2a^{2}\left(k_{2}-k_{2}^{0}\right)^{2}\right]\\
+&\frac{1}{2\pi^{2}}\exp\left[-\frac{\left(x_{1}-x_{2}^{0}\right)^{2}}{2a^{2}}-\frac{\left(x_{2}-x_{1}^{0}\right)^{2}}{2a^{2}}-2a^{2}\left(k_{1}-k_{0}\right)^{2}-2a^{2}\left(k_{2}-k_{0}\right)^{2}\right]\\
 -&\frac{1}{\pi^{2}}\exp\left[-\frac{\left(x_{1}-x_{1}^{0}\right)^{2}+\left(x_{1}-x_{2}^{0}\right)^{2}+\left(x_{2}-x_{1}^{0}\right)^{2}+\left(x_{2}-x_{2}^{0}\right)^{2}}{4a^{2}}\right] \\
  \times& \exp\left[\frac{\left(x_{1}^{0}-x_{2}^{0}\right)^{2}}{4 a^{2}}-2a^{2}\left(k_{1}-\frac{k_{1}^{0}+k_{2}^{0}}{2}\right)^{2}-2a^{2}\left(k_{2}-\frac{k_{1}^{0}+k_{2}^{0}}{2}\right)^{2}\right] \\
  \times &\cos\left[\left(x_{1}^{0}-x_{2}^{0}\right)\left(k_{1}-k_{2}\right)-\left(k_{1}^{0}-k_{2}^{0}\right)\left(x_{1}-x_{2}\right)\right],
\end{split}
\end{equation}
where $a :=a_1\equiv a_2$. Similarly, we can construct the wave function for
a system composed of two indistinguishable bosons
\begin{equation}
\psi\left(x_{1}, x_{2}\right)=\frac{1}{\sqrt{2}}\psi_{1}\left(x_{1}\right)\psi_{2}\left(x_{2}\right)
+\frac{1}{\sqrt{2}}\psi_{2}\left(x_{1}\right)\psi_{1}\left(x_{2}\right),
\end{equation}
and obtain the corresponding Wigner function by replacing the factor $-1/\pi^{2}$ in the fourth line of Eq.~\eqref{asym_Wigner_init} with $1/\pi^{2}$.
Fig.~\ref{fig_1} plots the reduced Wigner function $F\left(x, k, t\right)$ (see Eq.~\eqref{reduced_Wigner})
for two fermions and two bosons by setting $x_{1}^{0}=-2$, $x_{2}^{0}=2$, $k_{1}^{0}=0.5$, $k_{2}^{0}=-0.5$, and $a=2$.
The exchange-correlation hole (or called the Fermi hole) at the centre $((x_{1}^{0}+x_{2}^{0})/2, (k_{1}^{0}+k_{2}^{0})/2)$
due to the Pauli exclusion principle, preventing the fermions  from occupying the same quantum state (position and momentum), is clearly shown there for the Fermi system.
On the contrary, such hole structure is not visible for the Boson system.

In the subsequent numerical simulations,
except for the first simulation in Section \ref{sec:result:barrier},
we will adopt the symmetric domains with respect to the origin point
\begin{equation}
\mathcal{X}_1 \equiv \mathcal{X}_2 = [-L_x, L_x], \,\,\, \mathcal{K}_1 \equiv \mathcal{K}_2 = [-L_k, L_k],
\end{equation}
as well as the same mesh for each particle with the spatial spacing $\Delta x$. According to the constraint \eqref{conservation_condition}, we can easily obtain the $\bm{y}$-spacing
\begin{equation}
\Delta y := \Delta y_1 \equiv \Delta y_2 =  \pi/L_k,
\end{equation}
and then the truncated Wigner potential \eqref{Poisson_summation_truncated} is calculated by further restricting both $y_\nu$ and $y_\mu$ in a symmetric domain
\begin{equation}\label{eq:ydom}
\mathcal{Y}_1 \equiv \mathcal{Y}_2 = [-L_y,L_y].
\end{equation}

\subsection{Free advection of two fermions}
\label{sec:result:free}

\begin{figure}[h]
    \centering
    \subfigure[$t=2$.]{
      \includegraphics[width=2.9in,height=2.1in]{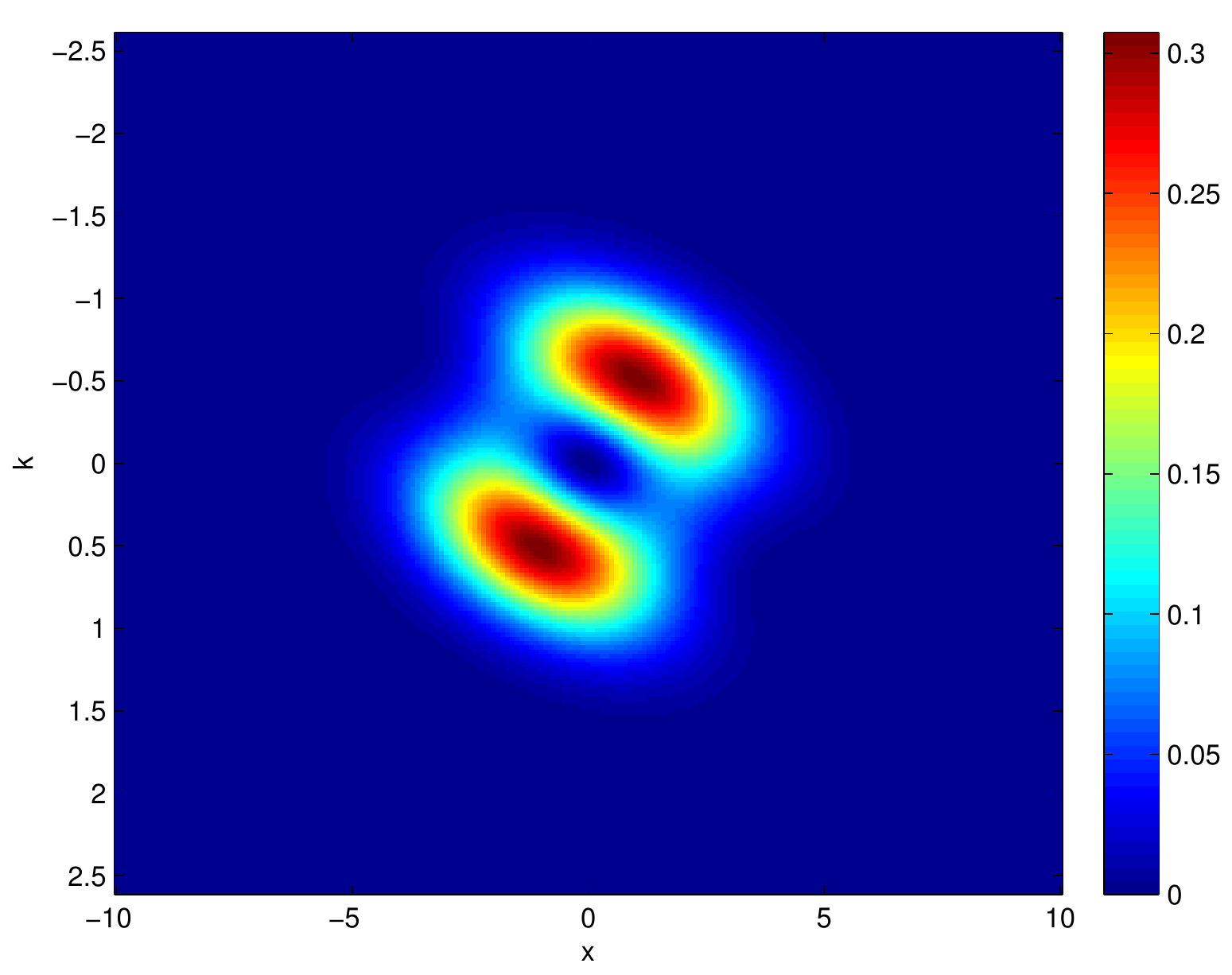}}
    \subfigure[$t=4$.]{
    \includegraphics[width=2.9in,height=2.1in]{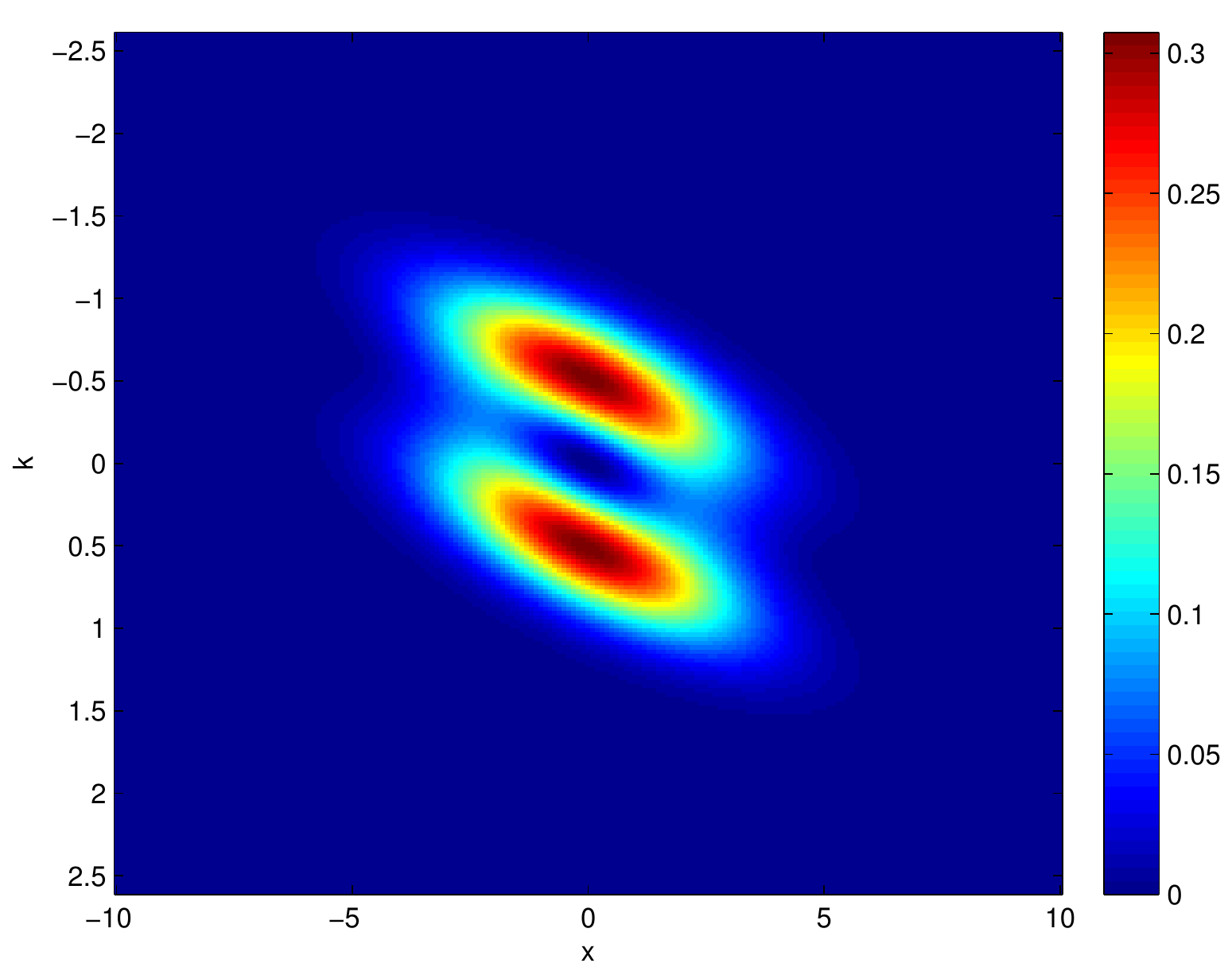}}\\
    \centering
    \subfigure[$t=6$.]{
    \includegraphics[width=2.9in,height=2.1in]{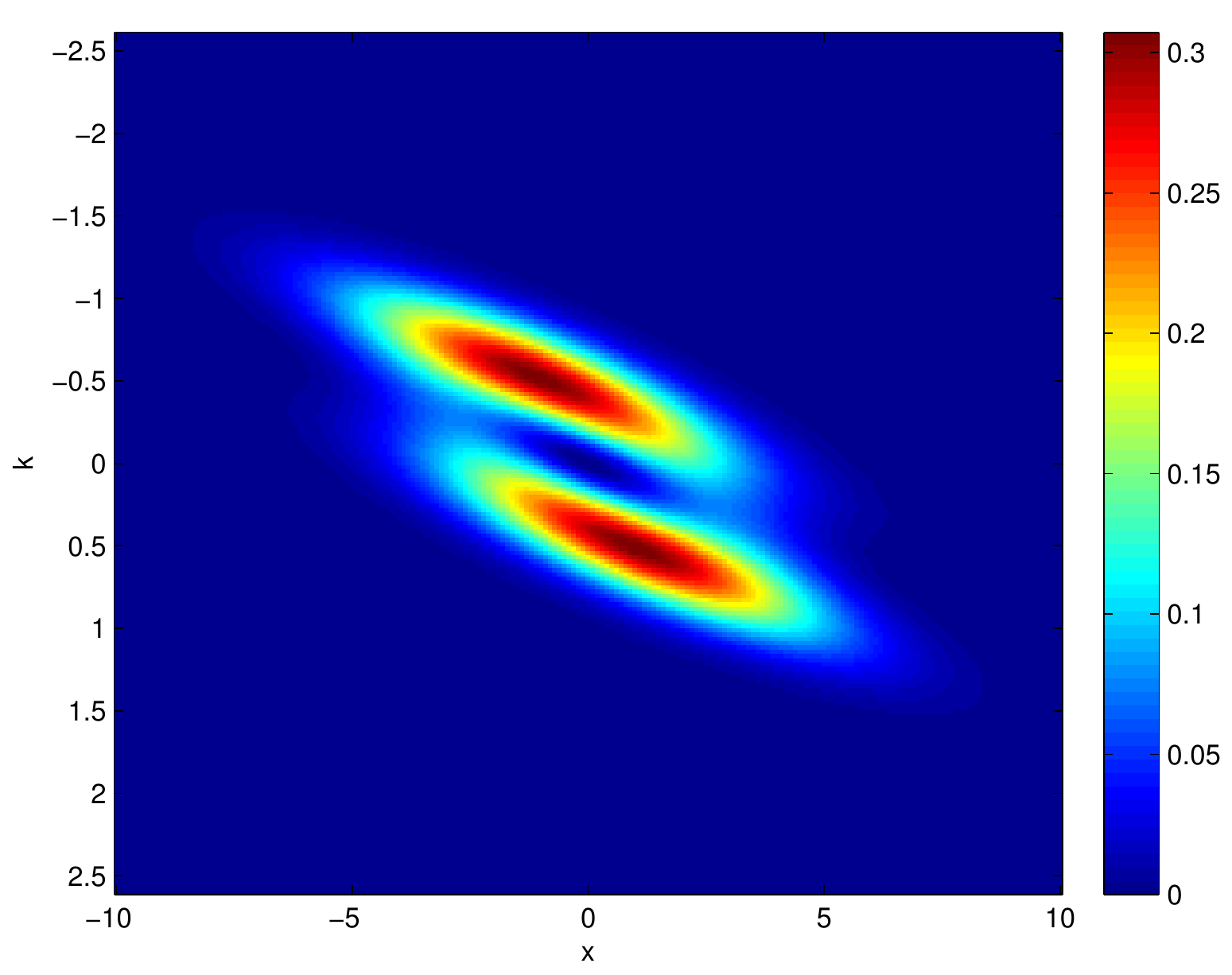}}
    \subfigure[$t=8$.]{
    \includegraphics[width=2.9in,height=2.1in]{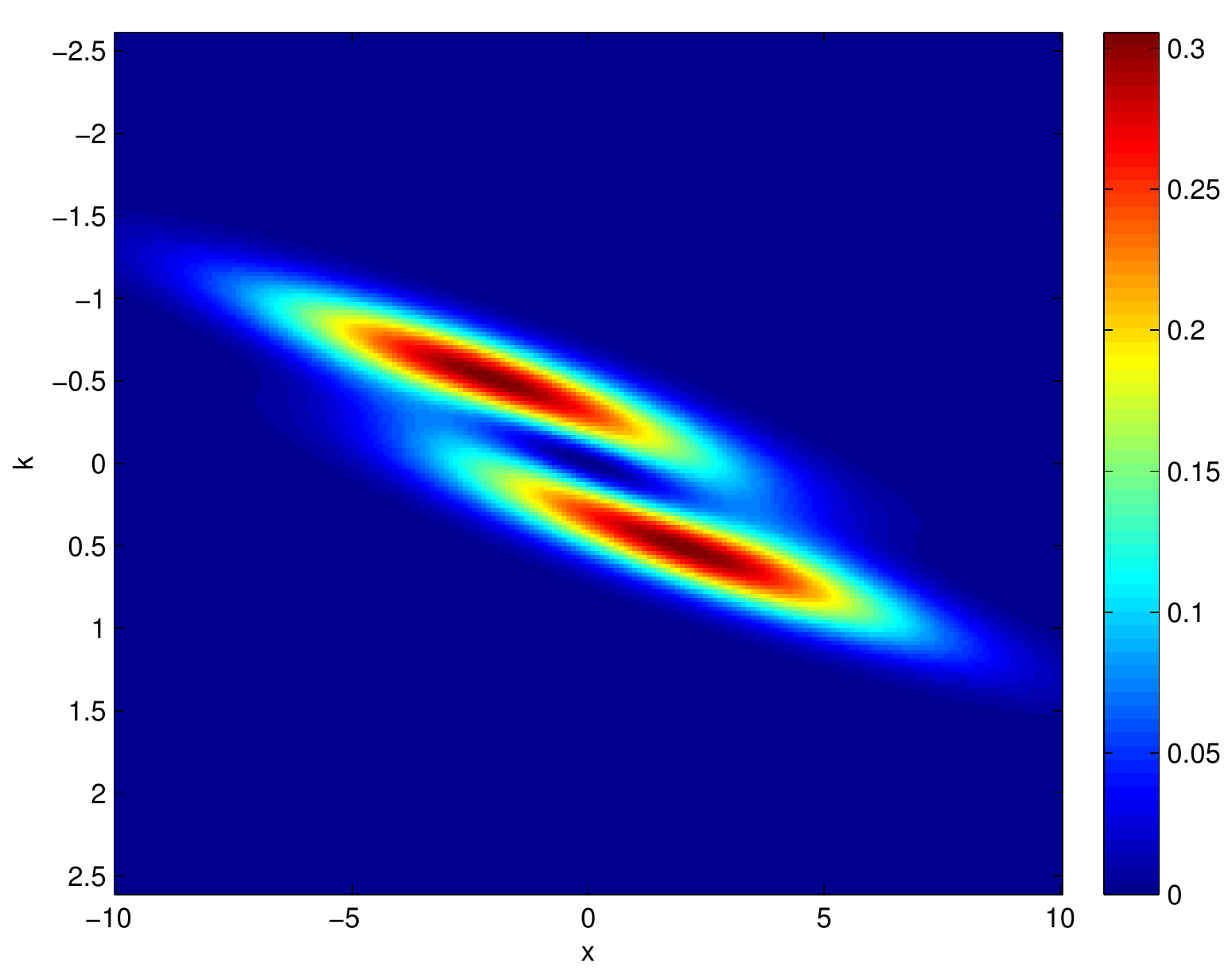}}
     \caption{\small The free advection of two fermions: The reduced Wigner function at different time instants.}
     \label{fig_4}
\end{figure}

\begin{figure}[h]
    \centering
        \subfigure[$L^2$-error history.]{
    \label{fig3:linear}
    \includegraphics[width=2.9in,height=2.1in]{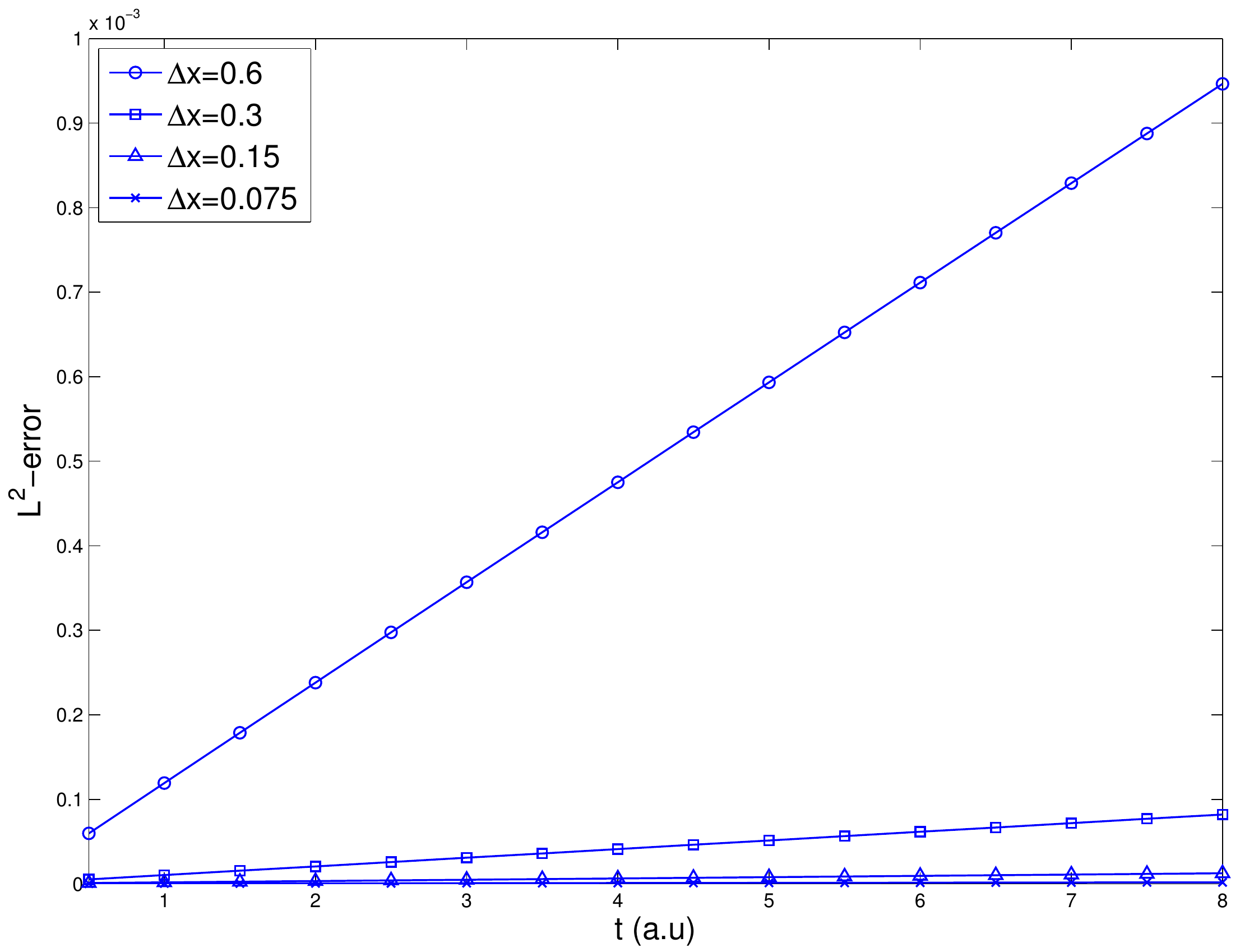}}
    \subfigure[Convergence order.]{
    \label{fig3:order}
    \includegraphics[width=2.9in,height=2.1in]{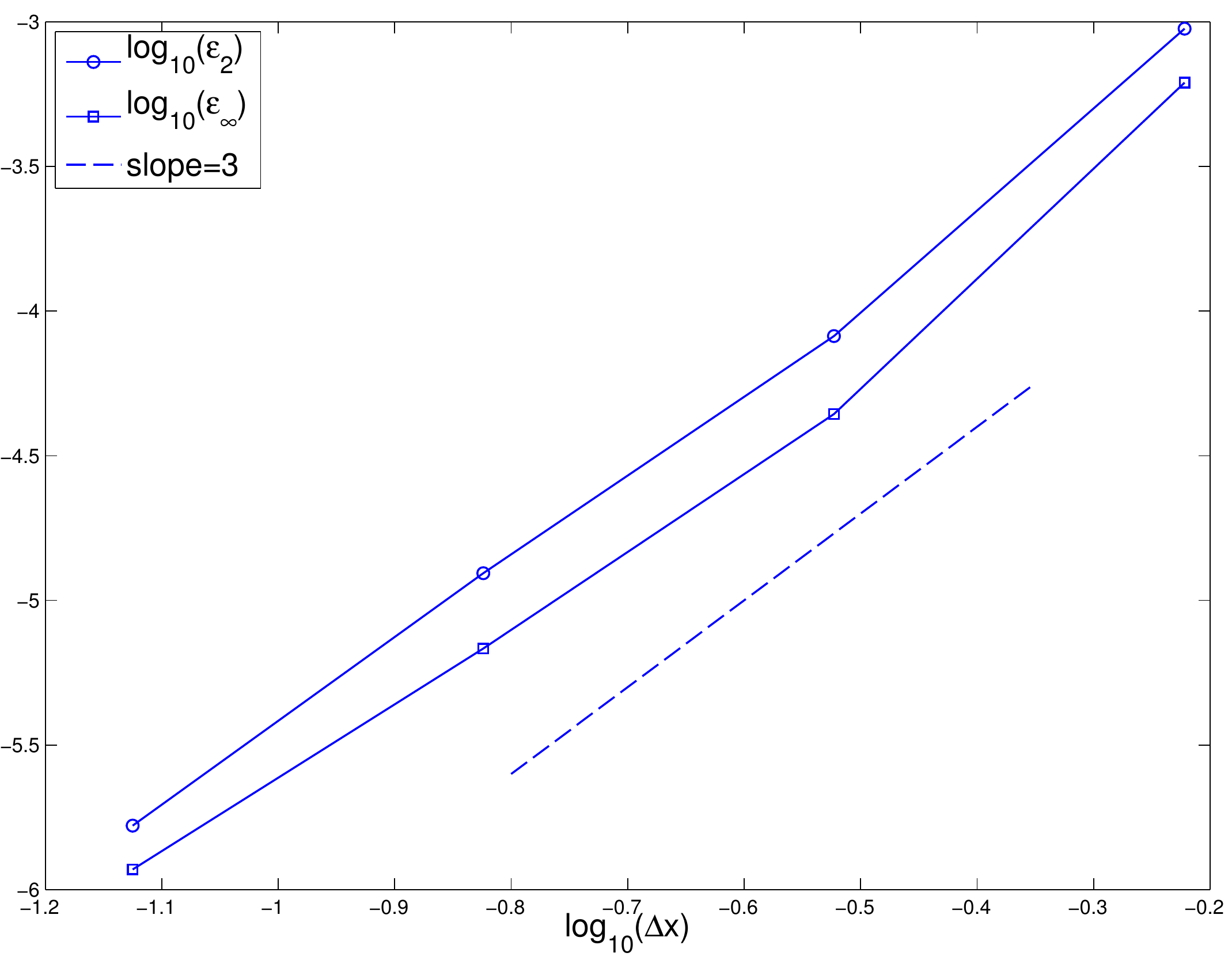}}
     \caption{\small The free advection of two fermions:
Numerical errors for different spatial spacing. }\label{fig_3}
\end{figure}

To verify the accuracy of the piecewise cubic spline interpolations \eqref{general_bicubic_sp}, the first experiment conducts the advection of two correlated fermions in the free space, i.e., the external potential $V(\bm{x})\equiv 0$.
In this situation, the Wigner equation \eqref{eq.Wigner}
has an analytical solution as follows
\begin{equation}
f\left(x_{1}, x_{2}, k_{1}, k_{2}, t\right)=f_{0}^{\text{fermion}}\left(x_{1}-{\hbar k_{1} t}/{m}, x_{2}-{\hbar k_{2} t}/{m}, k_{1}, k_{2}\right),
\end{equation}
the reduced Wigner functions of which are shown in Fig.~\ref{fig_4},
and Eqs.~\eqref{explicit_Euler}-\eqref{explicit_3_step} all reduce to the ``upwind" scheme.
It can be clearly shown there that the exchange-correlation hole
does exist around the central area during the evolution, but it becomes more and more narrow due to the dispersion
when two Gaussian waves move away.

We set $L_x = 15$, $L_k=5\pi/6$, $\Delta t=0.1$, $\Delta x=0.6,0.3,0.15,0.075$, and the end time $T=8$.
The $\bm{k}$-domain is divided into $4\times 4$ cells and each cell contained $24 \times 24$ Gauss-Chebyshev collocation points,
and the initial data is shown in Fig.~\ref{fig:fermion}.
Fig.~\ref{fig_3} plots
both $L^{2}$- and $L^{\infty}$-errors at the final time.
The error growth with time shows perfectly a linear dependence, for example, see the $L^{2}$-error history
in Fig.~\ref{fig3:linear}, and the error curves (in the logarithm scale) against the spatial spacing in Fig.~\ref{fig3:order} attains the theoretical convergence order of $3$ as we expected.

During the whole simulation, for above four spatial spacings, the errors for the physical symmetry relation  $\epsilon_{\textup{sym}}$ are always around the machine epsilon,
and the variations of total particle number $|\epsilon_{\textup{mass}}|$ are no more than $5.0990\times 10^{-6}$, both of which agree well with the theoretical results presented in Section \ref{sec:analysis}.
Actually, $\epsilon_{\textup{mass}}$ is still effected by the boundaries since Eq.~\eqref{eq:gint} holds trivially in the free space. As predicted by the numerical analysis in Section \ref{sec:analysis:mass},
enlarging the computational domain to cut down
the boundary effect will  further reduce $\epsilon_{\textup{mass}}$ to the machine epsilon,
for example, $|\epsilon_{\textup{mass}}|$ becomes no more than $1.9984\times 10^{-15}$ even for $\Delta x = 0.6$ when resetting $L_x=45$.

\subsection{Gaussian barrier scattering}
\label{sec:result:barrier}

\begin{figure}[h]
    \centering
    \subfigure[Errors vs. $\log_{10}\Delta x$ ($\Delta t=0.0125$).]{\label{fig_5_a}
    \includegraphics[width=2.9in,height=2.1in]{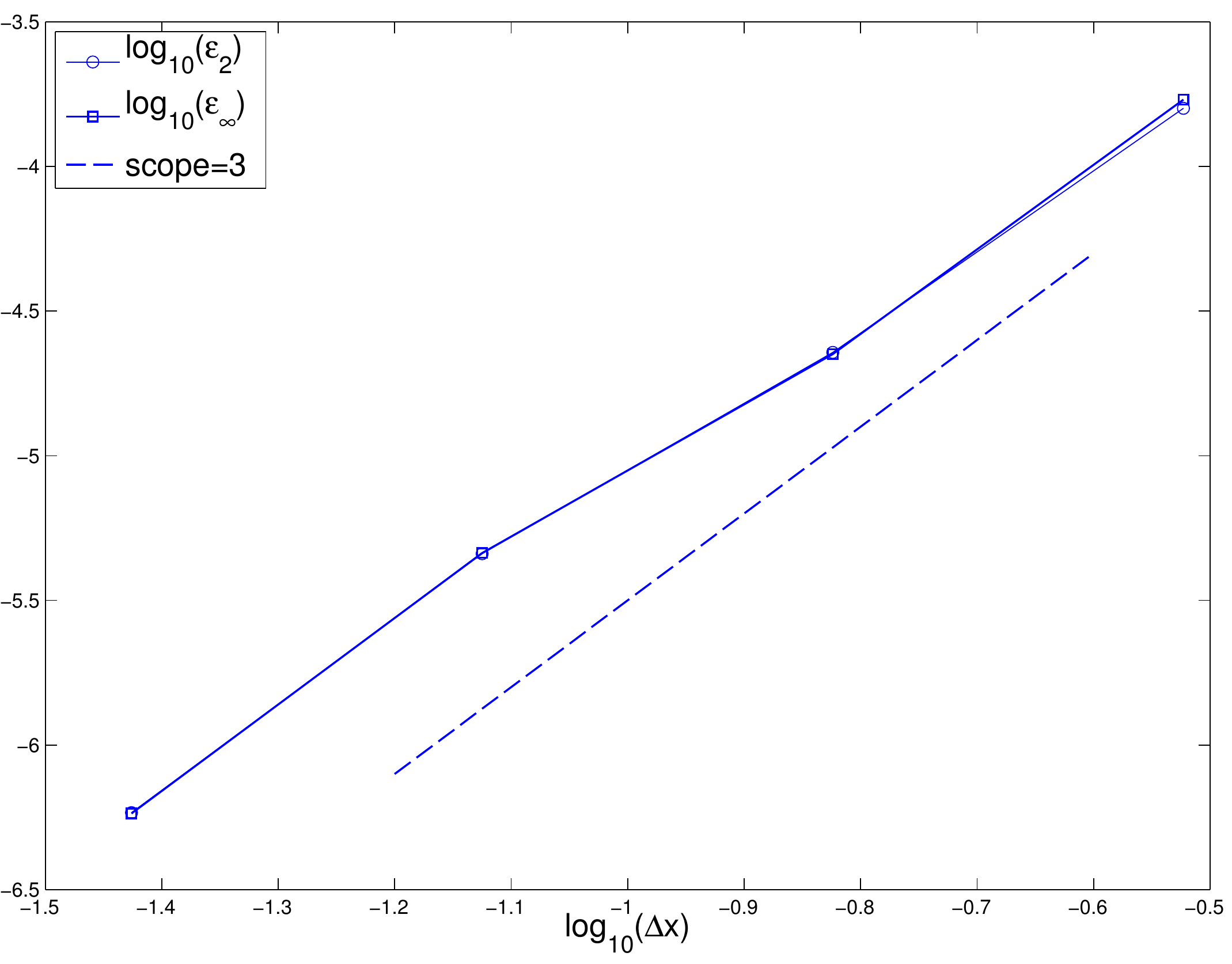}}
     \subfigure[Errors  vs. $\log_{10}\Delta t$ ($\Delta x=0.0375$).]{\label{fig_5_b}
    \includegraphics[width=2.9in,height=2.1in]{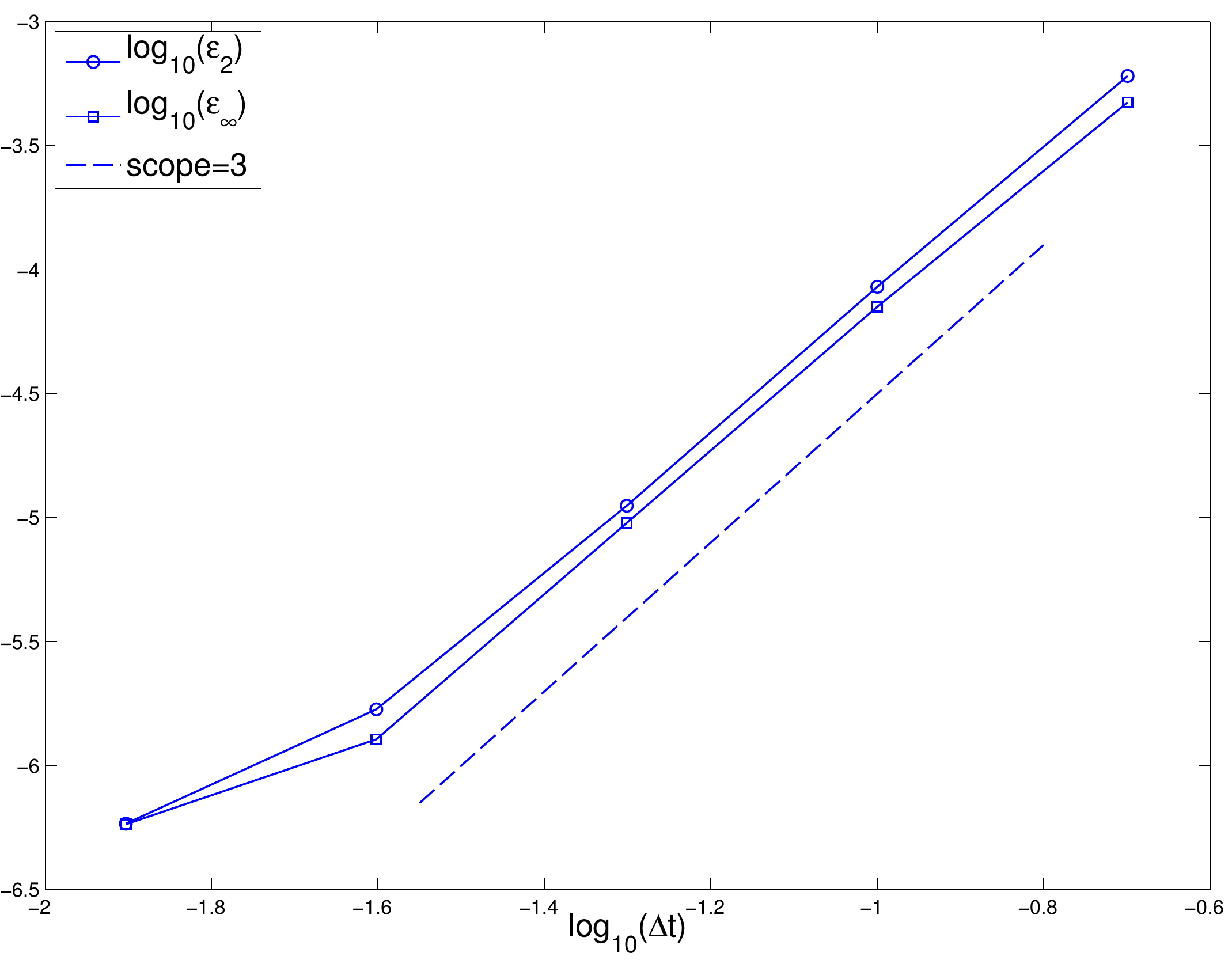}}
     \caption{\small The Gaussian barrier scattering for one particle: The convergence order with respect to the spatial spacing $\Delta x$ (fs) and the time step $\Delta t$ (nm).}
      \label{fig_5}
\end{figure}

To further validate the overall performance of the advective-spectral-mixed method, a comparison study with the cell average SEM for the one-body situation\cite{ShaoLuCai2011} is now conducted in simulating the
Gaussian barrier scattering for a Gaussian wave packet\cite{ShaoLuCai2011,ShaoSellier2015}.
The numerical solution calculated by the cell average SEM will be regarded as the reference solution in Eqs.~\eqref{eq:e2}-\eqref{eq:emass}.
A similar idea has also been recently used to study the accuracy of the signed particle MCM\cite{ShaoSellier2015},
and the same settings are adopted in this work.
Namely, the initial wave is given by Eq.~\eqref{1D_Gaussian_wp}
and the Gaussian barrier reads
\begin{equation}\label{eq:gpot1}
V\left(x\right)=H\exp\left[-\frac{\left(x-x_{B}\right)^{2}}{2\omega^{2}}\right].
\end{equation}
The parameters are: $\Omega=[0~\textup{nm}, 60~\textup{nm}] \times [-{5\pi}/{3}~\textup{nm}^{-1}, {5\pi}/{3}~\textup{nm}^{-1}]$, $x_{0}=-15~\textup{nm}$, $k_{0}=0.7~\textup{nm}^{-1}$, $a=2.825~\textup{nm}$, $H=0.3~\textup{eV}$, $x_{B}=-15~\textup{nm}$, $\omega=1~\textup{nm}$, the reduced Planck constant $\hbar=0.658211899~\textup{eV} \cdot \textup{fs}$, the effective mass $m=0.0665m_{e}$, the stationary electron mass $m_{e}=5.68562966~\textup{eV} \cdot \textup{fs}^{2} \cdot \textup{nm}^{-2}$, and the final time $T=20~\textup{fs}$.

In the comparison study,
the same $\bm{k}$- and $\bm{y}$- discretizations will be adopted, i.e.,
$\mathcal{K}$ is divided into $20$ elements and each element contains $30$ Gauss-Chebyshev points,
and $L_y=78~\textup{nm}$ which is determined by exploiting the sparse structure of $D_V(x,y)$ defined in Eq.~\eqref{Dv} (more details can be found in \cite{ShaoLuCai2011}).
For the SEM reference solution,
the $\bm{x}$-domain is divided into $10$ elements and each element contained $30$ Gauss-Lobatto points with the time step $\Delta t=0.002~\textup{fs}$.
The convergence of SEM has been thoroughly studied in \cite{ShaoLuCai2011} and the interaction dynamics is clearly shown in \cite{ShaoSellier2015}. Here we only focus on evaluating the convergence of the advective-spectral-mixed method against the grid spacing as well as the time stepping.
Table~\ref{Table_1} lists both $L^{\infty}$-error $\epsilon_{\infty}(t)$ and $L^{2}$-error $\epsilon_{2}(t)$ at the final time for different spatial spacing $\Delta x$ and time step $\Delta t$. In contrast to the strict CFL restriction in high-order Runge-Kutta time evolutions\cite{ShaoLuCai2011},
we can see there that larger time steps are now allowed,
for example, $\Delta t=0.2~\textup{fs}$ coupled with $\Delta x= 0.0375~\textup{nm}$ leads to errors no more than $6.0455\times 10^{-4}$.  Fig.~\ref{fig_5} further plots the errors with respect to $\Delta x$ and $\Delta t$ in logarithm scale.
We find there that the convergence order with respect to both spatial spacing and time step coincides very well with the theoretical prediction, i.e., the third-order accuracy, as mentioned in Section \ref{sec:analysis}.

\begin{table}[h]
 \centering
 \caption{\small The Gaussian barrier scattering for one particle: The $L^{\infty}$-error $\epsilon_{\infty}(t)$ and $L^{2}$-error $\epsilon_{2}(t)$ at $t=20~\textup{fs}$ for different spatial spacing $\Delta x$ (nm) and time step $\Delta t$ (fs). }
\label{Table_1}
 \begin{tabular}{cccc}
  \toprule
  \toprule
  $\Delta t$ & $\Delta x$ & $\epsilon_{\infty}(20)$ & $\epsilon_{2}(20)$  \\
  \midrule
  0.0125 & 0.0375 &  $5.8027\times 10^{-7}$ & $5.8365\times 10^{-7}$ \\
  0.0125 &  0.075 &  $4.6083\times 10^{-6}$ & $4.5847\times 10^{-6}$ \\
  0.0125 & 0.15 &  $2.2467\times 10^{-5}$ & $2.2741\times 10^{-5}$ \\
  0.0125 & 0.3 &  $1.7022\times 10^{-4}$ & $1.5876\times 10^{-4}$ \\
  0.025 & 0.0375 &  $1.2782\times 10^{-6}$ & $1.6885\times 10^{-6}$ \\
  0.025 &  0.075 &  $4.6083\times 10^{-6}$ & $4.5847\times 10^{-6}$ \\
  0.025 & 0.15 &  $2.2467\times 10^{-5}$ & $2.2741\times 10^{-5}$ \\
  0.025 &  0.3 &  $1.8849 \times 10^{-4}$ & $1.7686\times 10^{-4}$ \\
  0.05 & 0.0375 &  $9.5379\times 10^{-6}$ & $1.1187\times 10^{-5}$ \\
  0.05 &  0.075 &  $1.0055\times 10^{-5}$ & $1.1368\times 10^{-5}$ \\
  0.05 &  0.15 &  $3.7482\times 10^{-5}$ & $3.7395\times 10^{-5}$ \\
  0.05 & 0.3 &  $2.2980\times 10^{-4}$ & $2.2344\times 10^{-4}$ \\
  0.1  & 0.0375 &  $7.0873\times 10^{-5}$ & $8.5382\times 10^{-5}$ \\
  0.1 & 0.075 &  $7.1275\times 10^{-5}$ & $8.6693\times 10^{-5}$ \\
  0.1 & 0.15 &  $7.8537\times 10^{-5}$ & $1.0461\times 10^{-4}$ \\
  0.1   & 0.3 &  $2.9207\times 10^{-5}$ & $3.2452\times 10^{-4}$ \\
  0.2 & 0.0375 &  $4.7348\times 10^{-4}$ & $ 6.0455\times 10^{-4}$ \\
  0.2  & 0.075 &  $4.7370\times 10^{-4}$ & $6.0516\times 10^{-4}$ \\
  0.2 & 0.15 &  $4.7806\times 10^{-4}$ & $6.1537\times 10^{-4}$ \\
  0.2 & 0.3 &  $6.2856\times 10^{-4}$ & $7.7465\times 10^{-4}$ \\
  \bottomrule
  \bottomrule
  \end{tabular}
\end{table}

Such Gaussian barrier scattering can be readily extended to the two-body situation.
For instance,
we consider two uncorrelated Gaussian particles interacting with a Gaussian barrier,
\begin{equation}\label{eq:barrier}
V\left(x_{1}, x_{2}\right)=H_{1}\exp\left[-\frac{(x_{1}-x_{1\textup{B}})^{2}}{2}\right]+H_{2}\exp\left[-\frac{(x_{2}-x_{2\textup{B}})^{2}}{2}\right].
\end{equation}
Now we shift to the atomic units.
The initial Gaussian wave is $f_{0}(x_{1}, x_{2}, k_{1}, k_{2})=f_{1,0}^{\textup{1D}}\left(x_{1}, k_{1}\right)f_{2,0}^{\textup{1D}}\left(x_{2}, k_{2}\right)$ with $x_{1}^{0}=-12, x_{2}^{0}=-4, k_{1}^{0}=k_{2}^{0}=0.5, a_1=a_2=\sqrt{2}$.
That is, initially, those two wave packets have the same kinetic energy and moved independently at the same direction. The heights of two barriers are chosen as $H_{1}=0$ and $H_{2}=1$ with $x_{1\textup{B}}=0$ and $x_{2\textup{B}}=0$.
The averaged kinetic energy of each particle is about $E_{0}= (\hbar k^{0}_{i})^{2}/2m = 0.125$ and much lower than the barrier height. However, the barrier is set only to forbid the second particle to get through and has no influence on the first particle. Other parameters are set to be: $L_x=20$, $L_k={5\pi}/{6}$, $L_y=90$,
$\Delta t=0.05$, $\Delta x=0.2$, $T=15$.
The $\bm{k}$-domain is divided into $4 \times 4$ elements and each element contains $16 \times 16$ Gauss-Chebyshev collocation points.
The interaction dynamics is shown in Fig.~\ref{fig_6}.
We can observe there that, the second wave packet is almost completely reflected back, while the first one travels transparently through the barrier located at the central area.
This observation coincides exactly with our expectation and demonstrates clearly the accuracy of the method in some sense.

\begin{figure}
    \centering
    \subfigure[$t=0$.]{
    \includegraphics[width=2.9in,height=2.1in]{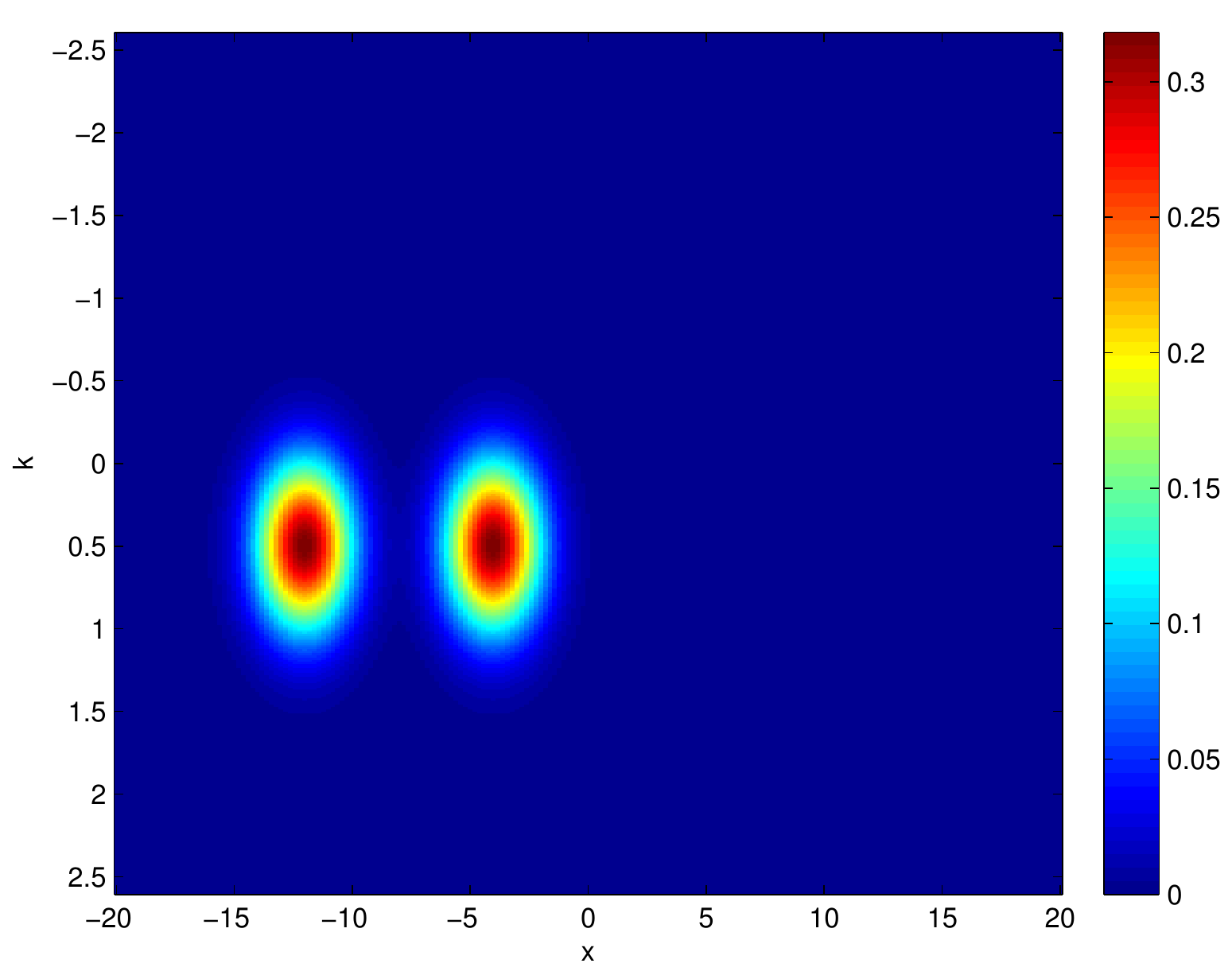}}
    \subfigure[$t=3$.]{
    \includegraphics[width=2.9in,height=2.1in]{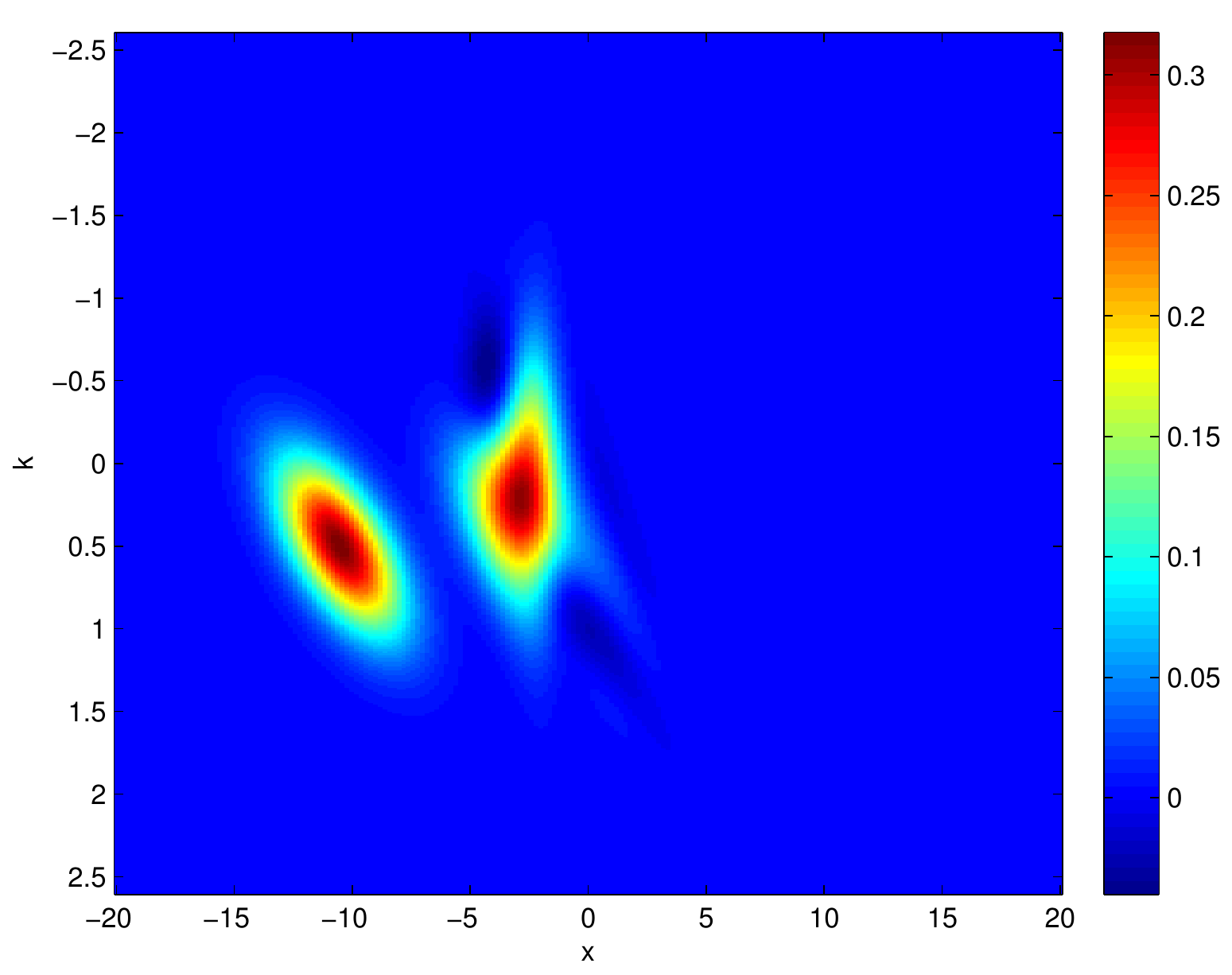}}
        \\
    \centering
    \subfigure[$t=6$.]{
    \includegraphics[width=2.9in,height=2.1in]{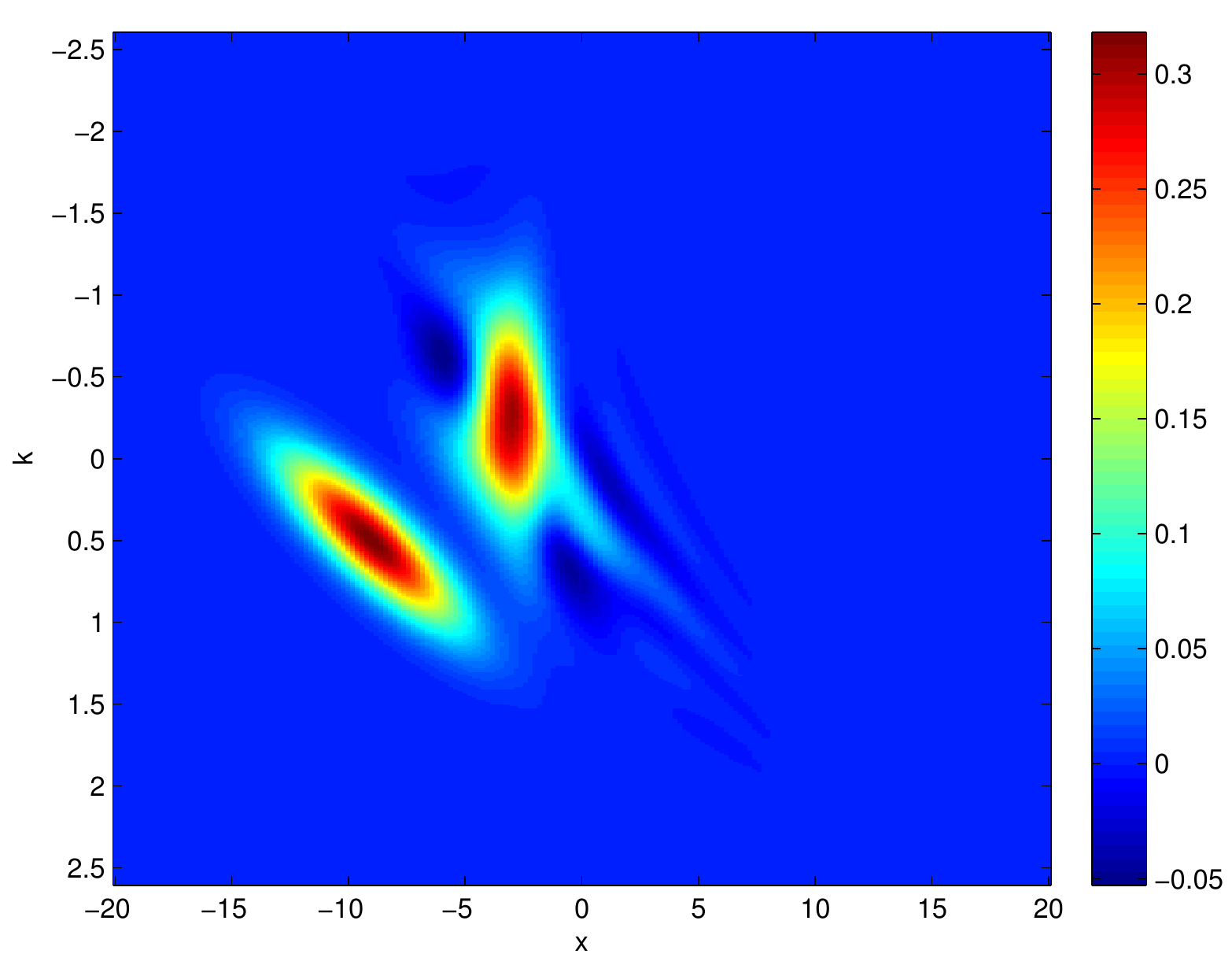}}
    \subfigure[$t=9$.]{
    \includegraphics[width=2.9in,height=2.1in]{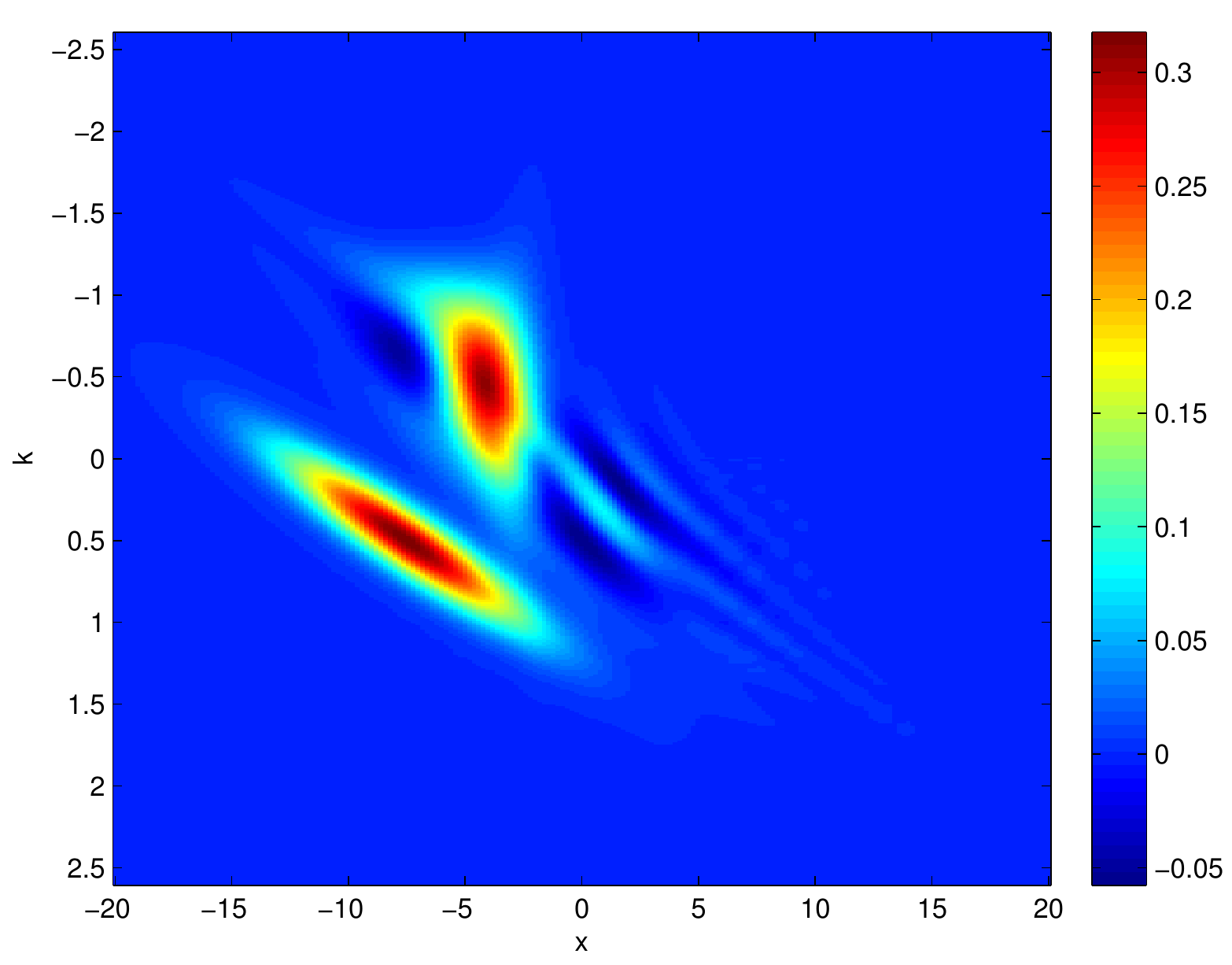}}
    \\
    \centering
    \subfigure[$t=12$.]{
    \includegraphics[width=2.9in,height=2.1in]{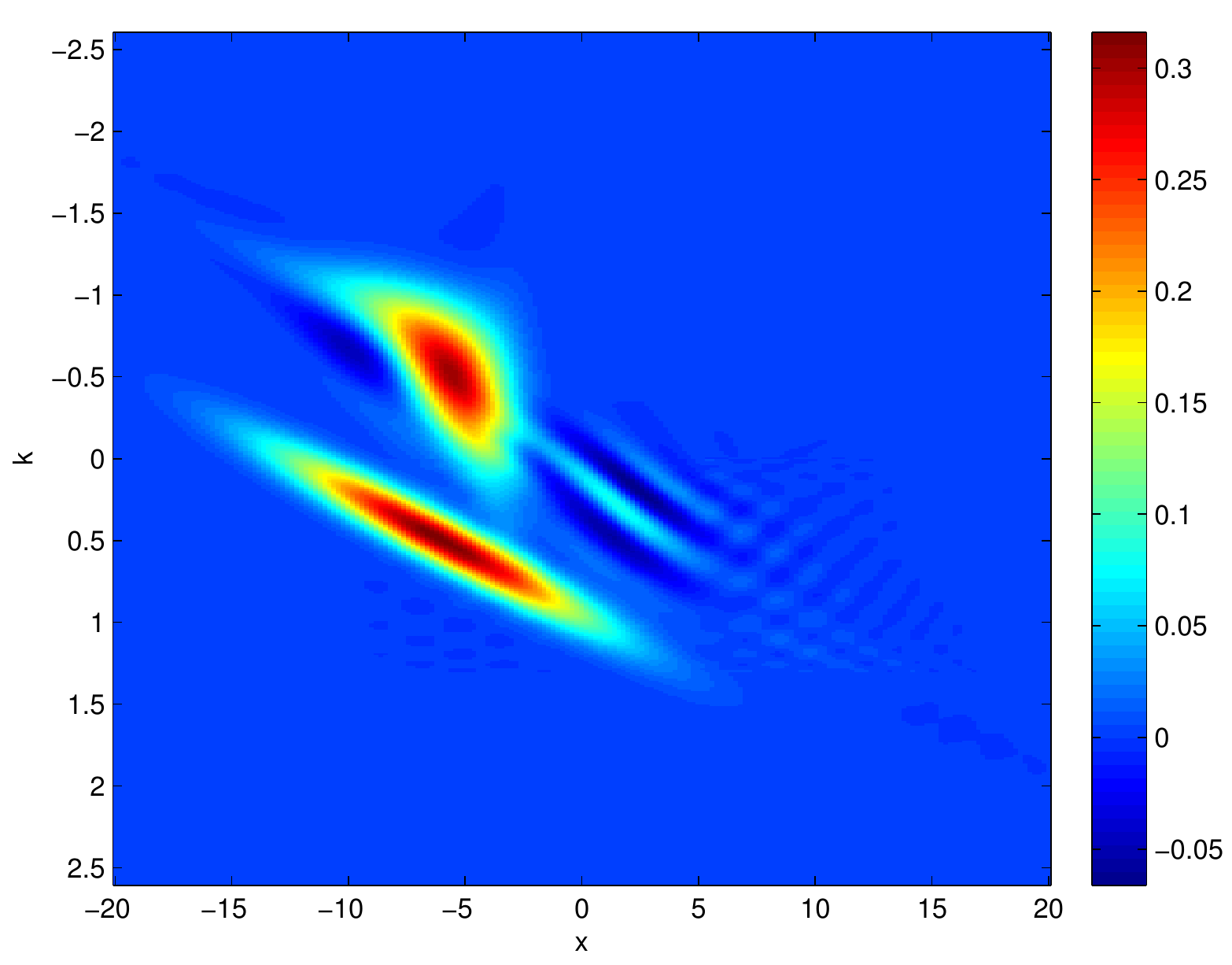}}
    \subfigure[$t=15$.]{
    \includegraphics[width=2.9in,height=2.1in]{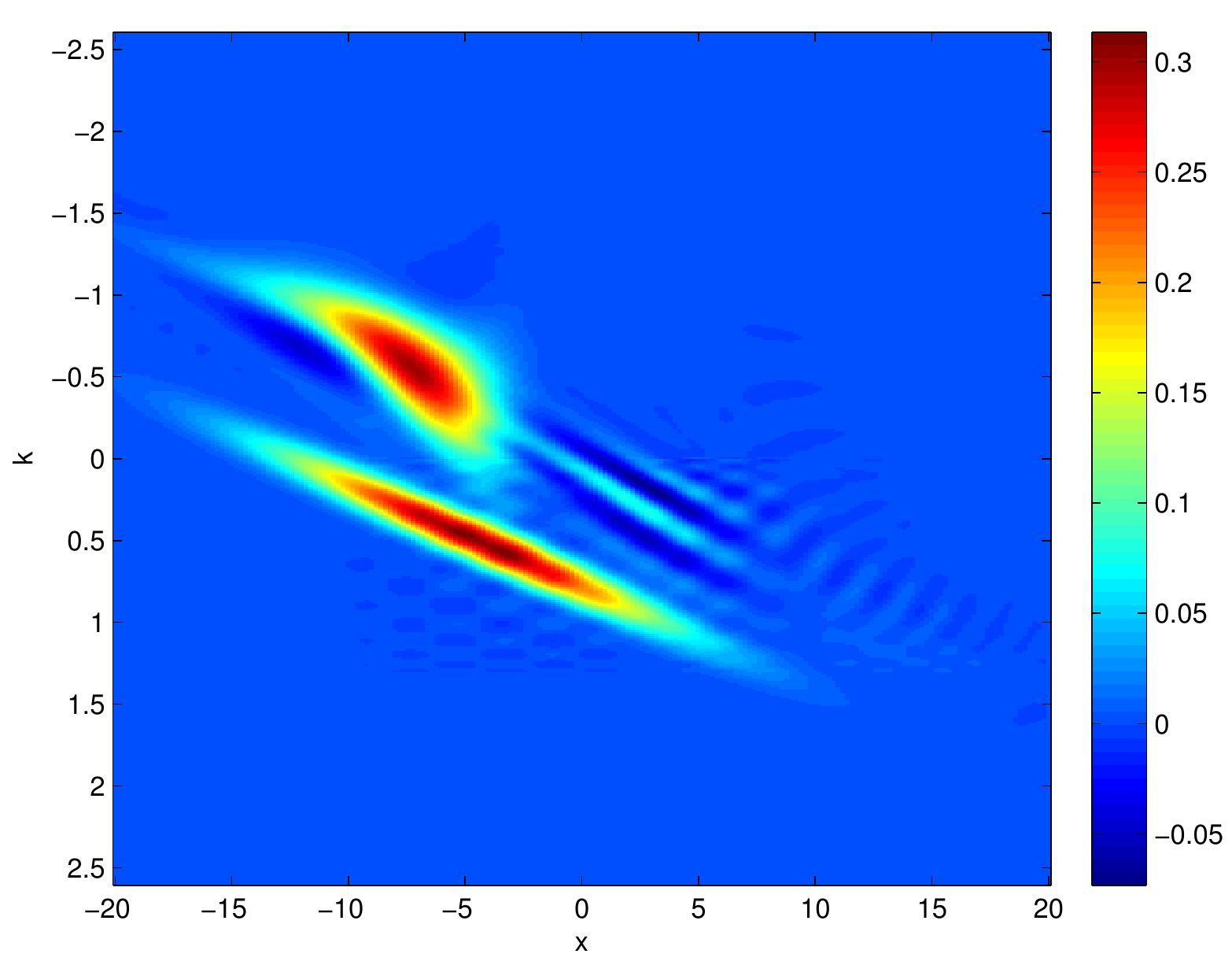}}
     \caption{\small The Gaussian barrier scattering for two uncorrelated particles: The reduced Wigner functions at different instants. Since the barrier is set to block only the second particle with the initial position $x_2^0 = -4$, i.e., $H_1=0$ and $H_2=1$ in Eq.~\eqref{eq:barrier},
the first particle initially located at $x_1^0 = -12$ shows the free advection, while the second one is completely reflected back. }\label{fig_6}
\end{figure}

Finally, let us see the performance in keeping the mass.
For above one-body case, during the time marching until the final time $T=20$ for $\Delta x=0.2$ and $\Delta t=0.05$, $\epsilon_{G}(t)$ is no more than $2.5778\times 10^{-9}$, but $\epsilon_{\textup{mass}}(20)$ is about $-8.6782\times 10^{-4}$ at the final time $T=20$. That is, the total outflow exceeds the total inflow due to the not-a-knot boundary conditions, and it can be improved by enlarging the computational domain. We redo the same simulation in an enlarged $\bm{x}$-domain $[-100~\textup{nm},200~\text{nm}]$ while leaving all other parameters unchanged, and find that
both $\epsilon_{\textup{mass}}$ and $\epsilon_{G}$ are on the same magnitude: $|\epsilon_{\textup{mass}}(t)|\leq 1.4015\times 10^{-10}$, $\epsilon_{G}(t)\leq 2.1739\times 10^{-10}$,
which agrees very well with the theoretical prediction in Section \ref{sec:analysis:mass}.
The same story also happens in the two-body situation.
For $L_x=20$, $\epsilon_{\textup{mass}}(15) = -4.9515\times 10^{-3}$ and $\epsilon_{G}(15) = 1.7389\times 10^{-7}$ at the end time $T=15$,
and they can be improved for $L_x=60$ to $\epsilon_{\textup{mass}}(15) = 1.0315\times 10^{-11}$
and $\epsilon_{G}(15) = 9.1085\times 10^{-12}$.

\subsection{Electron-electron scattering}
\label{sec:result:ee}

Now we turn to discuss a more challenging problem. Consider that two electrons are interacting through the repulsive Coulomb force. In this case, the two electrons are expected to decelerate, scatter and move away from each other. In general, the two-body interaction is given by the bare Coulomb potential
\begin{equation}
V_{\textup{ee}}\left(x_{1},x_2\right)=\frac{1}{\left|x_1-x_2\right|},
\end{equation}
which has a singularity at $x_{1}=x_{2}$. Thus we replace it with the soft-Coulomb potential\cite{LeinKreibichGross2002}
\begin{equation}\label{electron_potential}
V_{\textup{ee}}\left(x_{1}, x_{2}\right)=\frac{1}{\sqrt{\left|x_{1}-x_{2}\right|^{2}+\epsilon_{\textup{ee}}}},
\end{equation}
where $\epsilon_{\textup{ee}}$ is termed the soft parameter.

\begin{figure}[h]
    \centering
    \subfigure[Errors vs. $\log_{10}\Delta x$ ($\Delta t=0.0125$).]{\label{fig_7_a}
    \includegraphics[width=2.9in,height=2.1in]{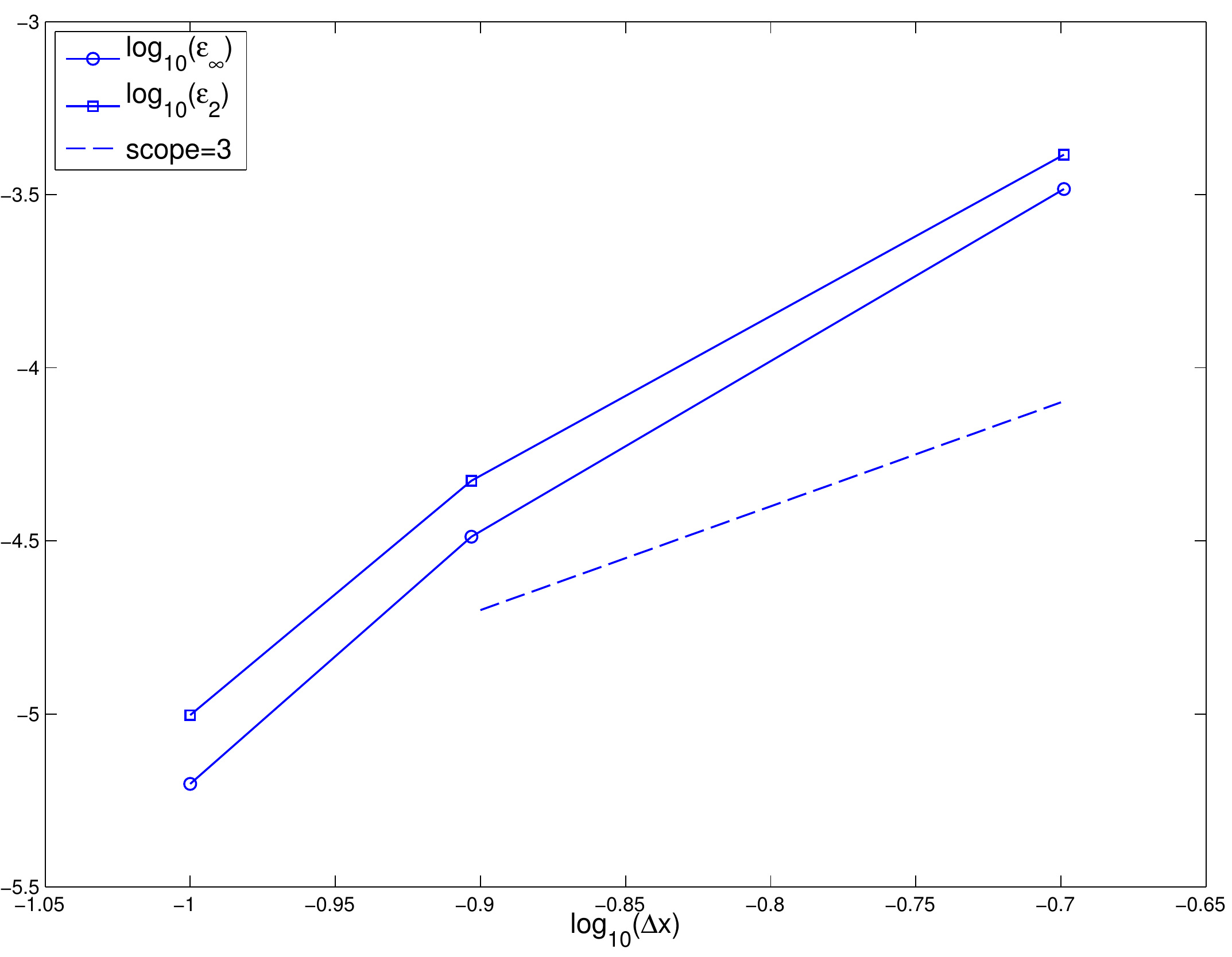}}
     \subfigure[Errors vs. $\log_{10}\Delta t$ ($\Delta x=0.125$).]{\label{fig_7_b}
    \includegraphics[width=2.9in,height=2.1in]{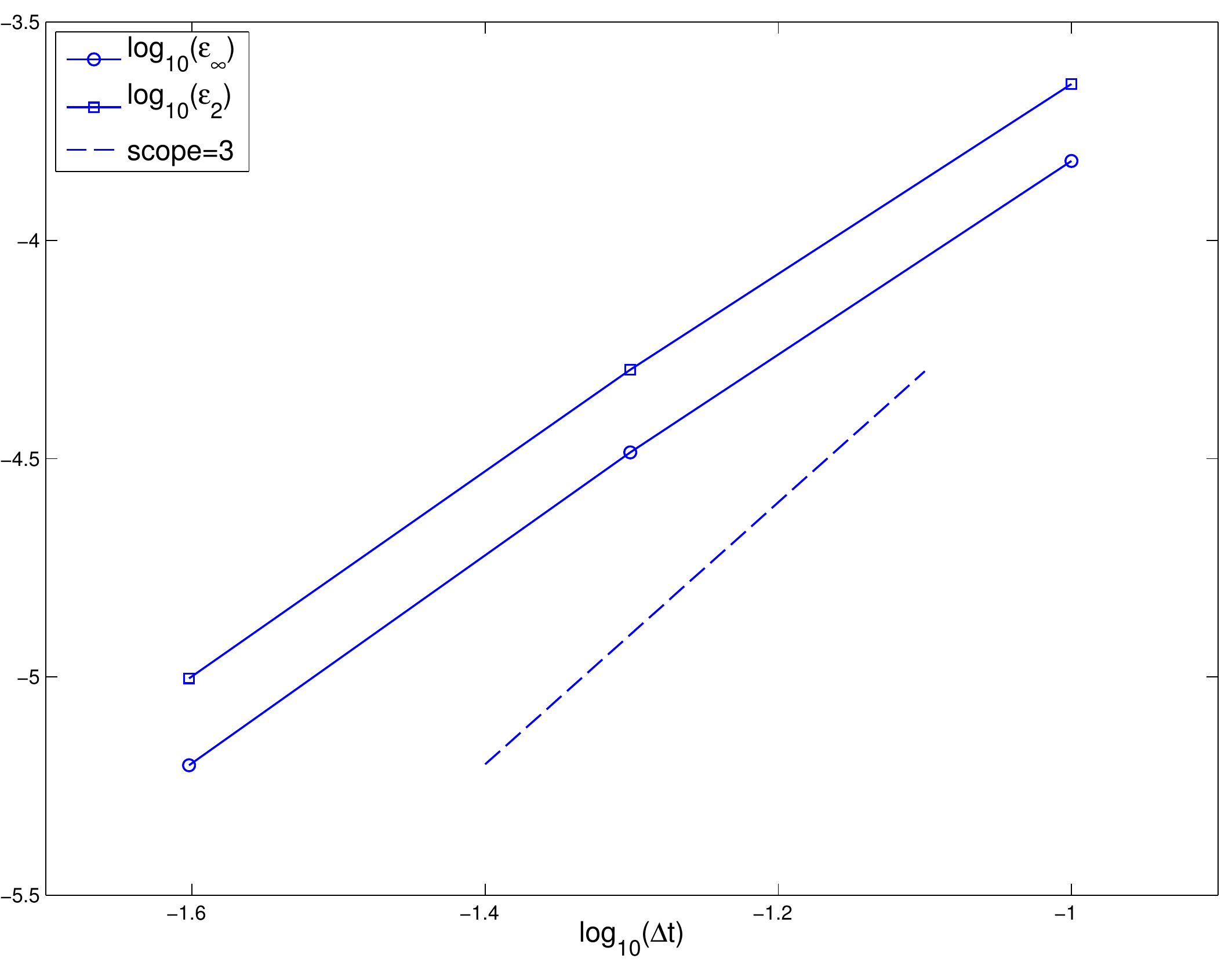}}
     \caption{\small Electron-electron scattering: The convergence order with respect to the spatial spacing $\Delta x$ and the time step $\Delta t$.}
      \label{fig_7}
\end{figure}

The first run is devoted to check the numerical convergence of the proposed method by setting
$L_x = 10$, $L_k = {5\pi}/{6}$, $L_y = 30$, the end time $T=4$,
and $\epsilon_{\textup{ee}} = 1$.
The $\bm{k}$-domain is divided into $4 \times 4$ elements and each element contains $16 \times 16$ Gauss-Chebyshev collocation points. The initial data is shown in Fig.~\ref{fig:fermion}, and the numerical solution obtained on a relatively fine mesh with $\Delta t=0.0125$ and $\Delta x=0.125$ is chosen to be the reference. Table \ref{tab:ee} presents
both $L^{\infty}$- and $L^{2}$-errors at the final time
and Fig.~\ref{fig_7} plots the convergence order with respect to the spatial spacing $\Delta x$ and the time step $\Delta t$. It is easily observed there that the measured convergence rate is around the theoretical value of $3$.
The slight deviation may come from that fact that the reference solution is not really a analytical one.
However, even this reference solution takes almost eight hours with 16 threads parallel running on our computing platform: Dell Poweredge R820 with $4\times$ Intel Xeon processor E5-4620
(2.2 GHz, 16 MB Cache, 7.2 GT/s QPI Speed, 8 Cores, 16 Threads) and 256GB memory.

As we have pointed out in the end of Section \ref{sec:method},
there is no general way to determine $L_y$ except for potentials of compact supports or of exponential decays such as the Gaussian barriers in Eqs.~\eqref{eq:gpot1} and \eqref{eq:barrier}.
Here we propose another simple way to determine $L_y$
roughly and initially by exploiting the exponential decays of the Gaussian Wigner function \eqref{asym_Wigner_init} (see Fig.~\ref{fig:fermion}).
Let
\begin{equation}
\epsilon_{g}(L_y)
=
\max_{(\bm{x},\bm{k})\in\Omega}\left\{\left|g^{T}\left(\bm{x},\bm{k}, 0;L_y\right)-g^{T}\left(\bm{x},\bm{k}, 0;L_y^{\textup{ref}}\right)\right|\right\},
\label{eq:eg}
\end{equation}
where $g^T(\bm{x},\bm{k},t; L_y)$
denotes the numerical approximation of $g(\bm{x},\bm{k},t)$ defined in Eq.~\eqref{eq:g}, which is obtained by
truncating the infinite series \eqref{Poisson_summation_truncated} with the domain given in \eqref{eq:ydom},
and $L_y^{\textup{ref}}$ is the reference length of the $\bm{y}$-domain and usually takes a large value.
Fig.~\ref{fig_Ly} displays above $\epsilon_{g}(L_y)$ for the initial data presented in Fig.~\ref{fig:fermion} and the soft-Coulomb potential with $\epsilon_{\textup{ee}} = 1$, where
we have set $L_y^{\textup{ref}}=240$. It can be easily observed there that $\epsilon_{g}$ is around $10^{-8}$ for $L_y=30$,
which has been used in the convergence test. We will adopt $L_y = 60$ below for longer simulations.
Fig.~\ref{fig_8} shows the reduced Wigner functions until the final time $T=6$ with $\Delta x=0.125$ and $\Delta t=0.05$.
By comparing with the free advection displayed in Fig.~\eqref{fig_4},
we find that,  before $t=4$,
the reduced Wigner function for two electrons, moving initially towards each other, is suppressed in the region $|k|\leq 1$  and possesses a wider expansion in $x$-direction because of the Coulomb deceleration; after that, two electrons tend to scatter out due to the repulsive interaction as well as the dispersion.
It must be noted that the Fermi hole between two electrons exists all the time since they are strongly correlated.
During the interaction dynamics, $\epsilon_{\textup{sym}}$ in Eq.~\eqref{eq:esym} is always around
the machine resolution for double precision,
and the variations of mass at the end time reads: $\epsilon_{\textup{mass}}(6) = -2.7403\times 10^{-3}$ with $\epsilon_{G}(6) = 4.0420\times 10^{-9}$.
When redoing the same simulation for $L_x=45$,
the variations of mass can be drastically reduced to
$\epsilon_{\textup{mass}}(6) = -2.5435\times 10^{-13}$ with $\epsilon_{G}(6) = 2.2190\times 10^{-12}$.

We have also tried a smaller soft parameter, say $\epsilon_{\textup{ee}} = 0.01$, implying a stronger repulsive interaction between two fermions. The interaction dynamics is very similar to those shown in Fig.~\ref{fig_8} corresponding to $\epsilon_{\textup{ee}} = 1$ and thus skipped. The possible reason may be,
the fermions feel the stronger repulsion only when they get close enough, while the long-range interaction between them is just slightly affected. However,
a relatively smaller time step, for example $\Delta t=0.0125$,
must be adopted instead, otherwise the numerical instability may happen. This is possibly related to the stiff gradient of $V(\bm{x})$, because the high-order derivatives of $V(\bm{x})$  may have a significant influence on the quantum dynamics
in view of the Moyal expansion \eqref{Moyal_expansion}.
That is, the time step may be still influenced a little bit by the deformational Courant number $\|\Delta t\cdot \nabla_{\bm{x}} V \|\leq 1$ as already shown for the Vlasov simulations\cite{SonnendruckerRocheBertrand1999},
though it is not restricted by the usual CFL condition.

\begin{table}
 \centering
 \caption{\small Electron-electron scattering:
The $L^{\infty}$-error $\epsilon_{\infty}(t)$ and $L^{2}$-error $\epsilon_{2}(t)$ at $t=4$ for different spatial spacing $\Delta x$ and time step $\Delta t$.
The numerical solution calculated from the finest mesh with $\Delta t=0.0125$ and $\Delta x=0.125$ is regarded as the reference.
}
\label{tab:ee}
 \begin{tabular}{ccccc}
  \toprule
  \toprule
  $\Delta t $ & $\Delta x$ & $\epsilon_{\infty}(4)$ & $\epsilon_{2}(4)$  \\
  \midrule
  0.0125 & 0.125 &  - & - \\
  0.0125 & 0.25 &  $2.5563\times 10^{-5}$ & $4.2093\times 10^{-5}$ \\
  0.0125 & 0.5 &  $3.9108\times 10^{-4}$ & $5.8120\times 10^{-4}$ \\
  0.025 & 0.125 &  $6.2753\times 10^{-6}$ & $9.9228\times 10^{-6}$ \\
  0.025 & 0.25 &  $3.2506\times 10^{-5}$ & $4.7147\times 10^{-5}$ \\
  0.025 & 0.5 &  $3.2845\times 10^{-4}$ & $4.1265\times 10^{-4}$ \\
  0.05   & 0.125 &  $3.2668\times 10^{-5}$ & $5.0519\times 10^{-5}$ \\
  0.05 & 0.25 &  $4.7014\times 10^{-5}$ & $6.5645\times 10^{-5}$ \\
  0.05 & 0.5 &  $3.4393\times 10^{-4}$ & $4.2734\times 10^{-4}$ \\
  0.1 & 0.125 &  $1.5192\times 10^{-4}$ & $2.2787\times 10^{-4}$ \\
  0.1 & 0.25 &  $1.4283\times 10^{-4}$ & $2.1455\times 10^{-4}$ \\
  0.1 & 0.5 &  $4.1791\times 10^{-4}$ & $ 5.0209\times 10^{-4}$ \\
  \bottomrule
  \bottomrule
  \end{tabular}
\end{table}

\begin{figure}
\centering
\includegraphics[width=2.9in,height=2.1in]{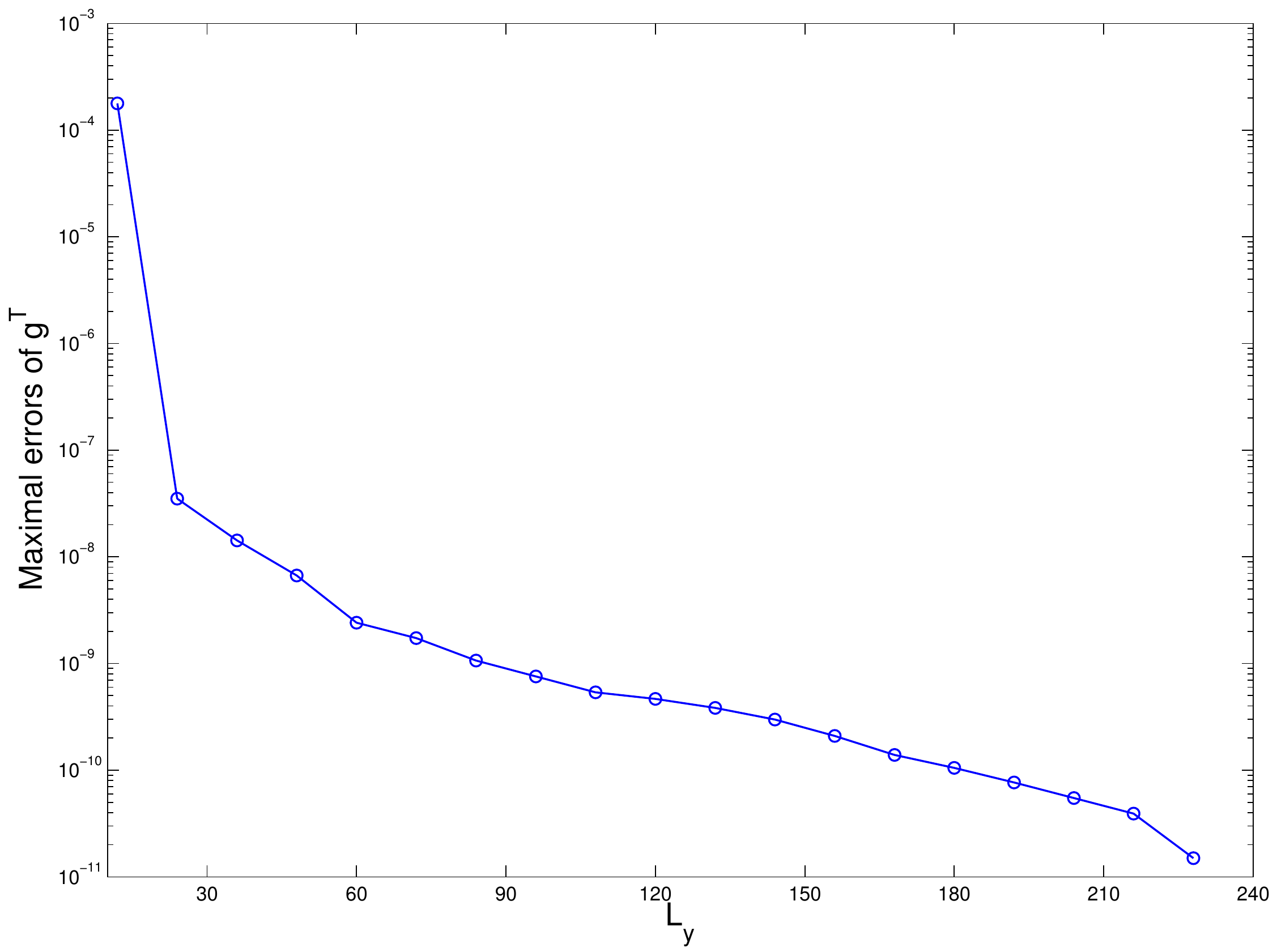}
     \caption{\small Electron-electron scattering: Maximum errors of $g^T(\bm{x},\bm{k},0;L_y)$ (see Eq.~\eqref{eq:eg}) against the truncation length $L_y$ in $\bm{y}$-space. The reference length is set to be $L_y^{\textup{ref}}=240$.
}\label{fig_Ly}
\end{figure}

\begin{figure}
    \centering
    \subfigure[$t=1$.]{
    \includegraphics[width=2.9in,height=2.1in]{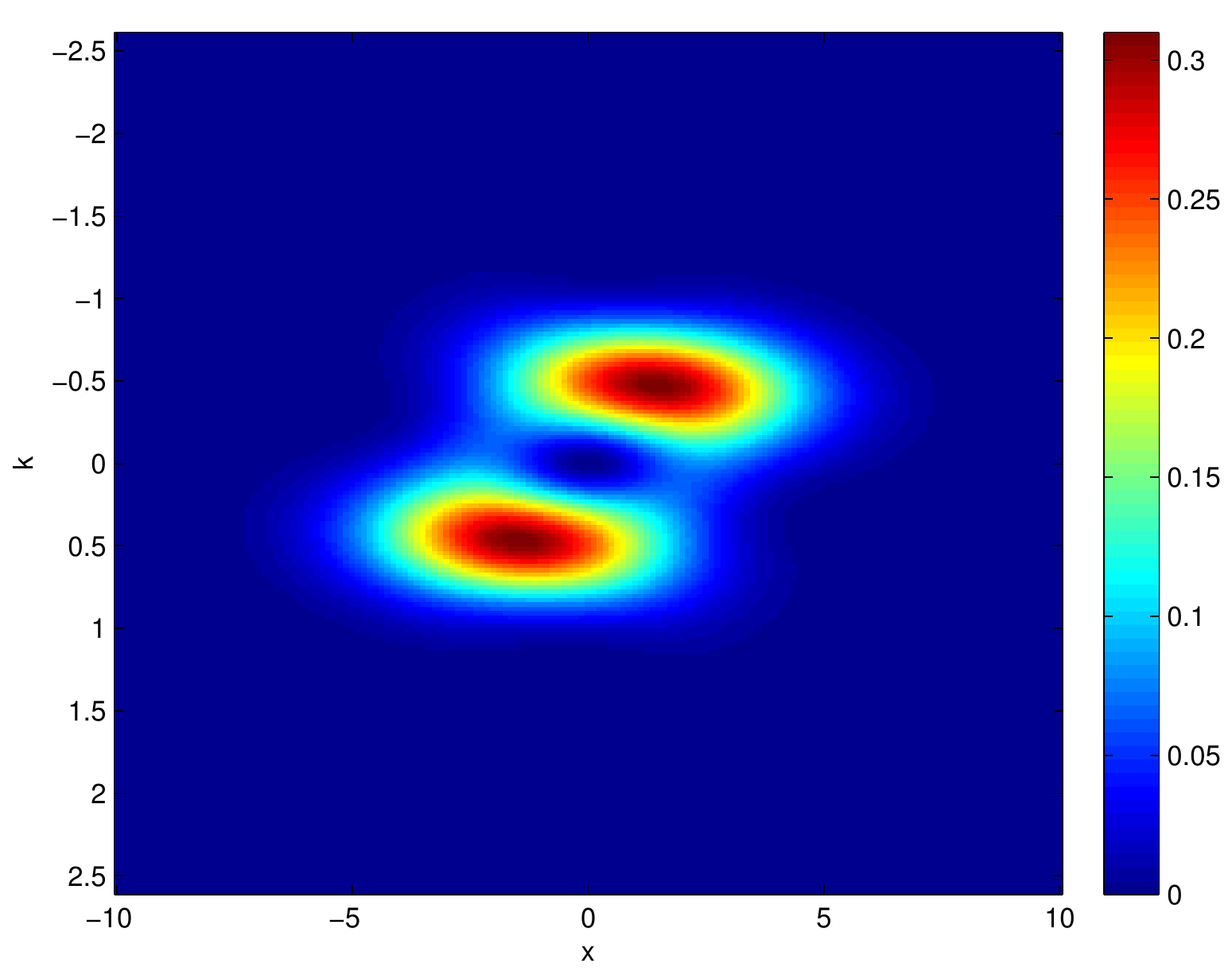}}
    \subfigure[$t=2$.]{
    \includegraphics[width=2.9in,height=2.1in]{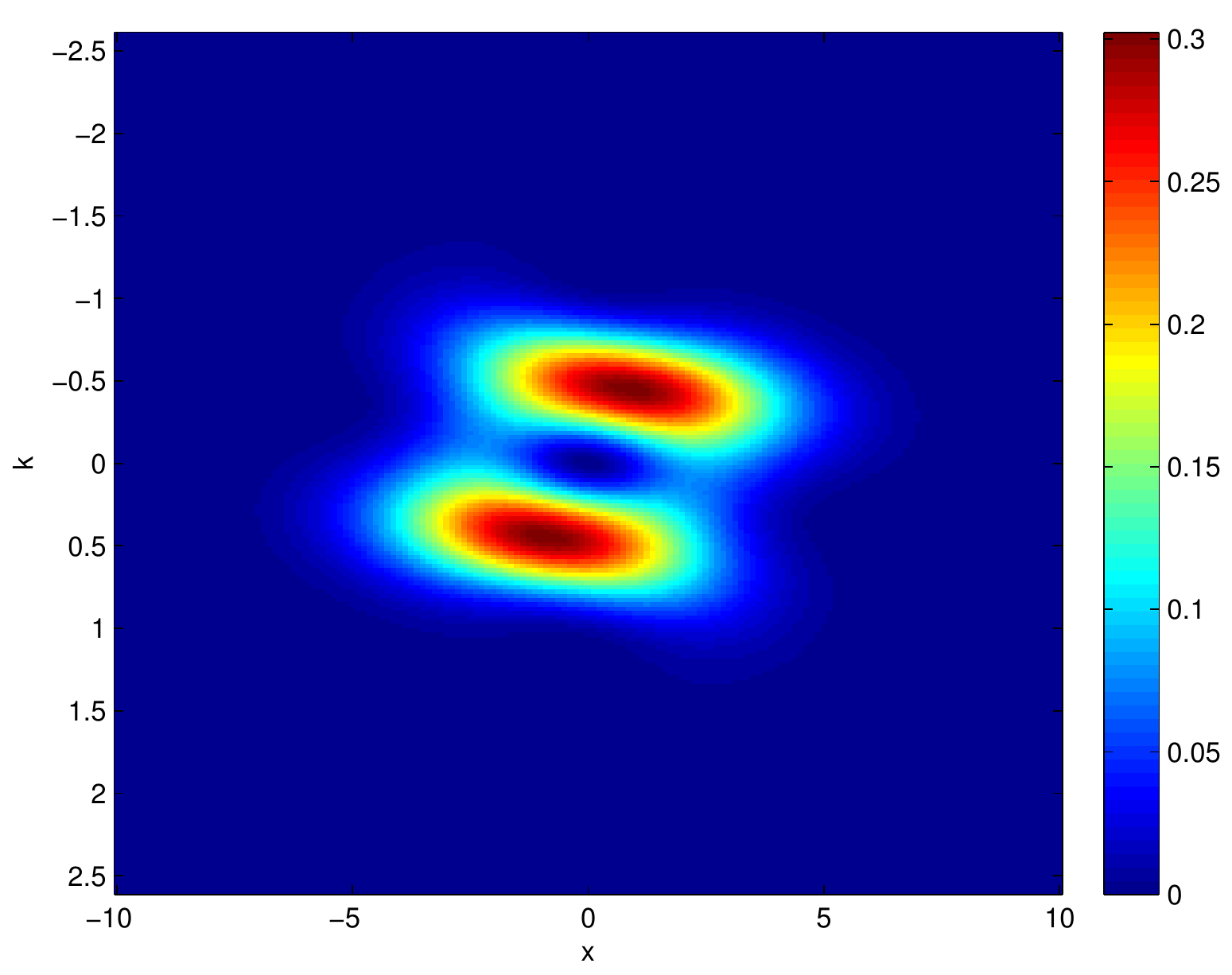}}
        \\
    \centering
    \subfigure[$t=3$.]{
    \includegraphics[width=2.9in,height=2.1in]{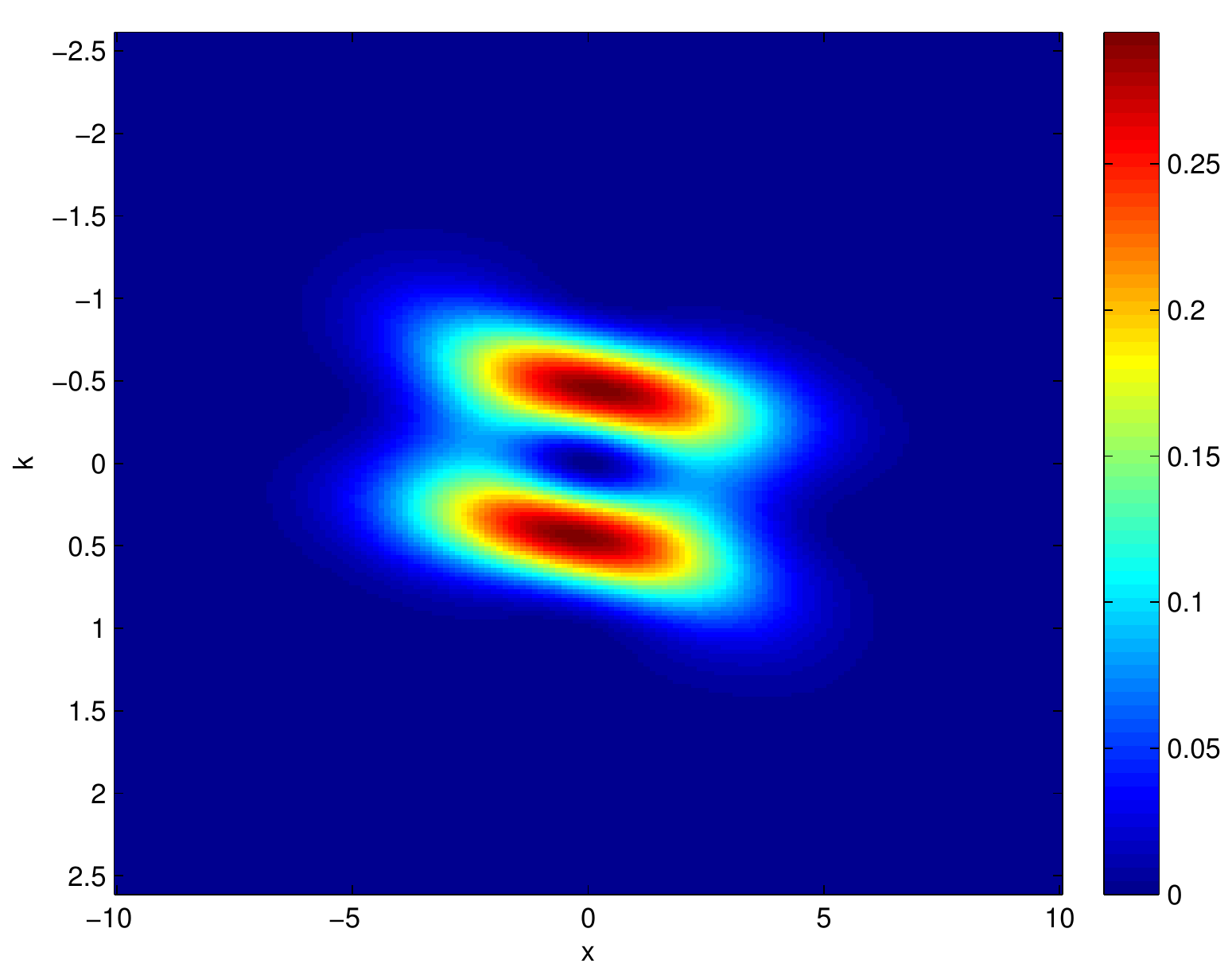}}
    \subfigure[$t=4$.]{
    \includegraphics[width=2.9in,height=2.1in]{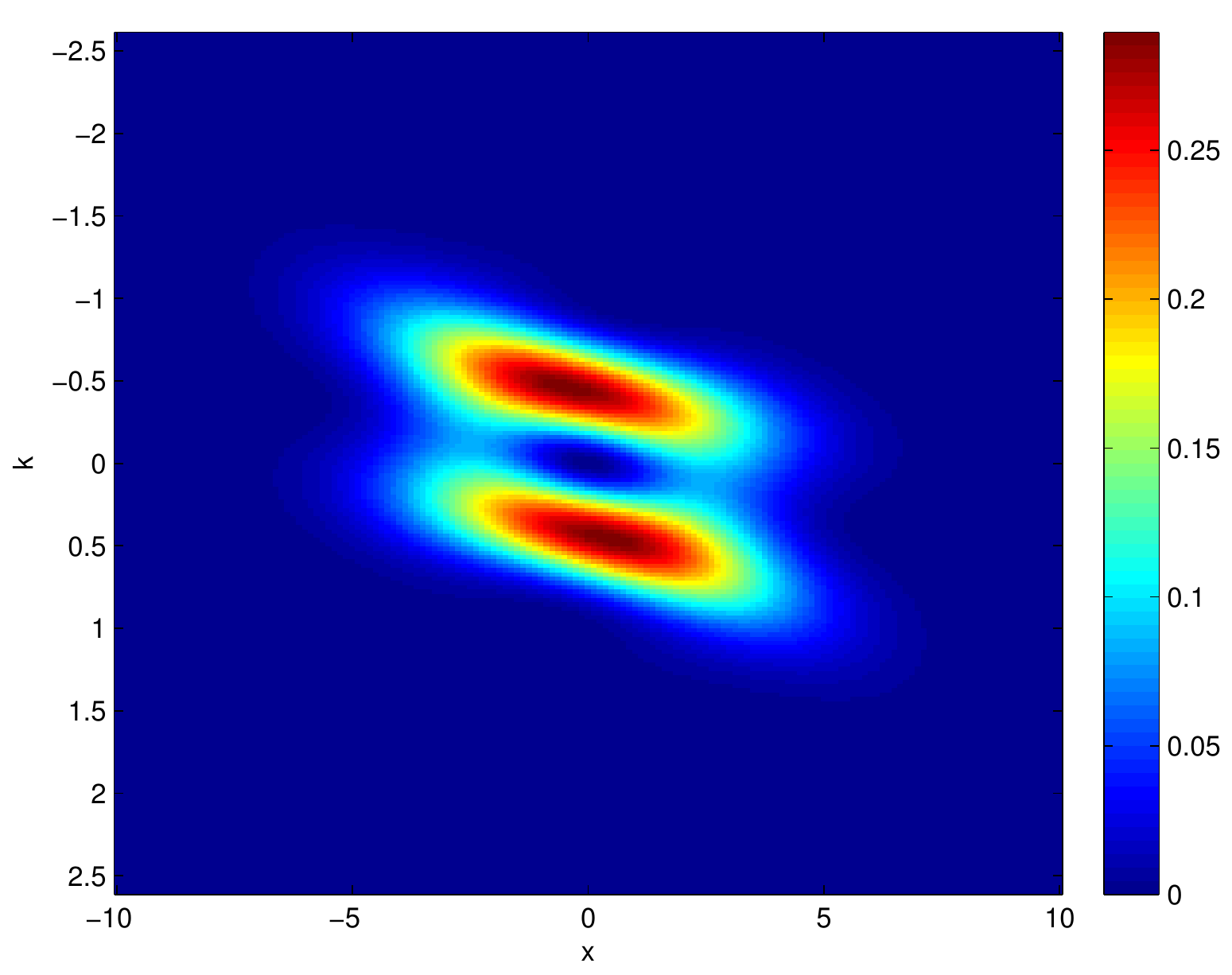}}
    \\
    \centering
    \subfigure[$t=5$.]{
    \includegraphics[width=2.9in,height=2.1in]{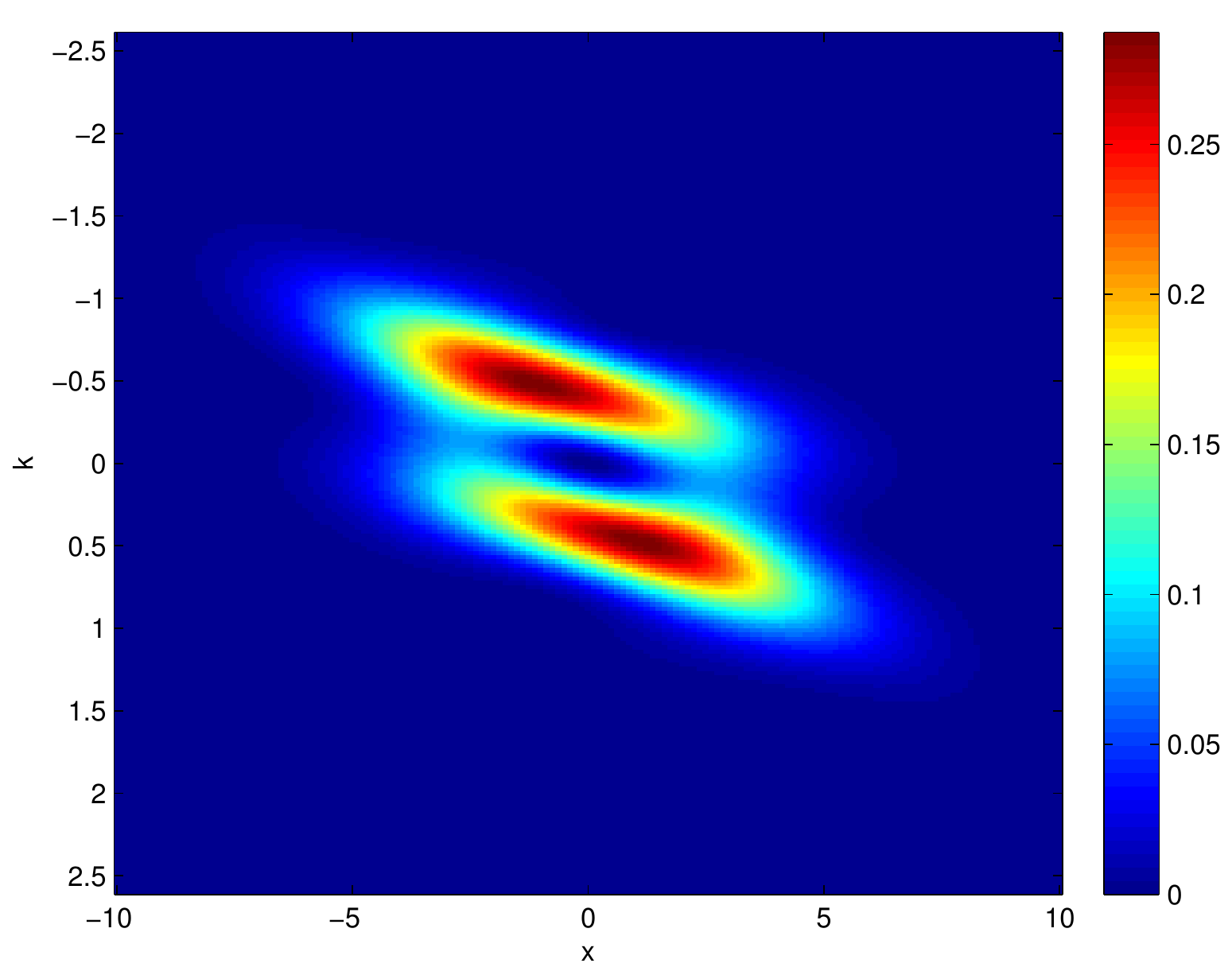}}
    \subfigure[$t=6$.]{
    \includegraphics[width=2.9in,height=2.1in]{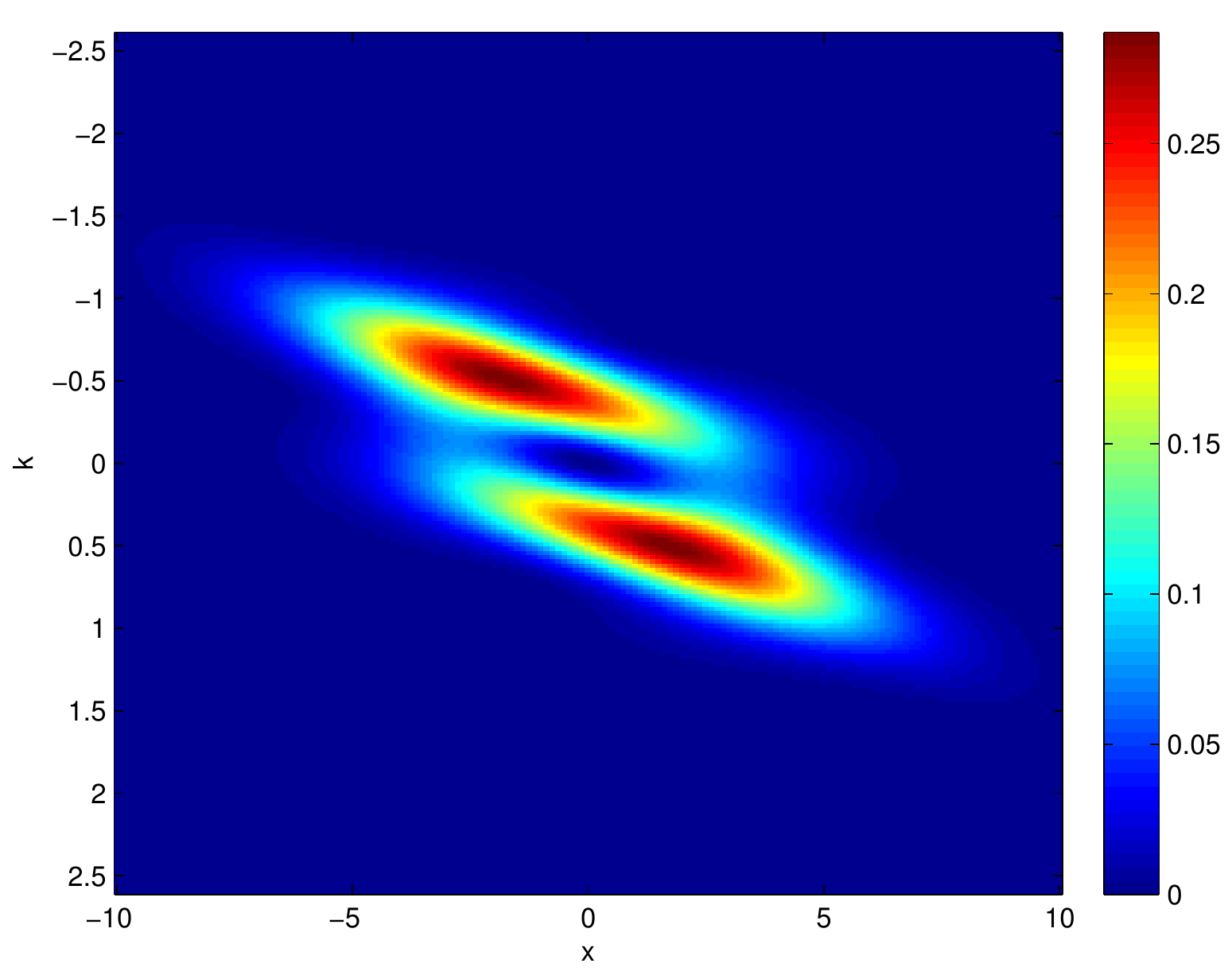}}
     \caption{\small Electron-electron scattering:
The reduced Wigner functions at different instants.
}\label{fig_8}
\end{figure}

\subsection{A Helium-like system}
\label{sec:result:He}

\begin{figure}
    \centering
    \subfigure[$t=1$.]{
    \includegraphics[width=1.9in,height=1.4in]{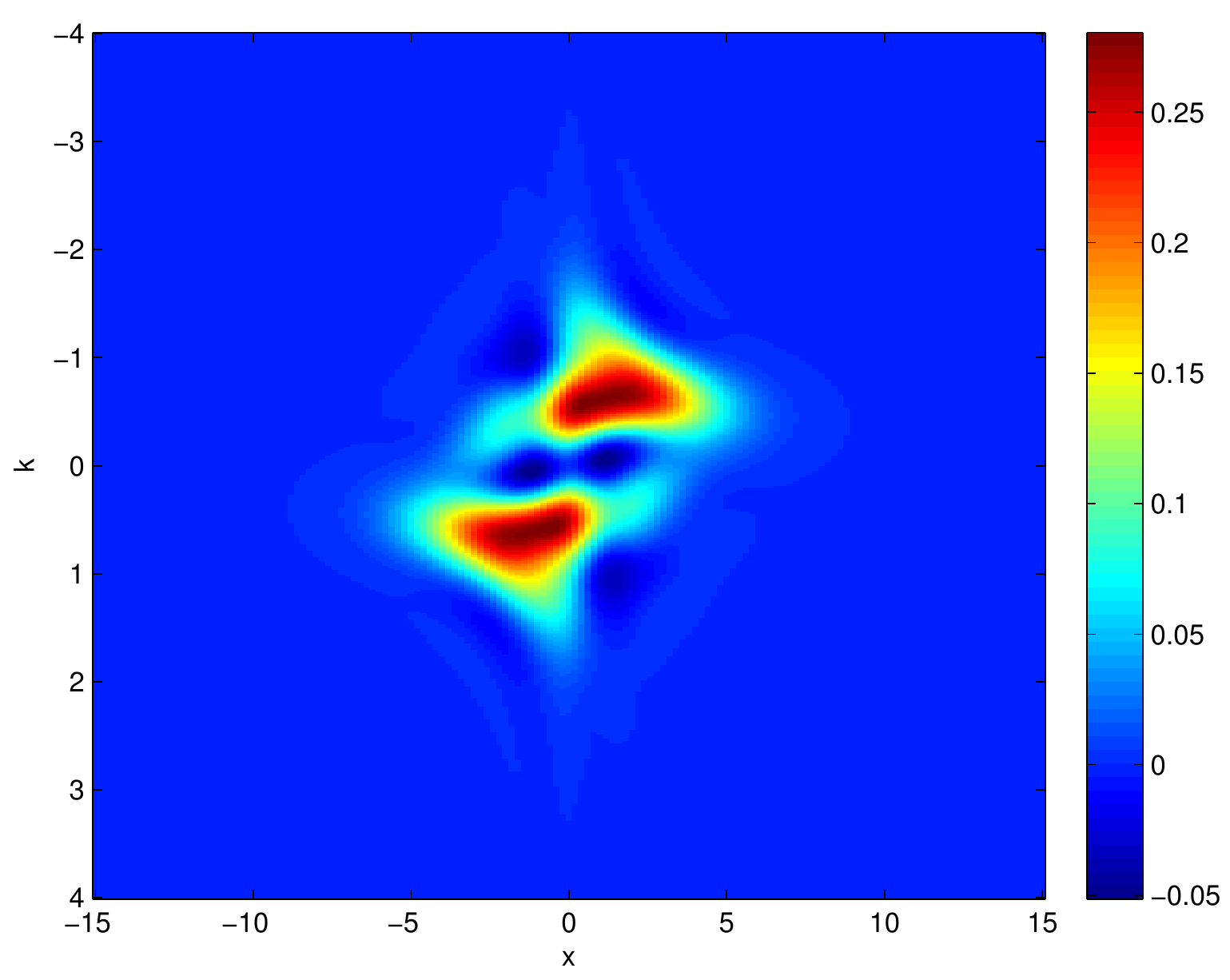}}
    \subfigure[$t=2$.]{
    \includegraphics[width=1.9in,height=1.4in]{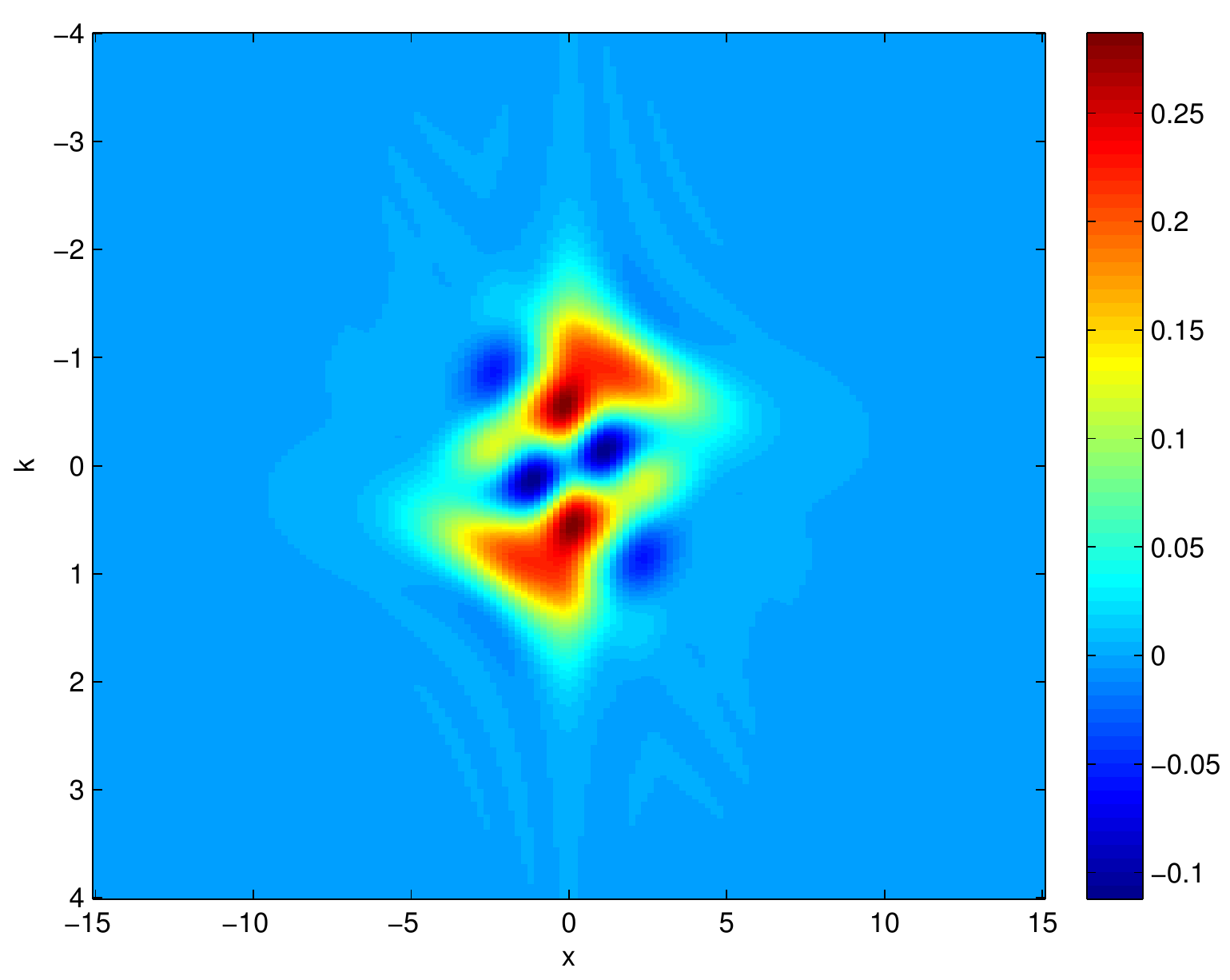}}
    \subfigure[$t=3$.]{
    \includegraphics[width=1.9in,height=1.4in]{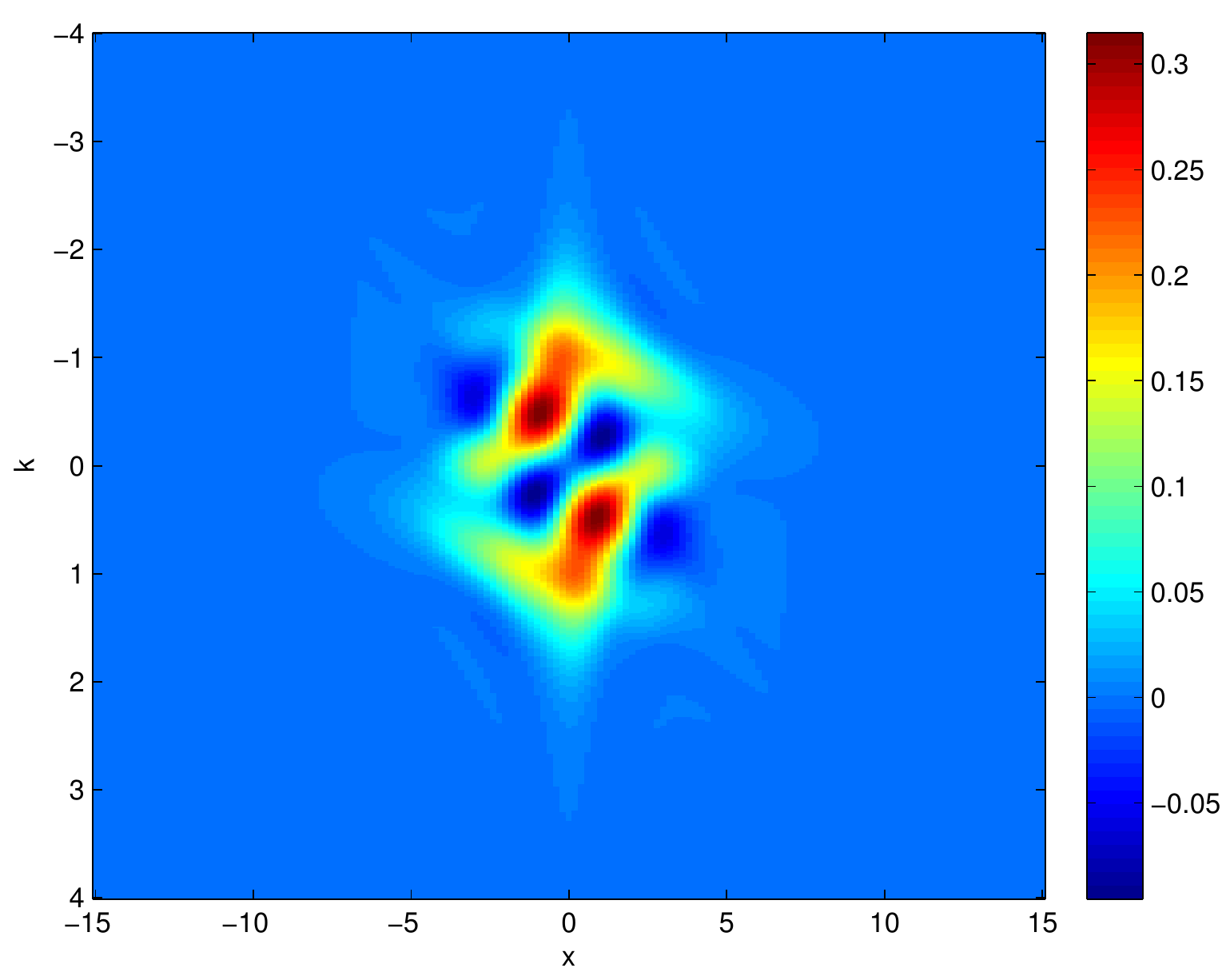}}
    \\
    \centering
    \subfigure[$t=4$.]{
    \includegraphics[width=1.9in,height=1.4in]{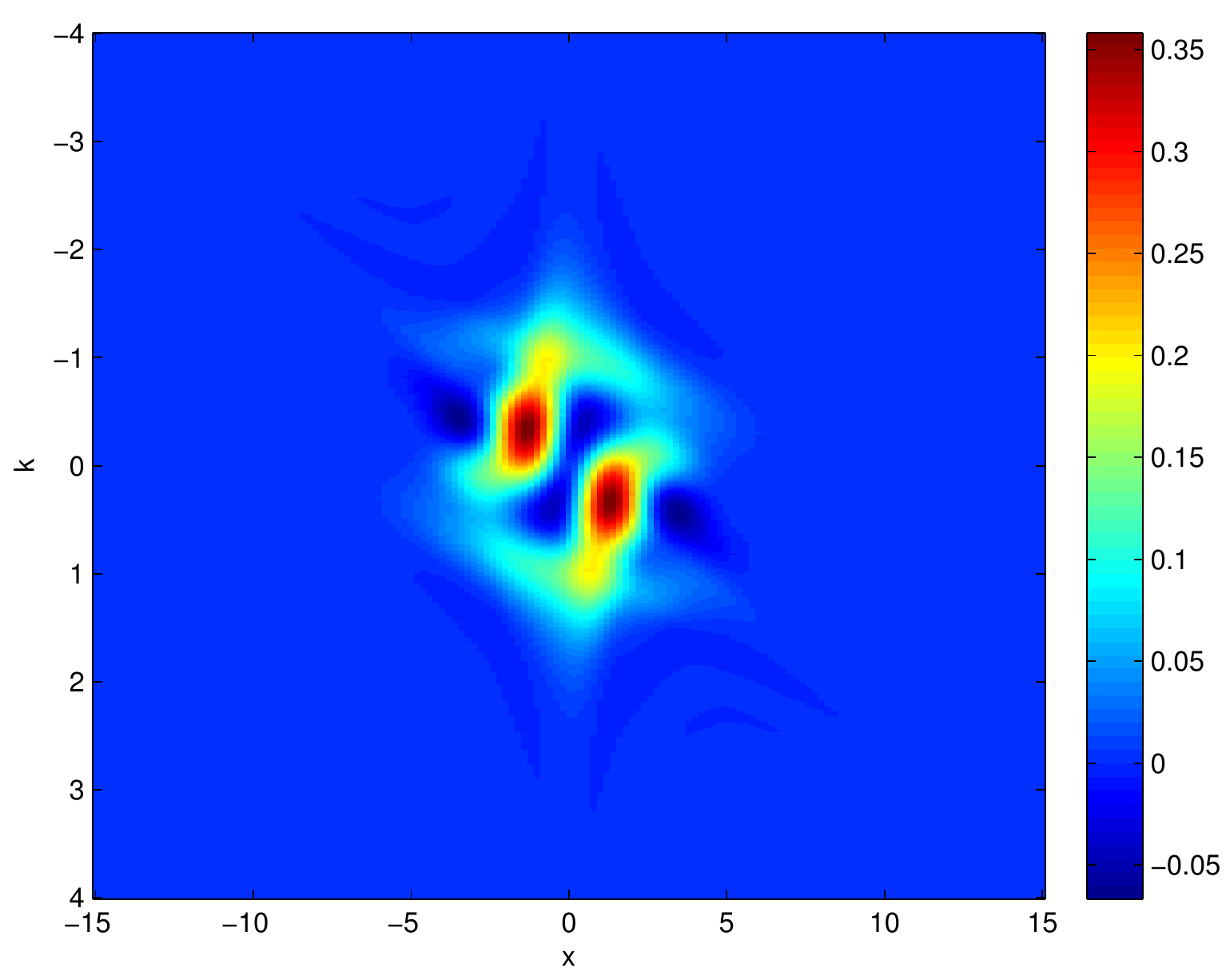}}
    \subfigure[$t=5$.]{
    \includegraphics[width=1.9in,height=1.4in]{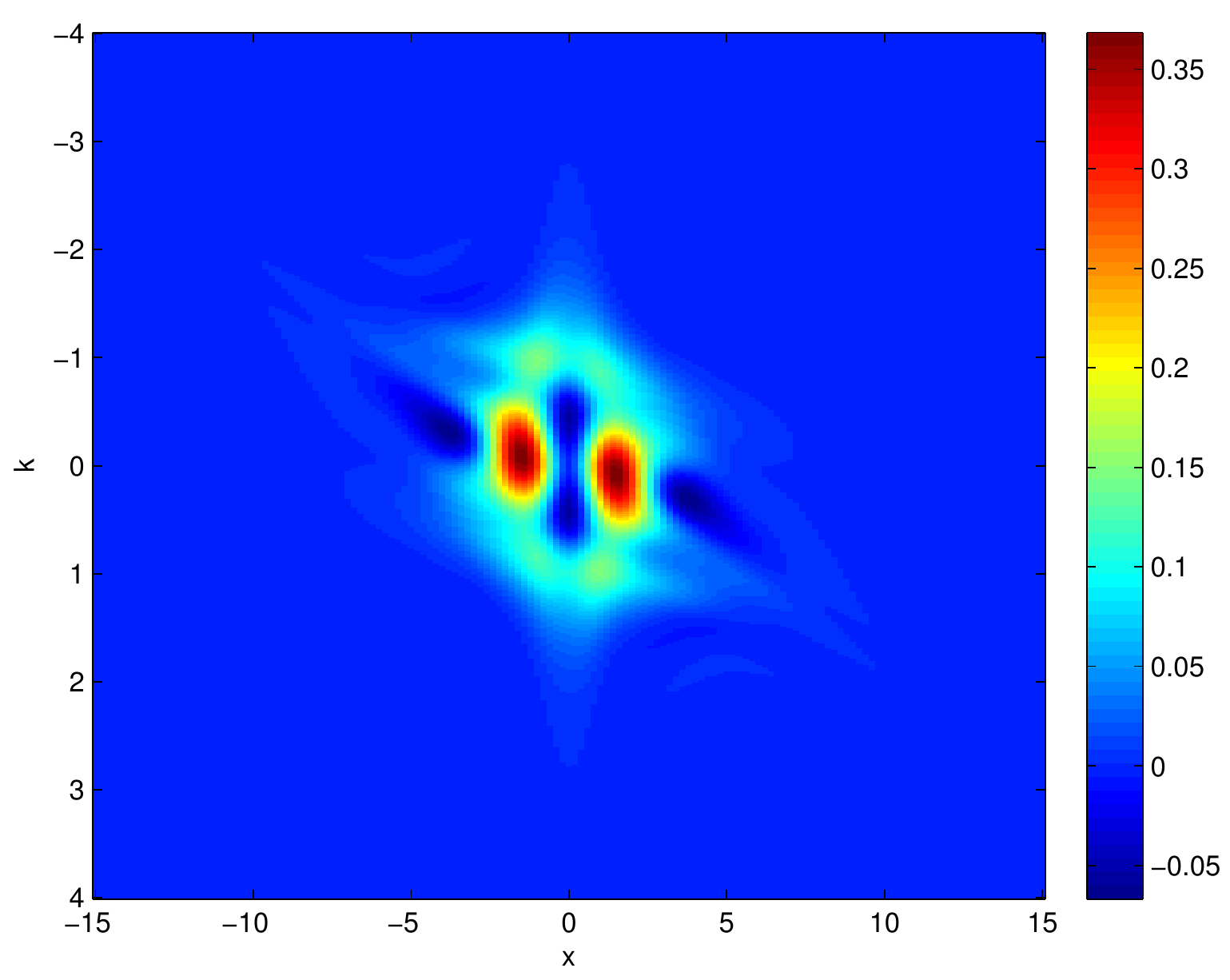}}
    \subfigure[$t=6$.]{
    \includegraphics[width=1.9in,height=1.4in]{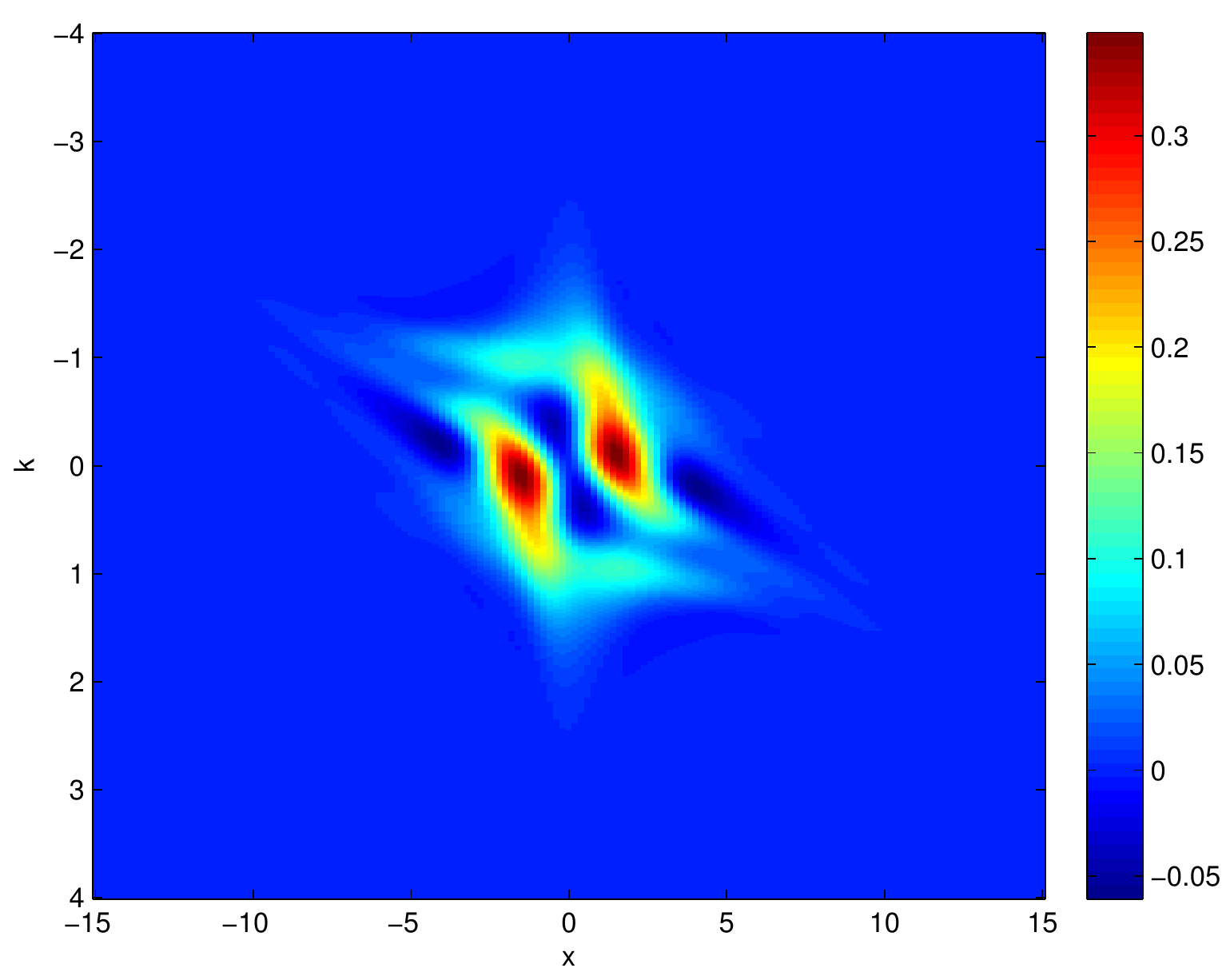}}
    \\
    \centering
    \subfigure[$t=7$.]{
    \includegraphics[width=1.9in,height=1.4in]{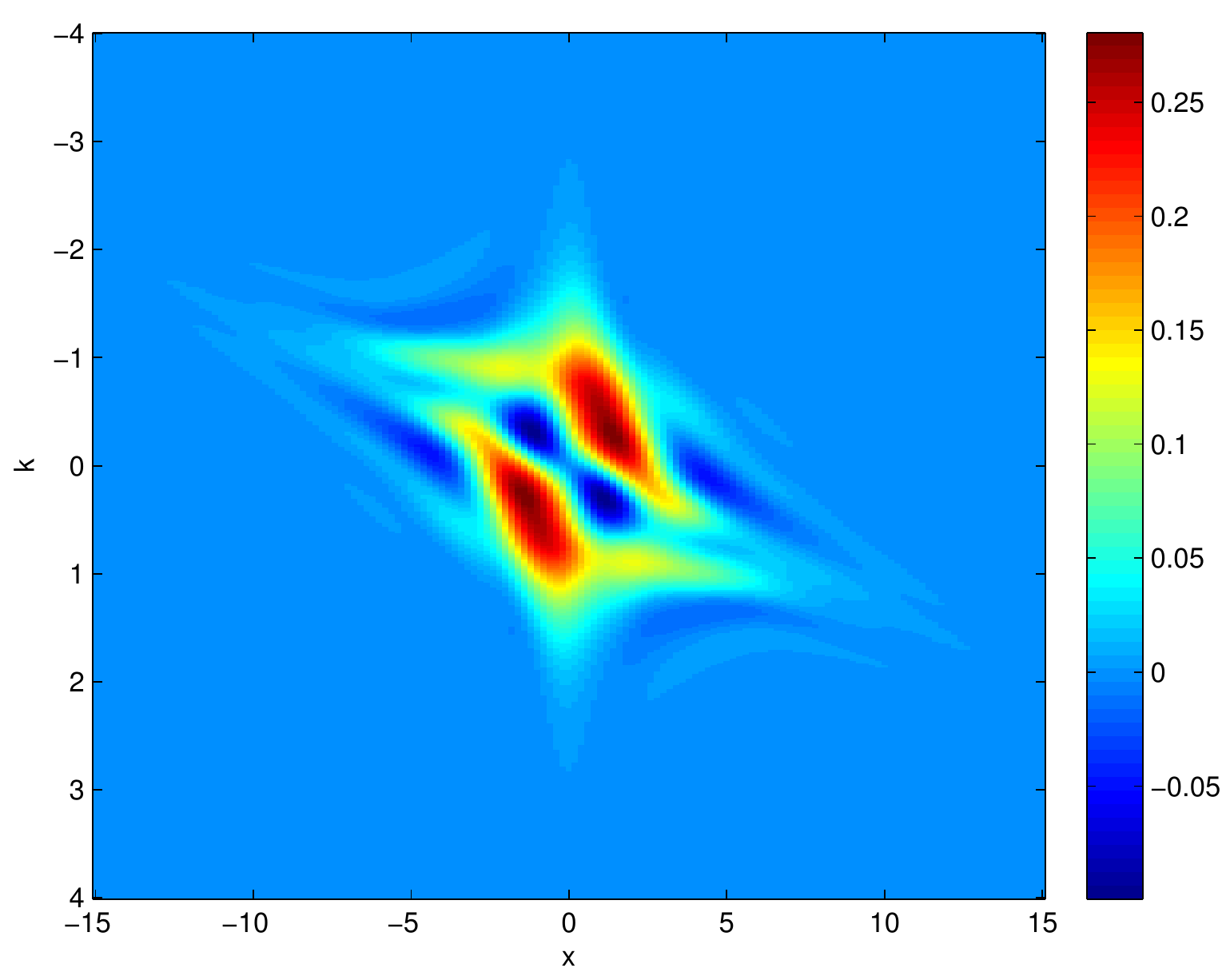}}
    \subfigure[$t=8$.]{
    \includegraphics[width=1.9in,height=1.4in]{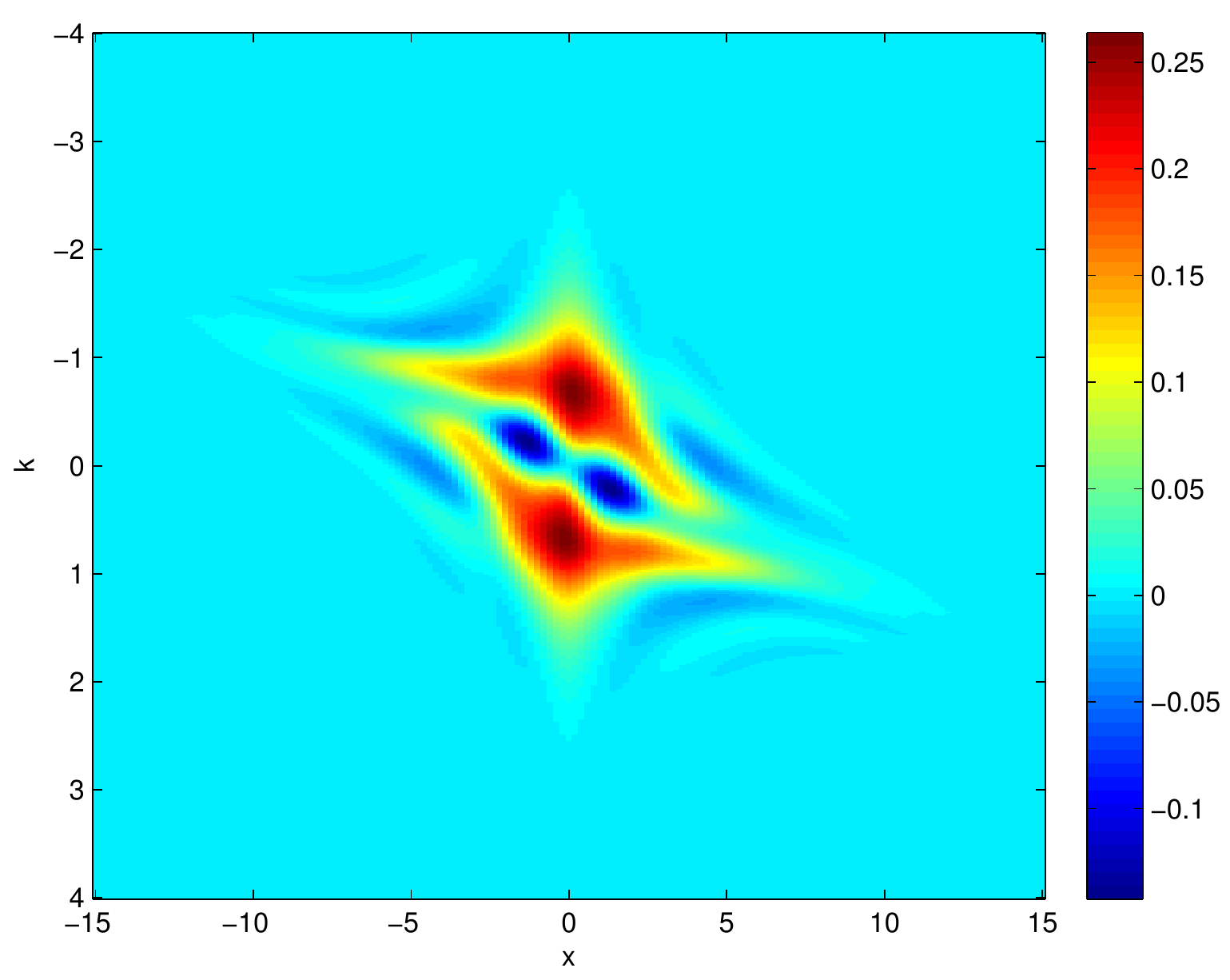}}
    \subfigure[$t=9$.]{
    \includegraphics[width=1.9in,height=1.4in]{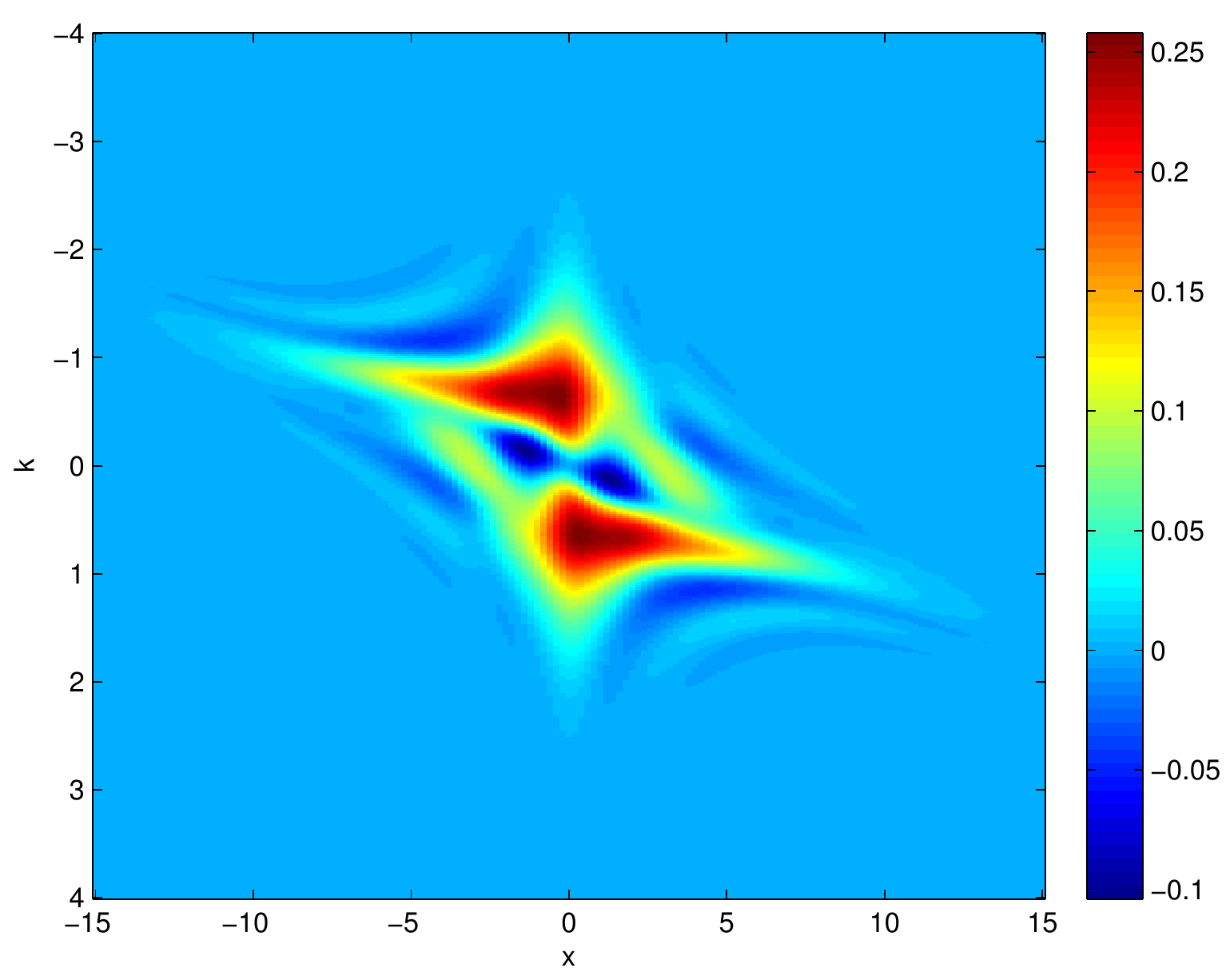}}
    \\
    \centering
    \subfigure[$t=10$.]{
    \includegraphics[width=1.9in,height=1.4in]{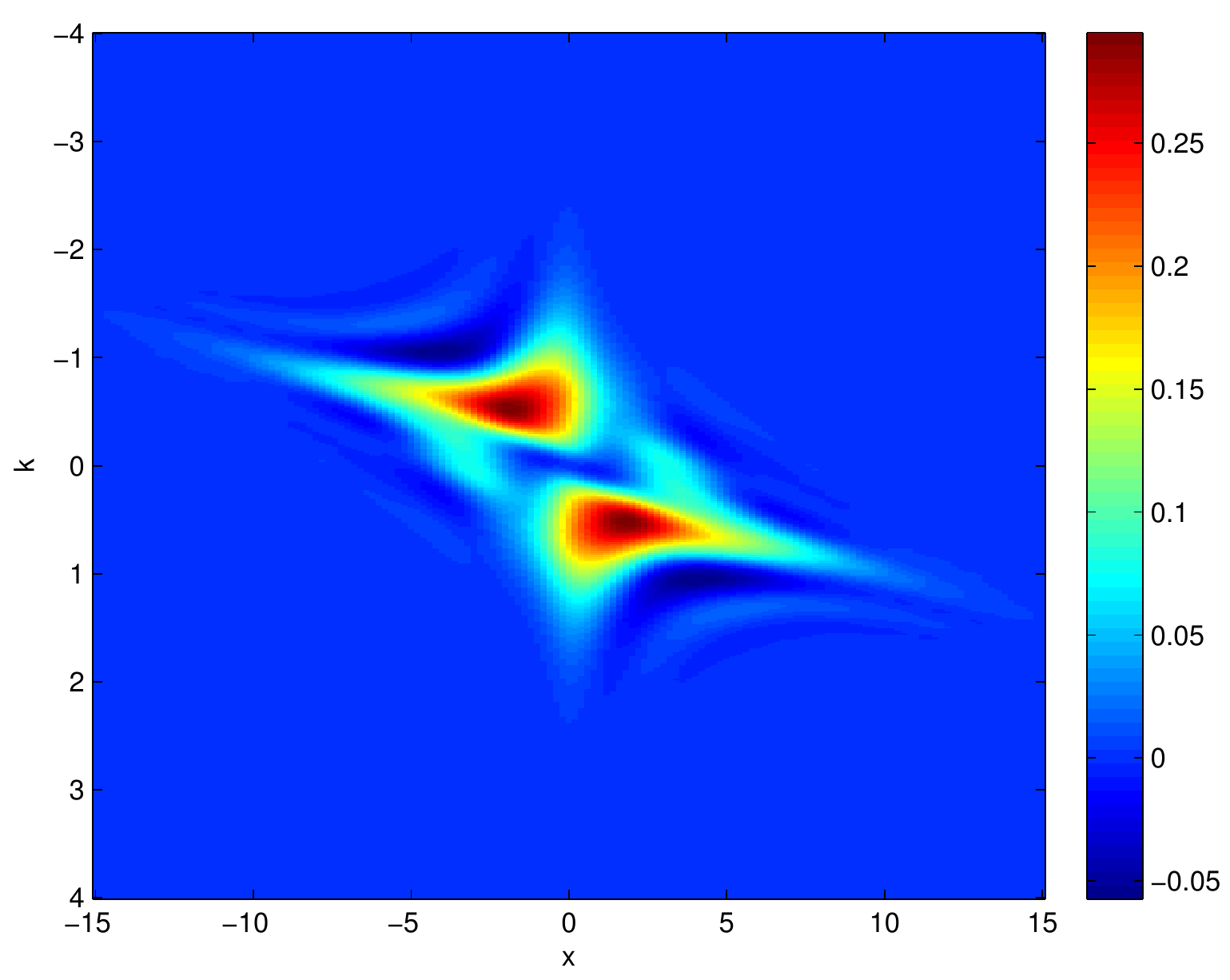}}
    \subfigure[$t=11$.]{
    \includegraphics[width=1.9in,height=1.4in]{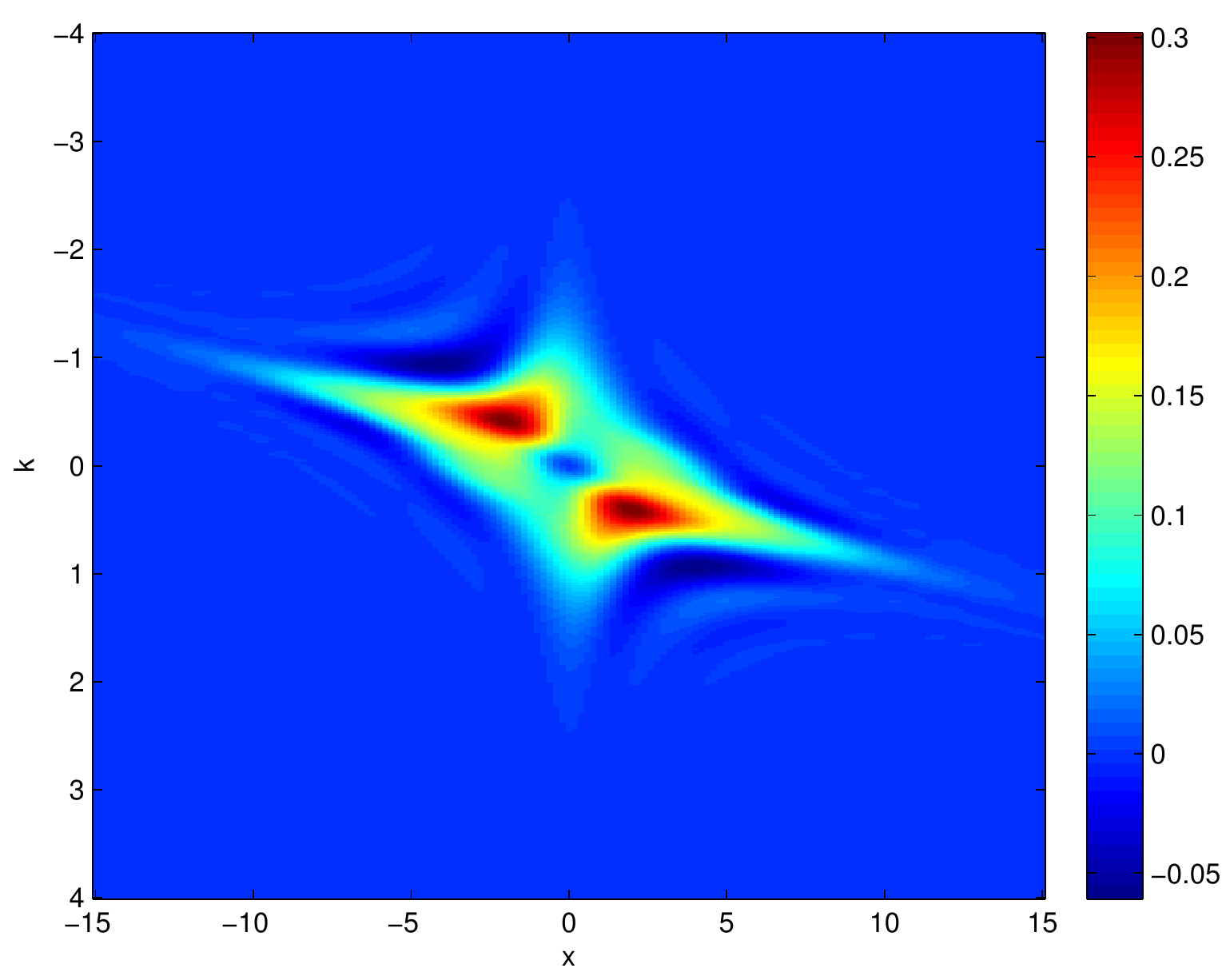}}
    \subfigure[$t=12$.]{
    \includegraphics[width=1.9in,height=1.4in]{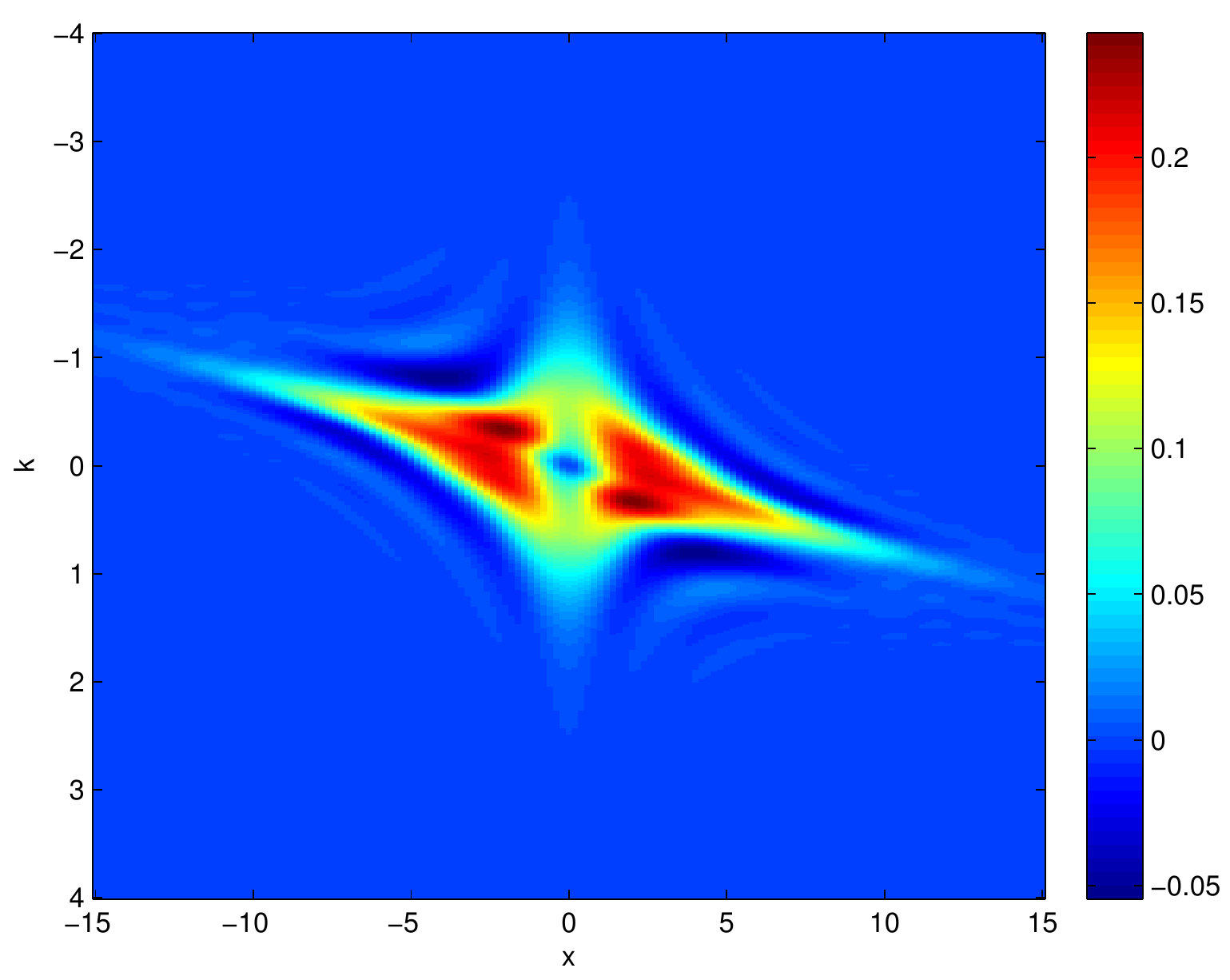}}
    \\
    \centering
    \subfigure[$t=13$.]{
    \includegraphics[width=1.9in,height=1.4in]{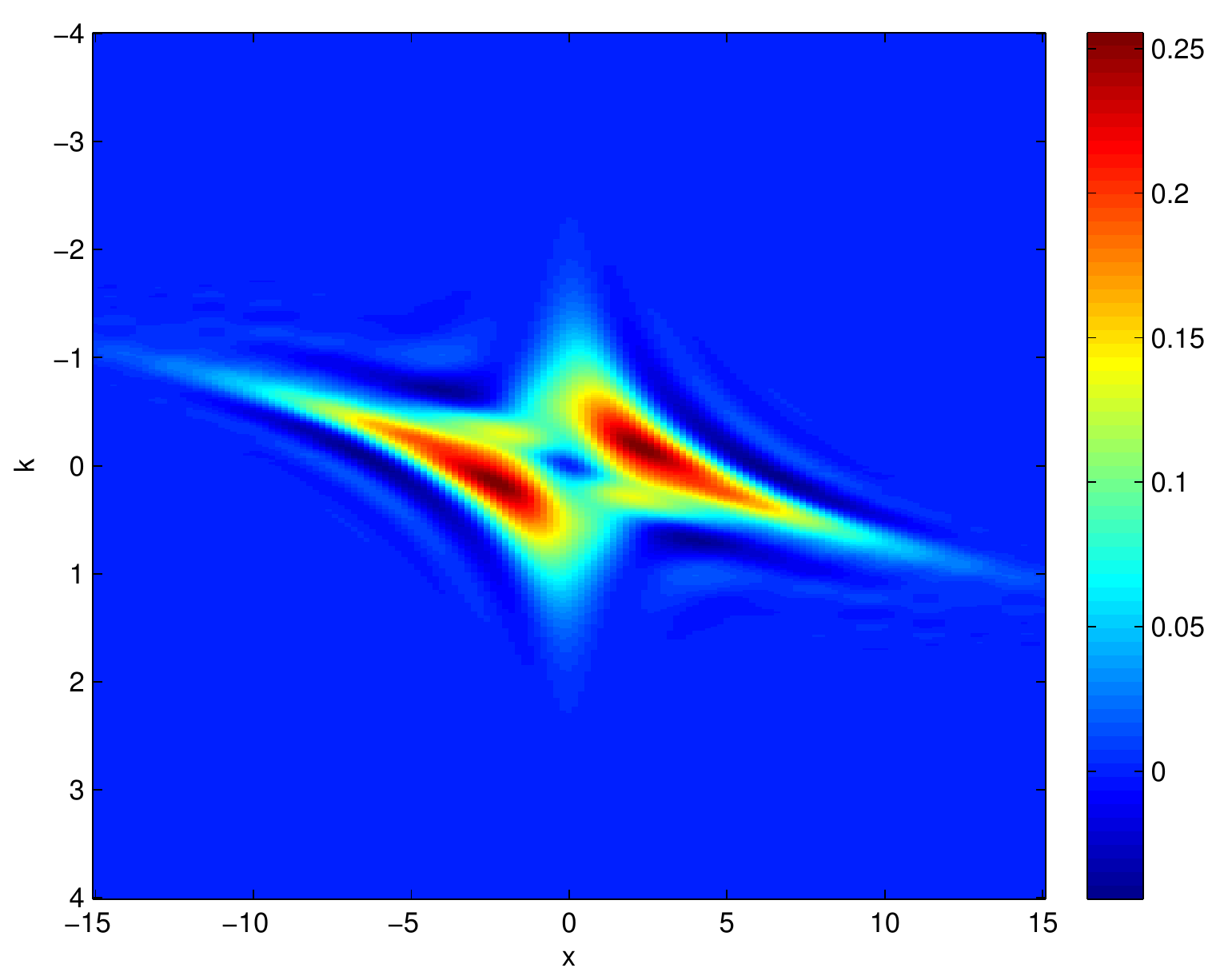}}
    \subfigure[$t=14$.]{
    \includegraphics[width=1.9in,height=1.4in]{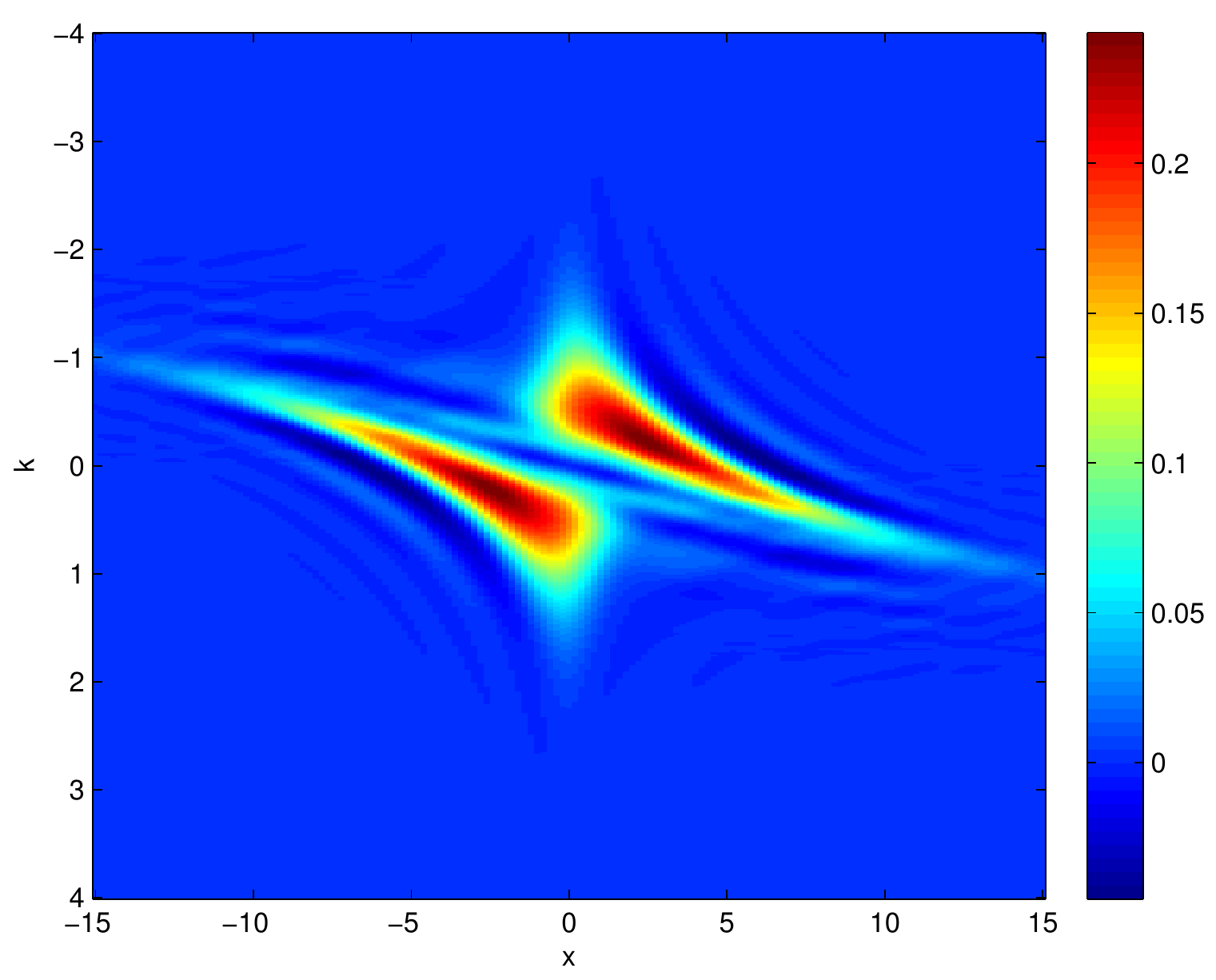}}
     \subfigure[$t=15$.]{
    \includegraphics[width=1.9in,height=1.4in]{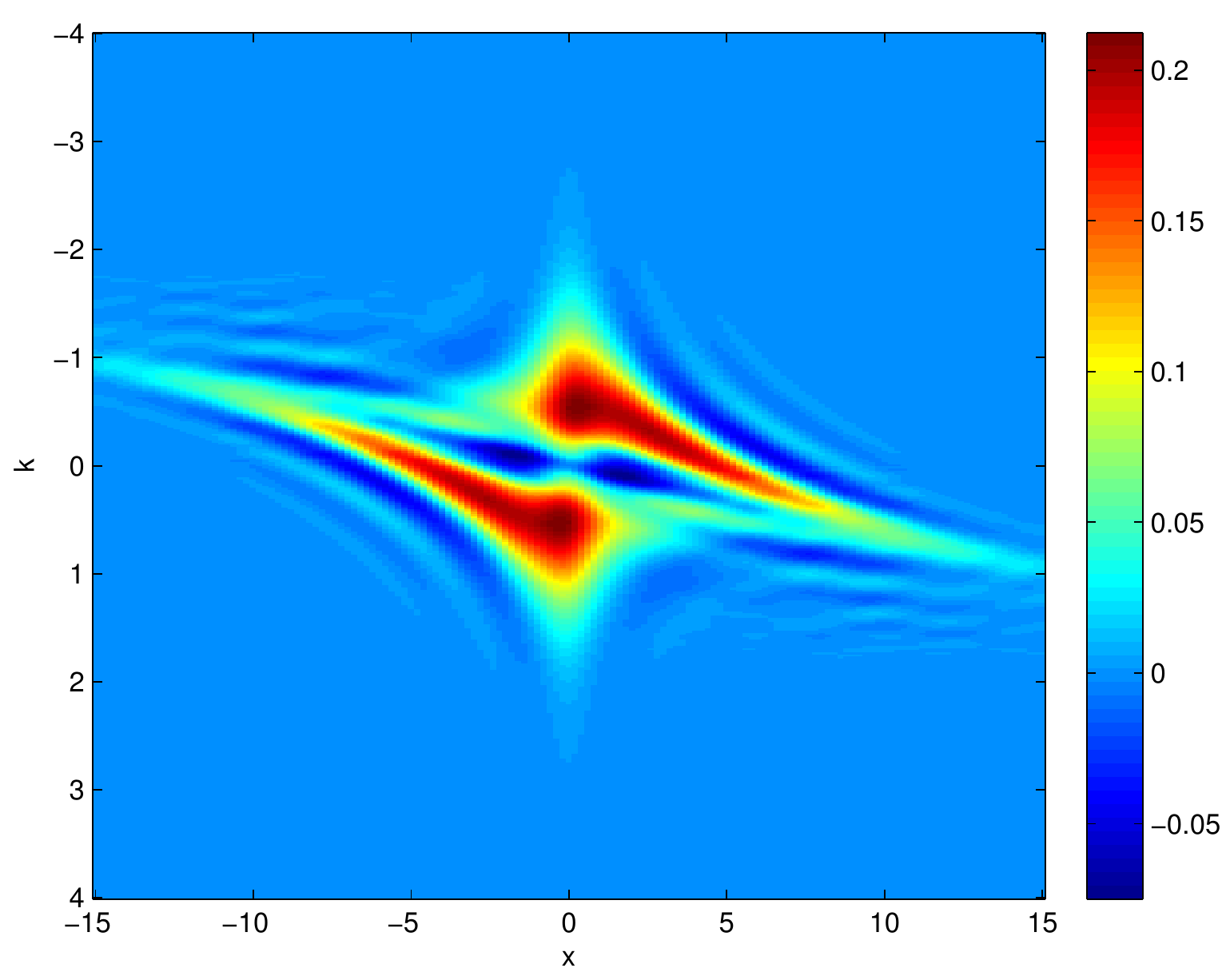}}
     \caption{\small The Helium-like system: The reduced Wigner functions at different instants.
}\label{fig_10}
\end{figure}

As the final example, we consider a Helium-like system composed of two electrons. Besides the repulsive Coulomb force \eqref{electron_potential} with $\epsilon_{\textup{ee}}=1$, they are both attracted by a Helium atom at a fixed position $\bm{x}=\bm{0}$.
To describe the nucleon-electron interaction,
we still adopt the attractive soft-Coulomb potential\cite{LeinKreibichGross2002}
\begin{equation}
V_{\textup{ne}}\left(x_{1}, x_{2}\right)=-\frac{Z}{\sqrt{\left|x_{1}\right|^{2}+\epsilon_{\textup{ne}}}}-\frac{Z}{\sqrt{\left|x_{2}\right|^{2}+\epsilon_{\textup{ne}}}},
\end{equation}
with the atomic number $Z=2$ (for the Helium atom) and the soft parameter $\epsilon_{\textup{ne}}=1$ to remove the singularity at $\bm{x}=0$.
To strike a balance between the accuracy and the efficiency,
we set: $L_x=15$, $L_k = {5\pi}/{3}$, $L_y=60$, $\Delta x =0.2$, $\Delta t=0.05$ and $T=15$.
The $\bm{k}$-domain is divided into $6\times 6$ elements and each element contains $20 \times 20$ Gauss-Chebyshev collocation points.
The same initial data shown in Fig.~\ref{fig:fermion} are adopted.

To demonstrate the dynamics of electrons more clearly,
we take snapshots of the reduced Wigner function $F\left(x, k, t\right)$ defined in Eq.~\eqref{reduced_Wigner} from $t=1$ to $t=15$, as shown in Fig.~\ref{fig_10}. At the early stage before $t=5$,
the reduced Wigner function is forced to be localized in the central area due to the nucleon-electron interaction.
Afterwards, both dispersion and correlation show up clearly and lead to a highly oscillating structure in the phase space,
in which each peak of positive value is followed by a valley of negative value. Furthermore, we find that the Wigner function rotates around the Helium atom periodically with the approximate period of $6$. This periodic behavior may reflect a kind of simple harmonic vibration of the one-dimensional electrons.
We could always observe there a concentration of the negative Wigner function in the central area, accounting for the electron-electron interaction, because the negative distribution is related to the regions that are experimentally forbidden by the uncertainty principle. We should point out that the Fermi hole structure does be always there though it shows two branches, for example, at $t=7,15$, in contrast to the numerical results by
the signed particle MCM\cite{SellierDimov2015}. This may cast doubts on the accuracy of the signed particle MCM, as its numerical resolution might be too poor to catch the quantum interference and coherence precisely.
Although a recent study showed the accuracy of the signed particle MCM for one-dimension one-body situation\cite{ShaoSellier2015},
a more comprehensive study to validate the accuracy of the many-body Wigner MCM is highly desired in this regard.

\section{Conclusion and outlook}
\label{sec:conclusion}

An efficient and accurate deterministic method is proposed in this work for a direct simulation of the many-body Wigner equation. It resolves the Lagrangian advection on the spatial space by an explicit multistep characteristic method, and the shifted grid points are interpolated through piecewise cubic splines. The nonlocal Wigner interaction term is tackled by a highly accurate Chebyshev spectral element method, and
the resulted advective-spectral-mixed method relaxes the usual CFL restriction on the time step and achieves the third-order convergence. Moreover, it is able to maintain the mass conservation and the physical symmetry relation for identical particle systems. Several typical numerical experiments for one-body and two-body quantum systems in one-dimensional spatial space show
the appearance of both Pauli exclusion principle and uncertainty principle in the phase space.
The proposed method can be straightforwardly employed in the high dimensional one-body problem, thereby making it possible to perform time-dependent Wigner simulations in the two or three dimensional semiconductor device. In principle, it can also resolve the nonlinear Wigner quantum models, such as the Wigner-Poisson system. We would like to discuss this topic as well as a more appropriate formulation of quantum boundary conditions in subsequent papers.

\section*{Acknowledgement}
This research was supported by grants from the National Natural Science Foundation of China (Nos.~11471025, 91330110, 11421101).

%\bibliography{journalname,wigner}
%\bibliography{/Users/sihong/Reference/journalname,/Users/sihong/Reference/wigner}

\end{document}